\documentclass[journal,comsoc]{IEEEtran}
\IEEEoverridecommandlockouts
\usepackage{cite}
\usepackage{amsmath,amssymb,amsfonts}
\usepackage{graphicx}
\usepackage{textcomp}
\usepackage{xcolor}
\usepackage{cleveref}
\def\BibTeX{{\rm B\kern-.05em{\sc i\kern-.025em b}\kern-.08em T\kern-.1667em\lower.7ex\hbox{E}\kern-.125emX}}
\usepackage{multirow}
\usepackage{booktabs}
\usepackage{amsfonts}
\usepackage{amsmath}
\usepackage{amsmath,cases}
\usepackage{amssymb}
\usepackage{blindtext, graphicx}
\usepackage{cite}
\usepackage{epsfig}
\usepackage{url}
\usepackage{algorithm}
\usepackage{algorithmicx, algpseudocode}
\usepackage{epstopdf}
\usepackage{color}
\usepackage[tight]{subfigure}
\usepackage{mathrsfs}
\usepackage[justification=centering]{caption}
\usepackage{float}
\usepackage{cuted}
\usepackage{lipsum}
\usepackage{diagbox}
\usepackage{graphicx,tabularx}
\usepackage{lipsum}
\usepackage{enumitem}
\usepackage{academicons}
\usepackage{xcolor}

\newcommand{\orcid}[1]{\href{https://orcid.org/#1}{\textcolor[HTML]{A6CE39}{\aiOrcid}}}

\newtheorem{theorem}{\pmb{Theorem}}

\newcommand{\pmw}{\pmb{w}}

\newcommand{\pmW}{\pmb{W}}
\newcommand{\sigk}{\sigma_{k}^2}
\newcommand{\sige}{\sigma_{e}^2}
\newcommand{\pmt}{\pmb{\theta}}

\newcommand{\btheta}{\pmb{\theta}}
\newcommand{\thetak}{\pmb{\theta}^{(\iota)}}
\newcommand{\thetako}{\pmb{\theta}^{(\iota+1)}}

\newtheorem{mylem}{\textbf{Lemma}}

\newcommand{\la}{\langle}
\newcommand{\ra}{\rangle}

\newcommand{\Prf}{{\it Proof: \ }}

\newcommand{\clC}{{\cal C}}

\newcommand{\clR}{{\pmb{\cal R}}}

\newcommand{\wko}{\pmb{w}^{(\iota+1)}}

\newcommand{\wk}{\pmb{w}^{(\iota)}}

\newcommand{\bbC}{\mathbb{C}}

\newcommand{\upsilonk}{\upsilon^{(\iota)}}

\allowdisplaybreaks
\begin{document}
	\title{Secrecy Sum-Rate Maximization in Finite Blocklength IRS-aided Systems With Perfect and Imperfect CSI}
	\author{Monir Abughalwa,~\IEEEmembership{Student Member,~IEEE}, Nguyen Van Huynh,~\IEEEmembership{Member,~IEEE}, Diep N. Nguyen,~\IEEEmembership{Senior Member,~IEEE}, Dinh Thai Hoang,~\IEEEmembership{Senior Member,~IEEE}, and Eryk Dutkiewicz,~\IEEEmembership{Senior Member,~IEEE}
		\thanks{Monir Abughalwa, Diep N. Nguyen, Dinh Thai Hoang, and Eryk Dutkiewicz are with School of Electrical and Data Engineering, University of Technology Sydney, Sydney, Australia (e-mail: monir.abughalwa@student.uts.edu.au; diep.nguyen@uts.edu.au; hoang.dinh@uts.edu.au; eryk.dutkiewicz@uts.edu.au).}
		
		\thanks{Nguyen Van Huynh is with the Department of Electrical Engineering and Electronics, University of Liverpool, Liverpool L69 3GJ, United Kingdom (e-mail: huynh.nguyen@liverpool.ac.uk)}
		%\thanks{Ming Zeng is with Department of Electrical Engineering and Computer Engineering, Universite Laval, Quebec, QC G1V 0A6, Canada (e-mail: ming.zeng@gel.ulaval.ca).}
		%\thanks{Quoc-Viet Pham is with Department of School of Computer Science and Statistics, Trinity College Dublin, Dublin 2, D02 PN40, Ireland (e-mail: viet.pham@tcd.ie).}
		%	\thanks{Van-Dinh Nguyen is with College of Engineering and Computer Science, VinUniversity, Vinhomes Ocean Park, Hanoi, Vietnam (e-mail: dinh.nv2@vinuni.edu.vn).}
	}

	\maketitle
\begin{abstract}
The rapid growth of the Internet of Things (IoT) requires efficient and secure communication technologies to enable ultra-reliable low-latency (URLLC) communication applications. Intelligent reflecting surfaces (IRS) have emerged as a promising solution to enhance IoT network secrecy performance by improving signal quality for legitimate devices (Bob), while thwarting the eavesdropper (Eve) interception. However, ensuring secrecy across multiple users, given their diverse channels and locations, is challenging, especially under finite blocklength (FBR) constraints, which are common in IoT networks. This paper investigates the secrecy performance of IRS-aided URLLC systems under three channel state information (CSI) scenarios: perfect CSI, imperfect CSI, and unknown eavesdropper's CSI. We first formulate a non-convex optimization problem to maximize the system's sum secrecy rate (SSR) by jointly optimizing the transmitter's beamforming and the IRS's passive reflective elements, while maintaining FBR-related latency and transmission duration constraints. For imperfect CSI, the semi-infinite uncertainty constraints are transformed into finite linear matrix inequalities (LMIs) via a successive convex approximation (SCA)-based approach, and the proposed algorithm is proven to converge to a locally optimal solution with low computational complexity. Furthermore, when Eve's CSI is unavailable, we formulate a power minimization problem that ensures each user's quality-of-service (QoS) requirement, while the residual transmit power is allocated as artificial noise (AN) to degrade Eve's signal-to-interference-plus-noise ratio (SINR). This problem is efficiently solved using SCA approach and penalty convex-concave procedure (PCCP) framework, ensuring secrecy even under stringent FBR conditions. 
Simulations validate the effectiveness of the proposed algorithms under all CSI scenarios and highlight the impact of reflected-channel uncertainty.
\end{abstract}

	\begin{IEEEkeywords}
		Finite blocklength regime, ultra-reliable and low latency communication,  intelligent reflecting surfaces (IRS), and sum secrecy rate (SSR), Artificial Noise (AN).
	\end{IEEEkeywords}

	\section{Introduction} 
	Intelligent reflecting surface (IRS) has emerged as a key enabler for IoT networks, garnering significant interest due to its ability to enhance/tailor the passive radio environment. A typical IRS consists of passive reflective elements (PREs) equipped with phase shift controllers, allowing it to intelligently manipulate reflected signals \cite{nguyen2022leveraging}. 
    By optimizing signal reflections, IRSs can enhance signal reception by creating favorable multi-path components at the receivers (Rx/Bob). Thanks to their cost-effective and flexible design, IRSs can be conveniently deployed in various urban scenarios, such as on lampposts, building facades, or moderate-height structures to overcome blockage and extend coverage \cite{9326394}. Although IRSs mounted on high-rise buildings are also feasible, their deployment is particularly advantageous for serving IoT devices located at comparable altitudes (e.g., sensors or access points on nearby buildings). These versatile deployment options highlight the practical potential of IRSs for improving link reliability in dense urban environments where line-of-sight (LoS) channels between the transmitter (Tx/Alice) and the Rx are frequently obstructed~\cite{abughalwa2022finite}.
    Particularly, for the Internet of Things (IoT) devices that have limited computing capability and battery/energy, the IRS has gained paramount attention, aiming to enhance both spectral and energy efficiency \cite{9896755}. For example, the authors in \cite{9270605} studied the IRS deployment impact on IoT systems with mobile edge computing. Numerical results revealed that the IRS can significantly enhance the system's computational performance. In \cite{10700955}, a satellite-terrestrial integrated network with an IRS was investigated to enhance physical-layer security against eavesdroppers. The work formulated an SR maximization problem subject to IRS energy-harvesting and total transmit-power constraints.
    However, incorporating IRS into the finite blocklength (short packet) regime/communication (FBR) that is used in ultra-reliable low-latency (URLLC) applications is more challenging than the traditional long blocklength regime/communication (LBR). This is because of the FBR's strict latency and transmission-duration constraints.
	\subsection{Related Works and Motivations}
	Another potential application of IRSs is to enhance the security/privacy of users by purposely manipulating reflected signals from the transmitter to facilitate the signal reception at legitimate users while maximizing the multi-user interference/degrading the signals at potential eavesdroppers. For example, a multi-input-single-output IRS-aided system with the presence of multiple eavesdroppers was considered in \cite{hong2020robust}, where the transmitter introduced artificial noise as a countermeasure to enhance user security. The authors in \cite{9726800} investigated IRS-aided hybrid satellite-terrestrial relay networks to address blocked user links and proposed an alternating optimization scheme for joint beamforming and IRS phase-shift design, aiming to minimize transmit power under rate constraints. In \cite{10980348}, the authors surveyed space-air-ground integrated networks (SAGIN), outlining key security challenges and introducing wireless endogenous security based on physical-layer attributes. Kang et al. in \cite{10520169} similarly reviewed SAGIN architectures, highlighted security vulnerabilities in heterogeneous environments, and discussed hybrid frequency-power-spatial domain strategies for enhancing secure and efficient network access.
    In \cite{hao2022securing}, the authors studied a cell-free IRS-aided network with the aim of enhancing the users' secrecy by maximizing the weighted sum of the users' secrecy rate (SR) by jointly optimizing the beamforming vector and the IRS's PREs. The alternating optimization (AO) algorithm was used to decouple the beamforming factor and the IRS's PREs. Then, the resulting problem was tackled using semi-definite relaxation (SDR) and continuous convex approximation.  When the channel state information (CSI) from the IRS to the users/receivers is unknown or imperfect, the authors in \cite{10256584} studied an IRS-aided multi-user system, where they formulated a sum secrecy rate (SSR) maximization problem with the eavesdropper's channel partially known to the Tx/Alice. The IRS was deployed to enhance the uplink transmission and the downlink energy transfer. Then, the authors optimized the SSR by jointly optimizing the transmit beamform vectors, the downlink/uplink time allocation, and the energy transmit covariance matrix.
    
	In parallel, URLLC relying on FBR, have been envisioned as key technologies to serve critical missions of IoT. %Since FBR often requires a stricter design approach than the LBR systems, maintaining high-reliability communication is more challenging due to the lower channel coding gain~\cite{durisi2016toward}. Moreover, 
    In URLLC applications such as intelligent transportation, leakage information could expose the user's location or identity, due to the inherently private nature of the IoT transmitted information \cite{feng2021reliable}. As a potential solution to secure URLLC, the use of IRS has recently attracted paramount interest. This stems from the fact that IRS links exhibit low latency, which is beneficial for the FBR strict requirements \cite{10904325}. For example, Zhao et al. studied the information freshness in a single-user FBR-IRS-aided system with the presence of an eavesdropper in \cite{10904325}. The authors first derived a closed-form expression for the upper bound secrecy outage probability under statistical CSI. Then, the single user's secrecy outage is minimized by jointly optimizing the blocklength and the IRS PREs. Differently, the authors in \cite{9833996} investigated the problem of transmission power minimization in an IRS-aided system under the FBR regime with the presence of multiple eavesdroppers, by jointly optimizing the beamforming vector, the AN, and the IRS PREs. In \cite{10501603}, the authors studied the SSR maximization problem for an FBR-IRS-aided system by jointly optimizing the beamforming vector, the IRS PREs, and the blocklength. The optimization problem was tackled using Riemannian conjugate gradient based algorithm. However, the assumption that Tx/Alice can obtain Eve's CSI, as stated before, is not practical. In addition, most of the existing studies focused only on single-user systems, and thus their applications are limited.
    
    When only partial or erroneous CSI (from the IRS to the users and the eavesdropper) is available, the problem is even more challenging due to the semi-infinite constraints arising from CSI uncertainty, which are embedded within the dispersion term of the FBR-SR expression, which contains both the inverse Q-function and the square-root of the channel dispersion. This structure makes the decoding error probability highly sensitive to channel estimation errors \cite{7529226}, while the square-root term further intensifies the nonlinearity. To address these challenges, we first introduce slack variables to decouple the transmit beamforming vectors and the IRS PREs within the objective function. Then, we employ the successive convex approximation (SCA) framework \cite{9197675} together with the $\mathcal{S}$-procedure \cite{boyd1994linear} to transform the semi-infinite constraints into linear matrix inequalities (LMIs). Although this transformation yields LMIs, they remain nonconvex. To overcome this, we apply a first-order Taylor expansion to linearize the nonconvex parts, resulting in a convex approximation that can be efficiently solved at each iteration. Then, we propose a penalty convex-concave procedure (PCCP) \cite{9525400} to tackle the unit modulus constraint (UMC) of the IRS.
    When Alice can't obtain Eve's CSI, we propose a power minimization problem to ensure the users' data rate meets a minimum quality of service (QoS) constraint, while maintain the FBR constraints, namely, the transmission duration and latency. Once the transmission  power is minimized, the remaining residual power is allocated as AN to degrade Eve's signal-to-interference-plus-noise ratio (SINR). The proposed problem is tackled by jointly optimizing the beamforming vectors and the IRS's PREs. We use SCA and $S$-procedure to optimize the beamforming vector. Then, we use a PCCP algorithm to optimize the IRS's PREs, while ensuring the FBR's constraints and the imperfect IRS to the users' CSI. Unlike traditional LBR power minimization problems and existing research \cite{9146177,9764813}, our method explicitly incorporates the FBR constraints and addresses challenges arising from imperfect channel state information.
    Extensive simulations with practical settings show that maximizing the SSR of all the users can achieve a better chance of ensuring system secure communications (subject to the location of the eavesdropper) under the FBR constraints, even under imperfect CSI conditions.
    
    \subsection{Contributions}
	The main contributions are summarized as follows: 
	\begin{itemize}
		\item We tackle the secrecy maximization problem of the URLLC users in the FBR-IRS-aided system, in which the IRS enhances the users' SINR and thwarts Eve's SINR to ensure the system's SR. To this end, we maximize the SSR by jointly optimizing the beamforming vectors and the IRS's PREs, while satisfying the FBR latency and transmission duration constraints. 
        \item When CSI is available, we solve the above non-convex problem by linearizing its objective function with tractable approximation functions, leading to a computationally efficient algorithm. To tackle UMC, unlike traditional methods, i.e., conventional SDR, we directly optimize the PREs arguments to provide a low-complexity solution that can scale with large IRSs. The resulting solution is proved to converge to a locally optimal solution of the original non-convex problem.
		\item In the case of imperfect CSI, we transform the semi-infinite constraints imposed by the imperfect CSI constraint into finite LMIs by leveraging the SCA approach for the transformation. We then prove that our proposed algorithm converges to a locally optimal solution with low computational complexity thanks to our closed-form linearization approach. This makes the solution scalable for large IRS deployments.
        \item In scenarios where Eve's CSI is not accessible, we tackle a power minimization problem within the FBR framework, ensuring that every user's data rate meets a specified QoS threshold. The remaining transmit power is deliberately used as AN to decrease the eavesdropper's SINR. The UMC is handled using the PCCP framework while maintaining the QoS criteria. The proposed algorithm is demonstrated to converge to a locally optimal solution, effectively ensuring system-level secrecy under strict FBR conditions.
		\item Finally, we perform extensive simulations with practical settings. The simulation results show that our approach can ensure secure communication for all users while satisfying FBR constraints even under imperfect CSI. In addition, the impact of the imperfect CSI of the reflected channel on the system is also evaluated.
	\end{itemize}
	The remainder of this paper is organized as follows. The system model is discussed in Section \ref{sec2}. Then, Sections \ref{sec3} and \ref{sec4} present the problem and corresponding solutions under perfect and imperfect CSI, respectively. Section~\ref{sec5} further extends the formulation to the scenario where the eavesdropper's CSI is unavailable. The extensive simulations and discussion are provided in Section \ref{sec6}. Finally, Section \ref{conc} concludes the paper.
	
	\emph{Notations:} This paper uses the following notations. The bold letters denote vectors and matrices. $\pmb{I}_M$ denotes the identity matrix of dimension $M$. Diag($n_1,\dots,n_n$) denotes the diagonal matrix with diagonal entries of $\{n_1,\dots,n_n\}$. The symbols $\Re$ and $\mathbb{C}$ represent the real and the complex field, respectively. $\clC(0,\bar{z})$ denotes the circular Gaussian random variable with zero mean and variance $\bar{z}$. For matrices $\mathbf{C}$ and $\mathbf{D}$, $\la \mathbf{C},\mathbf{D} \ra \triangleq \mbox{trace}\left(\mathbf{C}^H \mathbf{D}\right)$. For matrix $\mathbf{A}$, $\Re\left\{\mathbf{A}\right\}$ denotes the real part, $\la \mathbf{A}\ra \triangleq\mbox{trace}(\mathbf{A})$, the symbol $\|\mathbf{A}\|_1$ denotes the 1-norm, $\|\mathbf{A}\|$ denotes the Frobenius norm, $\mathbf{A}^*$ denotes the conjugate, $\mathbf{A}^H$ denotes the Hermitian (conjugate transpose), $\lambda_{\text{max}}\left(\mathbf{A}\right)$ denotes the maximal eigenvalue, $\angle(\mathbf{A})$ denotes its argument, and $\mathbf{A} \succeq 0$ means positive semi-definite.  
	Lastly, the notations used in this paper are shown in Table~\ref{tab:notations}.
\begin{table}[!t]
\caption{List of Notations}
\label{tab:notations}
\centering
\begin{tabular}{>{\bfseries}p{0.18\linewidth} p{0.72\linewidth}}
\toprule
Symbol & Description \\
\midrule
$\mathcal{K}$ & Set of legitimate users \\
$k, e$ & User index and Eve index \\
$M$ & Number of antennas at Alice (base station) \\
$N$ & Number of IRS reflecting elements (PREs) \\
$\pmb{G}_{\text{AR}}$ & Channel from Alice to the IRS \\
$\pmb{u}_{i}$ & Channel from IRS to user/Eve, $i \in \{k,e\}$ \\
$\pmb{h}_{i}(\btheta)$ & Cascaded channel gain from Alice to $i$ \\
$\widehat{\pmb{h}}_{i}(\btheta)$ & Imperfect cascaded channel gain \\
$\pmb{\Phi}$ & IRS reflection matrix \\%defined as $\text{Diag}(e^{j\btheta})$ \\
$\btheta$ & IRS phase shift vector \\
$\widehat{\pmb{u}}_{i}$ & Estimated channel vector \\
$\Delta\pmb{u}_{i}$ & Channel estimation error \\
$\Omega_{i}$ & Radius of the CSI uncertainty region \\
$\omega_i$ & Set of all possible channel estimation errors \\
$\pmw_k$ & Beamforming vector for user $k$ \\
$\sigma_i^2$ & Noise power at receiver $i$ \\
$\widehat{\gamma}_i$ & SINR at user/Eve $i \in \{k,e\}$ \\
$\widehat{C}_{i}(\pmw,\pmt)$ & Achievable rate/eavesdropping rate \\
$\widehat{\mathcal{S}}_{k}^{\mathcal{F}}$ & Finite blocklength secrecy rate (FBR-SR) \\
$\widehat{\mathcal{S}}_{k}^{\mathcal{L}}$ & Long blocklength secrecy rate (LBR-SR) \\
$\mathcal{B}$ & System bandwidth \\
$T$ & Transmission duration \\
$N_t$ & Packet length \\%($N_t = \mathcal{B}T$) \\
$\varsigma_k$ & Decoding error probability of user $k$ \\
$\varsigma_e$ & Information leakage probability at Eve \\
$Q^{-1}(\cdot)$ & Inverse Gaussian Q-function \\
$\xi_i$ & Finite blocklength factor\\% $\frac{Q^{-1}(\varsigma_i)}{\ln(2)\sqrt{N_t}}$ \\
$V_i$ & Channel dispersion factor \\
$P_T$ & Transmit power at Alice \\
$P_{tot}$ & Total available transmit power \\
$P_J$ & Residual power allocated as artificial noise (AN) \\
$\mathbf{V}$ & Artificial noise projection matrix \\
$\mathbf{r}$ & Artificial noise vector \\
\bottomrule
\end{tabular}
\end{table}

	\section{{System Model}} \label{sec2}
    	\begin{figure}[!htb]
		\centering
		\includegraphics[width=0.4\textwidth]{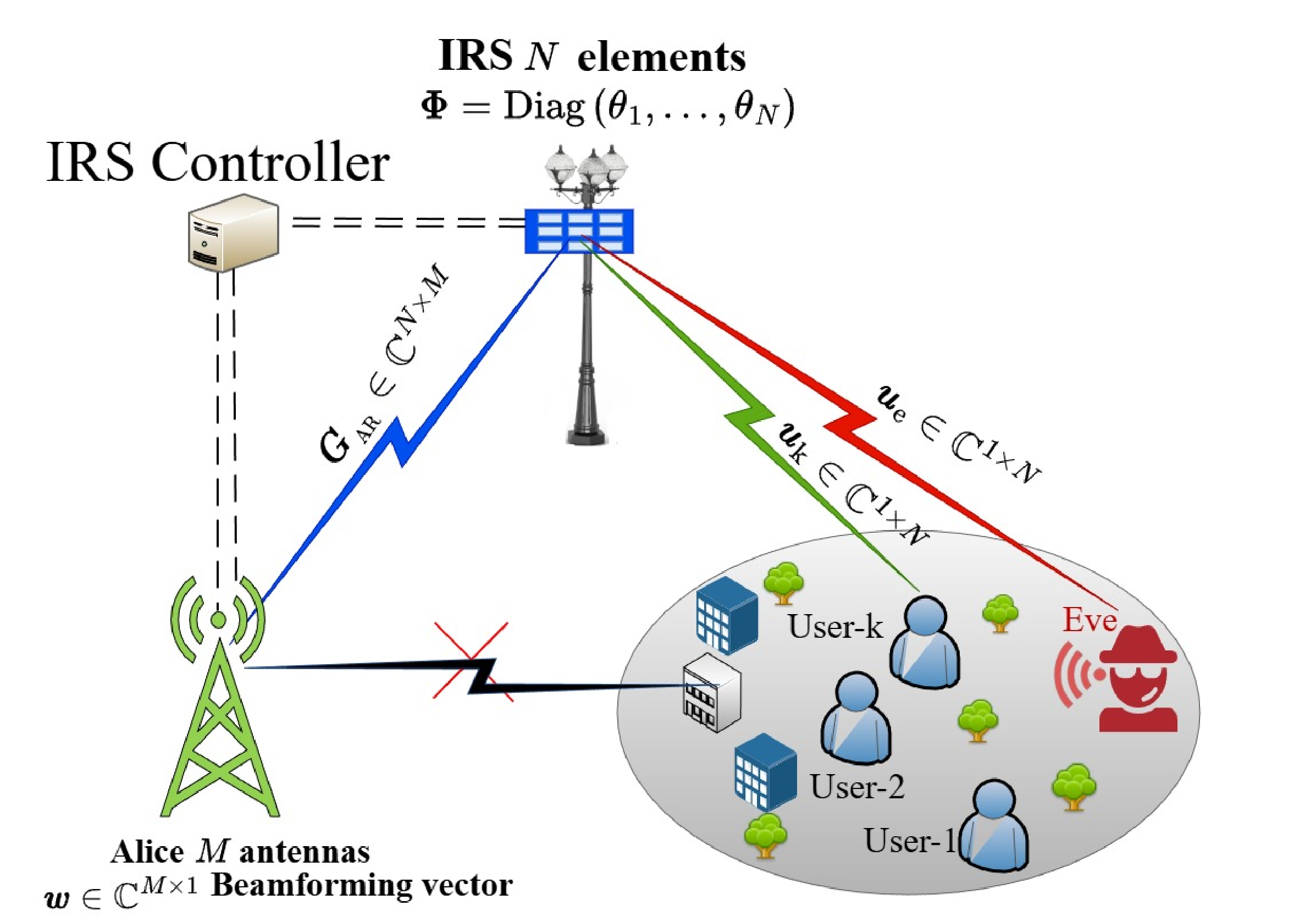}
		\caption{System model.}
		\label{s1}
	\end{figure}
	We consider an FBR-IRS-aided downlink system as depicted in Fig. \ref{s1}, in which an $M$-antenna BS (Alice) transmits confidential information to $\mathcal{K}$ single antenna IoT-users under the presence of a single antenna eavesdropper (Eve). Here, we assume that direct radio links between Alice and users are severely blocked{\footnote{The effectiveness of the IRS in enhancing users' rate/confidentiality is limited when a strong direct link between the transmitter and the receiver exists \cite{9963699}.}}. This scenario is usually the case in highly populated areas with high-rise buildings. An IRS with $N$ PREs is thus deployed (e.g., at the facade of a building) to support the transmission between Alice and the users. Let $k \triangleq \{1,2, \dots, \mathcal{K}\}$ denote the set of legitimate users, and $e$ denote Eve in the system. Let 
	$\pmb{G}_{_\text{AR}}= \sqrt{\beta_{_{\text{{AR}}}}} \tilde{{\pmb{G}}}_{_{\text{{AR}}}}\in \mathbb{C}^{N \times M}$ denote the channel from Alice to the IRS, where $\tilde{{\pmb{G}}}_{_{\text{{AR}}}}$ is modeled by Rician fading, and $\sqrt{\beta_{_{\text{{AR}}}}}$ is the large scale fading factor of the Alice-to-IRS link. The channel from the IRS to user-$k$ and to Eve is modeled by $\pmb{u}_i=  \sqrt{\beta_{\text{R}i}} \tilde{\pmb{u}}_i \clR^{1/2}_{\text{R}i}\in \mathbb{C}^{1\times N}$, where $i \in \{k,e\}$, $\beta_{\text{R}i}$ is the large-scale fading factor of the IRS-to-$i$ \cite{kammoun2020asymptotic}, $\tilde{\pmb{u}}_i$ is modeled by Rician fading, and $\clR_{\text{R}i} \in \bbC^{N\times N}$ is the IRS elements' spatial correlation matrix \cite{kammoun2020asymptotic}. For the confidential message intended for user-$k$, i.e., $s_k$, the signal received by user-$k$ and Eve corresponding to the intended user, can be respectively expressed by:
	\begin{align} \label{chkE}
		y_i\triangleq \pmb{h}_{i}(\btheta) {\sum}_{k=1}^{\mathcal{K}} \pmw_k s_k +n_i, i\in \{k,e\},
	\end{align}
	where  $\pmb{h}_{i}(\btheta) \in \mathbb{C}^{1 \times M}$ is the cascaded channel gain from Alice to $i \in \{k,e\}$, $\pmw_k \in \mathbb{C}^{M \times 1}$ is the beamforming vector applied for user-$k$,  $\btheta=(\theta_1,\dots, \theta_N)^T\in [0,2\pi)^N$ denotes the phase shift vector of the IRS's PREs, $n_i$ is the zero-mean additive white Gaussian noise (AWGN) with power density $\sigma_i^2$ for $i \in \{k,e\}$. The cascade channel gain $\pmb{h}_{i}(\btheta)$ from Alice to $i \in \{k,e\}$, can be written in terms of the channel gain from Alice to the IRS and the channel gain from the IRS to the user or Eve as follows:
	\begin{align} \label{h1} 
		\pmb{h}_{i}(\btheta)\triangleq {\pmb{u}}_i \pmb{\Phi} \pmb{G}_{_\text{AR}} \triangleq {\pmb{u}}_i {\sum}_{n=1}^{N} \exp(j \theta_n) \Xi_n \pmb{G}_{_\text{AR}},
	\end{align}
	where $\pmb{\Xi}_n$ is an $N \times N$ matrix with all zeros except for its $(n,n)$ entry which is 1, and $\pmb{\Phi} = \text{Diag}(e^{j \btheta})$.

	The CSI from Alice to the IRS can be estimated by calculating the angle of arrival and departure \cite{9110587}. However, the reflected channel's CSI from the IRS to the IoT users is much more challenging to obtain due to the passive nature of the IRS and the mobility/varying nature of the user's environments and locations \cite{8579566}. For Eve, it is also hardly possible to pinpoint its location or its accurate CSI from the IRS. To account for the CSI imperfection, we adopt the bounded CSI model in which the reflected channel from the IRS to the users and Eve can be expressed as \cite{9266086}:
	\begin{subequations} \label{ICSIcha}
		\begin{alignat}{2} 
			&\pmb{u}_{i}&& \triangleq \widehat{\pmb{u}}_{i} + \Delta\pmb{u}_{i}, \forall i \in \{k,e\}, \label{ch2-h1} \\            
			&\omega_i && \triangleq \{\|\Delta\pmb{u}_{i}\|_2 \leq \Omega_{i}\}, \forall i \in \{k,e\},\label{ch2-h2}	
		\end{alignat}
	\end{subequations}
	where $\widehat{\pmb{u}}_{i}$ denotes the (imperfect) estimated channel vector, $\Delta\pmb{u}_{i}$ represents the channel estimation error of the corresponding estimation, $\omega_i$ is a set for all possible channel estimation errors, and $\Omega_{i}$ is the radii of the uncertainty regions as known to Alice.  Hence, \eqref{h1} can be reformulated as:
	\begin{align} \label{chaCSIdef}
		\widehat{\pmb{h}}_{i}(\btheta)\triangleq \left(\widehat{{\pmb{u}}}_i+ \Delta\pmb{u}_{i}\right) \pmb{\Phi} \pmb{G}_{_\text{AR}}.
	\end{align}
	 In the sequel, we deal with perfect/imperfect CSI from the IRS to the users and Eve, as well as where Eve's CSI is unknown. 
    Using different well-established channel estimation methods, such as the anchor-assisted channel estimation approach \cite{9603291}, the users' CSI can be practically obtained by Alice. In this case, the reflected channel from the IRS to the users and Eve can be captured by setting $\Delta\pmb{u}_{i}$ to zero in \eqref{ch2-h1}, i.e., $\pmb{u}_{i}= \widehat{\pmb{u}}_{i}, i \in \{k,e\}$. In the second case, only partial/imperfect CSI $\widehat{\pmb{u}}_{i}$ is available. In this case, one can vary the magnitude of $\Delta\pmb{u}_{i}$, i.e., the radii $\Omega_{i}$ of the uncertainty region to capture different levels of CSI imperfection. It can be noticed that by setting $\Delta\pmb{u}_{i}$ to zero, \eqref{chaCSIdef} drops down to the perfect CSI case.
	
	The corresponding SINR of the received signal at the user-$k$ and Eve under perfect and imperfect IRS to users/Eve CSI can be written, respectively, as follows:
	\begin{align} \label{SINRex}
		\gamma_i(\pmw,\btheta)&\triangleq{|\pmb{h}_{i}(\btheta)\pmw_k|^2}/{\rho_{i}}, ~ i \in \{k,e\}, \\
		\widehat{\gamma}_i(\pmw,\pmt)&\triangleq{|\widehat{\pmb{h}}_{i}(\btheta)\pmw_k|^2}/{\widehat{\rho}_{i}}, ~ i \in \{k,e\},
	\end{align}
	where  $\rho_{i}\triangleq {\sum}_{j=1,j \neq k}^{\mathcal{K}}|\pmb{h}_{i}(\btheta)\pmw_j|^2+\textcolor{black}{\sigma_i^2}$, and $\widehat{\rho}_{i}\triangleq {\sum}_{j=1,j \neq k}^{\mathcal{K}}|\widehat{\pmb{h}}_{i}(\btheta)\pmw_j|^2+\textcolor{black}{\sigma_i^2}$.
	Under the presence of Eve, the closed-form expressions for the FBR-SR of the user-$k$ under perfect and imperfect IRS to users/Eve CSI are defined as, respectively, \cite[Eq.115]{8665906}:
    \begin{align}  
		\mathcal{S}_{k}^{\mathcal{F}}(\pmw,\pmt)&\triangleq \left[C_{k}-C_{e} -\xi_k \sqrt{V_{k}} - \xi_{e} \sqrt{V_{e}}\right]^+, \label{Cstot2} \\
		\widehat{\mathcal{S}}_{k}^{\mathcal{F}}(\pmw,\pmt)&\triangleq \left[\widehat{C}_{k}-\widehat{C}_{e}-\xi_k \sqrt{\widehat{V}_{k}} - \xi_{e} \sqrt{\widehat{V}_{e}}\right]^+, \label{Cstot2ICSI} 
	\end{align}
	where $[x]^+ \triangleq \text{max}[0,x]$, while the user's data rate, and the eavesdropping rate of Eve decoding $s_k$ under perfect and imperfect CSI, are respectively given by,
	\begin{align} 
		C_{i}(\pmw,\pmt)&\triangleq\ln\left(1+\gamma_i(\pmw,\btheta)\right), i \in \{k,e\},  \\
		\widehat{C}_{i}(\pmw,\pmt)&\triangleq\ln\left(1+\widehat{\gamma}_i(\pmw,\btheta)\right), i \in \{k,e\} \label{ckedef}, 
	\end{align} 
	where $\xi_i \triangleq {Q^{-1}(\varsigma_i)}/{\text{ln}(2)\sqrt{N_t}}$, $Q^{-1}(\cdot)$ is the inverse of the Gaussian Q-function, $N_t \triangleq \mathcal{B} T$ is the packet length, $\mathcal{B}$ is the bandwidth, $T$ is the transmission duration, $\varsigma_k$ is the decoding error probability\cite{feng2021reliable}, $\varsigma_e$ is the information leakage\cite{feng2021reliable}, and $V_{k}$ and $V_{e}$ are the dispersion factors defined as \cite{feng2021reliable}: 
	\begin{align} 
		V_{i}(\pmw,\pmt) &\triangleq \frac{2 \gamma_i(\pmw,\pmt)}{(1+\gamma_i(\pmw,\pmt))} \triangleq 2\left(1-({\rho_{i}}/{\upsilon_{i}})\right), i\in \{k,e\}, \label{dispfa1} \\
		\widehat{V}_{i}(\pmw,\pmt) &\triangleq \frac{2 \widehat{\gamma}_i(\pmw,\pmt)}{(1+\widehat{\gamma}_i(\pmw,\pmt))} \triangleq 2\left(1-({\widehat{\rho}_{i}}/{\widehat{\upsilon}_{i}})\right), i\in \{k,e\}, \label{dispfa2}
	\end{align} 
	where $\upsilon_{i}\triangleq{\sum}_{j=1}^{\mathcal{K}}|\pmb{h}_{i}(\pmt)\pmw_j|^2+\textcolor{black}{\sigma_i^2}$, and $\widehat{\upsilon}_{i}\triangleq {\sum}_{j=1}^{\mathcal{K}}|\widehat{\pmb{h}}_{i}(\pmt)\pmw_j|^2+\textcolor{black}{\sigma_i^2},~i\in \{k,e\}$.  
    One can notice that ~\eqref{Cstot2ICSI} defines the secrecy rate in the finite-blocklength regime while explicitly accounting for reflected-channel uncertainty, thereby consolidating short-packet reliability/leakage penalties and channel-estimation errors into a single secrecy functional.

	Unlike the traditional SR definition in the LBR, the FBR imposes more constraints on the SR. Specifically, the reliable transmission in FBR requires that the decoding error probability $\varsigma_k$ at user-$k$ is not larger than the maximum decoding error probability in FBR  $\varsigma_{\max}$, the information leakage constraint imposes that the information leakage $\varsigma_e$ does not exceed $\varsigma_{e}^{\max}$, the maximum information leakage in FBR \cite{feng2021reliable}. Moreover, the transmission duration $T$ doesn't exceed the maximum transmission duration $T_{\max}$. Ensuring secrecy with short packets of length $N_t$ is therefore more challenging due to the compounded effects of decoding errors and
    information leakage, which limit the achievable secrecy rate. These conditions distinguish the FBR case from its LBR counterpart. One may notice that when the transmission duration $T $ approaches infinity, the dispersion factor $V_i$ reaches zero, and the FBR definition \eqref{Cstot2ICSI} reduce to the traditional LBR-SR case, which can be expressed as \cite{bloch2008wireless}: 
	\begin{align} \label{Cstot}
		\widehat{\mathcal{S}}^{\mathcal{L}}_{k}(\pmw,\btheta)& \triangleq [\widehat{C}_k(\pmw,\pmt)-\widehat{C}_e(\pmw,\pmt)]^+.  
	\end{align}
    \begin{figure*}[!t] 
		\normalsize 
		\begin{align} 
			C_e(\pmw,\thetak) &\geq q_{1,k_{e}}^{(\iota)}+n_{1,k_{e}}^{(\iota)}+2\Re \sum\limits_{j=1, j \neq k}^{\mathcal{K}}\left\{\left\la \pmb{m}_{j,k_{e}}^{(\iota)},\pmw_j\right\ra\right\} -\frac{\rho_{e}^{(\iota)}}{1+\rho_{e}^{(\iota)}}\left(\sum\limits_{j=1,j \neq k}^{\mathcal{K}}|\pmb{h}_{e}(\thetak)\pmw_j|^2\right) - \frac{1}{\upsilon_{e}^{(\iota)}}\sum\limits_{j=1}^{\mathcal{K}}|\pmb{h}_{e}(\thetak)\pmw_j|^2. \label{CEtot1} 	\tag{19} %\\
			% {C_k}(\wko,\pmt) & \geq {q}_{1,k}^{(\iota+1)}+2\Re\left\{{\sum}_{n=1}^{N}{\pmb{m}}^{(\iota+1)}_{k,k}(n) e^{j\theta_n} \right\}+ \left(e^{j \pmt}\right)^H {\pmb{\varphi}}_{1,k}^{(\iota+1)} e^{j \pmt}, \label{LBBob2x} \tag{24} \\
			% C_e(\wko,\pmt)&\geq q_{1,k_{e}}^{(\iota+1)}+n_{1,k_{e}}^{(\iota+1)}+2\mathfrak{R}\left\{{\sum}_{n=1}^{N} {\pmb{m}}^{(\iota+1)}_{k_{e}}(n) e^{(j \theta_n)} \right\}+\left(e^{j \btheta}\right)^H {\pmb{\varphi}} _{1,k_{e}}^{(\iota+1)} e^{j \btheta} + \left(e^{j \btheta}\right)^H {\pmb{\varphi}} _{k_{e},j}^{(\iota+1)} e^{j \btheta}. \label{LBEve2x} \tag{25} 
		\end{align} 
		\hrulefill 
	\end{figure*}
    
    The system model characterization highlights several fundamental challenges that arise when extending the classical LBR secrecy formulation \eqref{Cstot} to the finite blocklength regime \eqref{Cstot2ICSI}. First, the Shannon capacity based secrecy model used in the LBR case becomes invalid in FBR, since short packet transmission introduces non-negligible decoding error $\varsigma_k$ and information leakage $\varsigma_e$ penalties. Second, secrecy, reliability, and latency become tightly coupled through the dispersion terms in \eqref{Cstot2ICSI}, making these performance metrics interdependent rather than separable as in the LBR counterpart. Third, the square-root dispersion factors in \eqref{Cstot2ICSI} create non-separable and concave components in the secrecy expression, eliminating the monotonicity and convexity properties relied upon by conventional LBR optimization approaches. Finally, under imperfect CSI, both the achievable rate and the dispersion depend on the magnitude of the CSI uncertainty, leading to semi-infinite constraints and a significantly smaller feasible region. These challenges fundamentally distinguish FBR-based secrecy optimization from the traditional LBR framework and motivate the algorithmic developments presented in the subsequent sections.
   
	\section{SSR Maximization under perfect CSI in FBR systems} \label{sec3}
	In this section, we first address the problem of maximizing the SSR under the perfect CSI assumption, which is achieved by setting $\Delta\pmb{u}_{i}$ to zero in \eqref{ch2-h1}. The optimization problem can be formally stated as: 
	\begin{subequations} \label{P1}
		\begin{alignat}{3} 
			&(\mathcal{P}1):&& ~\underset{\pmw,\btheta}{\max}  ~ R_{s}(\pmw,\btheta), \label{P1-1} \\            
			&~\text{s.t.} && ~{\sum}_{k = 1}^{\mathcal{K}}\Vert \pmw_k \Vert^2 \leq P_T,\label{P1-2} \\
			& && |e^{(j \btheta)}| = 1,  \label{P1-3} \\
			& && \varsigma_k\leq \varsigma_{\max}, ~\varsigma_e \leq \varsigma_{e}^{\max}, ~T \leq T_{\max},  \label{P1-4}
		\end{alignat}
	\end{subequations}
	where $R_{s}(\pmw,\btheta)={\sum}_{k=1}^{\mathcal{K}} \mathcal{S}^{\mathcal{F}}_{k}(\pmw,\btheta)$ is the SSR, \textcolor{black}{$P_T$} is Alice's power budget, $\eqref{P1-2}$ captures the sum of the transmitted power constraint, and $\eqref{P1-3}$ captures the UMC of the PREs' phase shift. 
    It should be noted that the FBR constraints in \eqref{P1-4} are inherently embedded within the objective function, as the SR expression $R_s(\pmw,\pmt)$ incorporates the square-root of the dispersion factors, as shown in~\eqref{Cstot2ICSI}. Consequently, the FBR-SSR optimization problem falls outside the convex domain, making it more difficult to solve than its LBR counterpart.
    
	The optimization problem ($\mathcal{P}1$) is non-convex since the objective function \eqref{P1-1} is not concave and the UMC (\ref{P1-3}) is non-convex. To tackle this non-convex problem, one can employ the AO technique \cite{nguyen2022leveraging}. Specifically, at iteration $(\iota)$, the feasible point $(\wk,\thetak)$ is generated from ($\mathcal{P}1$) by solving two sub-problems, the first one is to optimize $\pmw$ with a fixed $\btheta$:
	\begin{align}  \label{probw1}
		(\mathcal{P}1.1): ~ \underset{\pmw}{\max} ~ R_{s}(\wk,\btheta), ~ \text{s.t.} ~ (\ref{P1-2}),
	\end{align} 
	then, we optimize $\btheta$ with a fixed $\pmw$ by solving the following problem
	\begin{align} \label{probth}
		(\mathcal{P}1.2): ~ \underset{\btheta}{\max} ~ R_{s}(\pmw,\thetak), ~ \text{s.t.} ~ (\ref{P1-3}).
	\end{align}
	
	However, with AO, solving two subproblems $(\mathcal{P}1.1)$ and $(\mathcal{P}1.2)$ is computationally demanding, especially given the large number of PREs of the IRS.  Our proposed linearization method in the sequel uses mathematically tractable approximation functions, leading to a computationally efficient algorithm that can be used for a large number of IRSs' PREs.
	
	\subsubsection{Sub-Problem for Optimizing the Beamforming Vectors} \label{subsec1}
	We fix $\btheta$ given $\wk$ and solve the problem ($\mathcal{P}1$) to obtain $\wko$ satisfying $R_s(\wko,\thetak) \geq R_s(\wk,\thetak)$. 
	We start by linearizing the objective function in (\ref{P1-1}), which consists of four parts: the user-$k$'s SR $C_k(\pmw,\pmt)$, Eve's negative eavesdropping rate $C_e(\pmw,\pmt)$, the user-$k$ dispersion factor $V_k(\pmw,\pmt)$, and Eve's dispersion factor $V_e(\pmw,\pmt)$ as shown in \eqref{Cstot2ICSI}. 
	
	First, we adopt the idea in \cite{TTN16} to convert user-$k$'s SR into a linear form. Specifically, using the inequality (\ref{fund1}) in Appendix \ref{AppA}, let's define $\pmb{\Lambda}\triangleq \pmb{h}_{k}(\btheta^{(\iota)})\pmw_k$, $\pmb{\digamma}\triangleq \rho_k$, $\hat{\Lambda}\triangleq \pmb{h}_{k}(\btheta^{(\iota)})\pmw_k^{(\iota)}$ and
	$\hat{\digamma} \triangleq \rho_k$, hence, the users' data rate can be written as: 
	\begin{equation} \label{LBBob} 
		C_k(\pmw,\thetak)\geq q_{1,k}^{(\iota)}+2\Re\left\{\left\la \pmb{m}_{k,k}^{(\iota)},\pmw_k\right\ra\right\}-n_{1,k}^{(\iota)}\upsilon_{_{k}}^{(\iota)},
	\end{equation}
	where
	\begin{alignat*}{2}
		&q_{1,k}^{(\iota)} &&\triangleq C_k(\wk,\thetak)-\gamma_k(\wk,\thetak)-\sigk n_{1,k}^{(\iota)}, \\
		&\pmb{m}^{(\iota)}_{k,k} &&\triangleq {\pmb{h}_{_{k}}^H (\thetak)\pmb{h}_{_{k}}(\thetak)\wk_{k}}/{\rho_{k}^{(\iota)}}, \\
		&n_{1,k}^{(\iota)} &&\triangleq{1}/{\rho_{k}^{(\iota)}}-{1}/{\upsilon_{_{k}}^{(\iota)}}.
	\end{alignat*}
	
	Second, we linearize Eve's eavesdropping rate, which can be written as: 
	\begin{align}\label{EveSRt1}
		-\ln(1+\gamma_e(\pmw,\thetak))&\triangleq \overset{a_1}{\overbrace{\ln(1+\rho_{e}^{(\iota)})}}- \overset{a_2}{\overbrace{\ln(1+\upsilon_e^{(\iota)})}}.  
	\end{align}
	Here, we adopt the idea in \cite{niu2022joint}, where the term ($a_1$) in (\ref{EveSRt1}) can be linearized by defining $\pmb{z}\triangleq \rho_{e}$ and substituting it in  the inequality (\ref{fund2}) in Appendix \ref{AppA}. The term ($a_2$) in (\ref{EveSRt1}) can be linearized by defining $\pmb{\Upsilon}\triangleq \upsilon_e$ and substituting it in the inequality (\ref{fund3}) in Appendix \ref{AppA}. Hence, Eve's eavesdropping rate can be expressed as in \eqref{CEtot1}, where 
	\begin{align*}
		q_{1,k_{e}}^{(\iota)} &\triangleq \ln\left(\rho_{e}^{(\iota)}\right)-\rho_{e}^{(\iota)}-\ln\left(\upsilon_{e}^{(\iota)}\right)+1, \\
		n_{1,k_{e}}^{(\iota)} &\triangleq\left(-\frac{\rho_{e}^{(\iota)}}{1+\rho_{e}^{(\iota)}}-\frac{1}{\upsilon_{e}^{(\iota)}}\right)\sige, \\
		\pmb{m}_{k_{e},j}^{(\iota)} &\triangleq \pmb{h}_{e}^H(\thetak)\pmb{h}_{e}(\thetak) \pmw_j^{(\iota)}.
	\end{align*}
	% Next, we linearize the FBR dispersion factors $V_{i}(\pmw,\pmt), i \in \{k,e\}$. Using \eqref{dispfa2}, the inequalities \eqref{xy} and \eqref{xt} in Appendix \ref{AppA}, and by defining $x\triangleq V_i(\wk,\pmt)$, $\mathbf{A} \triangleq \rho_{i}$, and $B \triangleq \varsigma_i$, the dispersion factors can be expressed as:  \setcounter{equation}{19}
    Next, we linearize the FBR dispersion factors $V_{i}(\pmw,\pmt), i \in \{k,e\}$. Using \eqref{dispfa1}, the dispersion factors $V_i(\pmw,\pmt)$ can be expressed as functions of the variables $(\rho_i,\upsilon_i)$. Then, by applying the inequalities \eqref{xy} and \eqref{xt} in Appendix \ref{AppA}, the dispersion factors can be expressed as: \setcounter{equation}{19}
	\begin{align}  \label{urllcbob}
		\xi_{i} \sqrt{V_{i}(\pmw,\thetak)} \leq   q_{2,i}^{(\iota)}-2 \sum\limits_{j=1, j \neq k}^{\mathcal{K}}\Re\left\{\left\la \pmb{d}_{j,i}^{(\iota)},\pmw_j\right\ra\right\} -n_{2,i}^{(\iota)}\upsilonk, 
	\end{align}
     \begin{figure*}[!t] 
		\normalsize 
		\begin{align} 
			% C_e(\pmw,\thetak) &\geq q_{1,k_{e}}^{(\iota)}+n_{1,k_{e}}^{(\iota)}+2\Re \sum\limits_{j=1, j \neq k}^{\mathcal{K}}\left\{\left\la \pmb{m}_{j,k_{e}}^{(\iota)},\pmw_j\right\ra\right\} -\frac{\rho_{e}^{(\iota)}}{1+\rho_{e}^{(\iota)}}\left(\sum\limits_{j=1,j \neq k}^{\mathcal{K}}|\pmb{h}_{e}(\thetak)\pmw_j|^2\right) - \frac{1}{\upsilon_{e}^{(\iota)}}\sum\limits_{j=1}^{\mathcal{K}}|\pmb{h}_{e}(\thetak)\pmw_j|^2, \label{CEtot1} 	\tag{15} \\
			{C_k}(\wko,\pmt) & \geq {q}_{1,k}^{(\iota+1)}+2\Re\left\{{\sum}_{n=1}^{N}{\pmb{m}}^{(\iota+1)}_{k,k}(n) e^{j\theta_n} \right\}+ \left(e^{j \pmt}\right)^H {\pmb{\varphi}}_{1,k}^{(\iota+1)} e^{j \pmt}, \label{LBBob2x} \tag{23} \\
			C_e(\wko,\pmt)&\textcolor{black}{\leq} q_{1,k_{e}}^{(\iota+1)}+n_{1,k_{e}}^{(\iota+1)}+2\mathfrak{R}\left\{{\sum}_{n=1}^{N} {\pmb{m}}^{(\iota+1)}_{k_{e}}(n) e^{(j \theta_n)} \right\}+\left(e^{j \btheta}\right)^H {\pmb{\varphi}} _{1,k_{e}}^{(\iota+1)} e^{j \btheta} + \left(e^{j \btheta}\right)^H {\pmb{\varphi}} _{k_{e},j}^{(\iota+1)} e^{j \btheta}. \label{LBEve2x} \tag{24} \\
            \xi_{i} \sqrt{V_{i}(\wko,\pmt)} &\leq   q_{2,i}^{(\iota+1)}-2\Re\left\{ {\sum}_{n=1}^N \pmb{d}_{i}^{(\iota+1)} e^{(j \theta_n)}\right\} +\left(e^{(j \pmt)}\right)^H \pmb{\varphi}_{2,i}^{(\iota+1)} e^{(j \pmt)}, i \in \{k,e\}, 
			\label{urllcbob2} \tag{25}\\
			q_{2,i}^{(\iota+1)}&\triangleq\frac{\xi_i{\left(\left(\upsilon^{(\iota+1)}_i\right)^2 \sqrt{V_i(\wko,\thetak)}+2\left(\upsilon^{(\iota+1)}_i\right)^2+2\sigma_{i}\rho_{i}^{(\iota+1)}-4\textcolor{black}{\sigma_i^2} \left(\upsilon^{(\iota+1)}_i\right)\right)}}{{\left(2\left(\upsilon^{(\iota+1)}_i\right)^2 \sqrt{V_i(\wko,\thetak)}\right)}}, \label{q2k} \tag{26}\\
			\pmb{d}_{i,j}^{(\iota+1)}(n) &\triangleq  {\xi_i}\bigg/{\left(\upsilon^{(\iota+1)}_i \sqrt{V_i(\wko,\thetak)}\right)}  \left(\wko_j \right)^H \pmb{h}_{i}^H(\thetak) \pmb{u}_i \Xi_n {\pmb{G}}_{_{\text{{AR}}}} \wko_j ,n \in N, j\in K, j \neq k,\label{mkj1} \tag{27}\\
			q_{k}^{(\iota+1)} &\triangleq {q5}_{1,k}^{(\iota+1)}+{q}_{1,k_{e}}^{(\iota+1)}+ q_{2,k}^{(\iota+1)}+ q_{2,k_{e}}^{(\iota+1)}+n_{1,ke}^{(\iota+1)}- \left(e^{j \thetak}\right)^H \pmb{\varphi} _{k}^{(\iota+1)}e^{j\btheta}-2\lambda_{\text{max}}\left(\pmb{\varphi}_{k}^{(\iota+1)}\right) N, \tag{29} \label{qk2}\\
			{\pmb{m}}^{(\iota+1)}_k(n) &\triangleq {\pmb{m}}^{(\iota+1)}_{k,k}(n)+{\pmb{m}}^{(\iota+1)}_{k_{e}}(n)+{\pmb{d}}^{(\iota+1)}_{k}(n)+{\pmb{d}}^{(\iota+1)}_{k_{e}}(n)+ {\sum}_{m=1}^{N}e^{-j \theta_m^{(\iota)}} \pmb{\varphi}_{k}^{(\iota+1)}(m,n)+\lambda_{\text{max}}\left(\pmb{\varphi}_{k}^{(\iota+1)}\right).\tag{30} \label{mk2}
		\end{align} 
		\hrulefill 
	\end{figure*}
	where
	\begin{align*}
		q_{2,i}^{(\iota)}&\triangleq \xi_{i}\left(\frac{\sqrt{V_i(\wk,\thetak)}}{2}+\frac{\left(\upsilon^{(\iota)}_i\right)^2+\rho^{(\iota)}_i \sigma_{i}-2 \upsilon^{(\iota)}_i \sigma_{i}}{\left(\upsilon^{(\iota)}_i\right)^2 \sqrt{V_i(\wk,\thetak)}}\right), \nonumber \\
		\pmb{d}_{j,i}^{(\iota)} &\triangleq  \frac{\xi_i [\pmb{h}_{i}(\thetak)]^2\wk_j}{\upsilon^{(\iota)}_i \sqrt{V_i(\wk,\thetak)}}, i\in \{k,e\}, \nonumber \\
		n_{2,i}^{(\iota)} & \triangleq \frac{\xi_{i} \rho^{(\iota)}_i}{\left(\upsilon^{(\iota)}_i\right)^2 \sqrt{V_i(\wk,\thetak)}}, i\in \{k,e\}. \nonumber
	\end{align*}
	
	By substitution \eqref{LBBob}, \eqref{CEtot1}, and \eqref{urllcbob} into \eqref{Cstot2ICSI},  the approximated surrogate FBR-SR function can be written as:  
	\begin{align} \label{SRFBRtot}
		\widetilde{\mathcal{S}}_{k}^{\mathcal{F}}(\pmw,\thetak)&\geq {q}_{k}^{(\iota)}+2\Re\left\{\left\la {\pmb{m}}_{k}^{(\iota)},\pmw_{k}\right\ra\right\}-(\pmw_{k})^H {\pmb{\psi}}^{(\iota)}_{k} \pmw_{k},
	\end{align}
	where \begin{align*}
		{q}_{k}^{(\iota)} &\triangleq q_{1,k}^{(\iota)}+q_{1,k_{e}}^{(\iota)}+n_{1,k_{e}}^{(\iota)}+q_{2,k}^{(\iota)}+q_{2,k_{e}}^{(\iota)}, \\
		{\pmb{m}}_{k}^{(\iota)} &\triangleq {\sum}_{j=1}^{\mathcal{K}}\pmb{m}_{j,k}^{(\iota)}+{\sum}_{j=1,j \neq k}^{\mathcal{K}} \left(\pmb{m}_{k_{e},j}^{(\iota)}-\pmb{d}_{j,k}^{(\iota)}-\pmb{d}_{j,k_{e}}^{(\iota)}\right), \\
		\pmb{\psi}^{(\iota)}_{k}
		& \triangleq {\sum}_{j=1}^{\mathcal{K}} n_{j}^{(\iota)} \pmb{h}_{j}^H(\thetak) \pmb{h}_{j}(\thetak) + n_{k_{e}}^{(\iota)} \pmb{h}_{e}^H(\thetak) \pmb{h}_{e}(\thetak), \\
		n_{k}^{(\iota)} &\triangleq n_{1,k}^{(\iota)}+n_{2,k}^{(\iota)}, \\
		n_{k_{e}}^{(\iota)} &\triangleq \frac{\rho_{e}^{(\iota)}}{\left(1+\rho_{e}^{(\iota)}\right)}+\frac{1}{\upsilon_{e}^{(\iota)}}+n_{2,k_{e}}.
	\end{align*}

	Finally, by substitution  \eqref{SRFBRtot} in \eqref{probw1}, we can recast problem ($\mathcal{P}1.1$) as:
	\begin{subequations} \label{solw}
		\begin{alignat}{3} 
			&(\mathcal{P}1.3): && \min_{\pmw} ~ -{\sum}_{k=1}^{\mathcal{K}}\widetilde{\mathcal{S}}_{k}^{\mathcal{F}}(\pmw,\thetak), \label{solw-1} \\            
			&~\text{s.t.} &&  \eqref{P1-2}, ~\eqref{P1-4}.  \label{solw-12x}
		\end{alignat}
	\end{subequations}
   
	Problem ($\mathcal{P}1.3$) is a convex problem that can be efficiently solved using standard solvers, e.g., the interior point method or the CVX toolbox \cite{grant2014cvx} to generate $\wko$.  
	\subsubsection{Sub-Problem for Optimizing the PREs} \label{subsec2}
	Likewise, given $\pmw$, we aim to find $\btheta^{(\iota+1)}$ such that, $\mathcal{S}_{k}(\wko_k,\thetako) \geq \mathcal{S}_{k}(\wko_k,\thetak).$ 
	Similar to the previous section, the lower bound approximation of the user's SR can be obtained using the inequality \eqref{fund1} in Appendix \ref{AppA}. The user's data rate can be expressed as in \eqref{LBBob2x}, where, 
	\begin{alignat*}{2}
		&{q}_{1,k}^{(\iota+1)} &&\triangleq C_k(\wko,\thetak)-\gamma_k(\wko,\thetak)-\sigk {n}_{1,k}^{(\iota+1)},\\
		&{n}_{1,k}^{(\iota+1)} &&\triangleq{1}/{\rho_{k}^{(\iota+1)}}-{1}/{\upsilon_{_{k}}^{(\iota+1)}},\\
		&{\pmb{m}}^{(\iota+1)}_{k,k}(n) &&\triangleq {\hat{\pmb{m}}^{(\iota+1)}_{k,k}(n)}/{\rho_{k}^{(\iota+1)}}, \\
		&\hat{\pmb{m}}^{(\iota+1)}_{k,k}(n) &&\triangleq \left(\wko_k \right)^H\pmb{h}_{_{k}}^H(\thetak) {\pmb{u}}_k  \pmb{\Xi}_n \pmb{G}_{_{\text{AR}}}  \wko_k, \\
		&\pmb{\varphi}_{1,k}^{(\iota+1)} &&\triangleq -{n}_{1,k}^{(\iota+1)}{\sum}_{j=1}^{\mathcal{K}} \pmb{\varphi}_{k,j}^{(\iota+1)}, \\
		&\pmb{\varphi}_{k,j}^{(\iota+1)} &&\triangleq \left(\pmb{\aleph}_{k,j}^{(\iota+1)}(n)\right)^* \pmb{\aleph}_{k,j}^{(\iota+1)}(m), n\in N, m \in N, \\
		&\pmb{\aleph}_{k,j}^{(\iota+1)}(n) &&\triangleq \pmb{u}_k \pmb{\Xi}_n \pmb{G}_{_{\text{AR}}} \wko_j.
	\end{alignat*}
	Similarly, we can express Eve's eavesdropping rate as in \eqref{LBEve2x}, where, 
	\begin{alignat*}{2}
		&q_{1,k_{e}}^{(\iota+1)} &&\triangleq \ln\left(\rho_{e}^{(\iota+1)}\right)-\rho_{e}^{(\iota+1)}-\ln\left(\upsilon_{e}^{(\iota+1)}\right)+1, \\
		&n_{1,k_{e}}^{(\iota+1)} &&\triangleq\left(-\frac{\rho_{e}^{(\iota+1)}}{1+\rho_{e}^{(\iota+1)}}-\frac{1}{\upsilon_{e}^{(\iota+1)}}\right)\sige, \\
		&{\pmb{m}}^{(\iota+1)}_{k_e}(n) &&\triangleq {\sum}_{j=1,j \neq k}^{\mathcal{K}} {\pmb{m}}^{(\iota+1)}_{j,k_{e}}(n), \\
		&{\pmb{m}}^{(\iota+1)}_{j,k_{e}}(n) &&\triangleq \left(\wko_k \right)^H\pmb{h}_{e}^H(\thetak) \pmb{u}_{e}  \pmb{\Xi}_n {\pmb{G}}_{_{\text{{AR}}}}  \wko_k, \\
		&\pmb{\varphi} _{1,k_{e}}^{(\iota+1)} &&\triangleq\left(\frac{-\rho_{e}^{(\iota+1)}}{(1+\rho_{e}^{(\iota+1)})}-\frac{1}{(\sige+\upsilon_{e}^{(\iota+1)})}\right)\left(\sum\limits_{j=1}^{\mathcal{K}} {\pmb{\varphi}}_{k_{e},j}^{(\iota+1)} \right), \\
		&\pmb{\varphi}_{k_{e},j}^{(\iota+1)} &&\triangleq \left({\pmb{\aleph}}_{k_{e},j}^{(\iota+1)}(n)\right)^* \pmb{\aleph}_{k_{e},j}^{(\iota+1)}(m), \\
		&\pmb{\aleph}_{k_{e},j}^{(\iota+1)}(n) &&\triangleq \pmb{u}_e \pmb{\Xi}_n {\pmb{G}}_{_{\text{{AR}}}} \wko_j.
	\end{alignat*}

	Next, we express the user-$k$'s and Eve's dispersion factor. For $i \in \{k,e\}$, we can express the dispersion factors as in \eqref{urllcbob2},
	where $q_{i,k}^{(\iota+1)}$ is expressed as in \eqref{q2k}, $\pmb{d}_{i,j}^{(\iota+1)}(n)$ is expressed as in \eqref{mkj1}, and 
	\begin{align*}
		\pmb{d}_{i}^{(\iota+1)} &\triangleq{\sum}_{j=1, j \neq k}^\mathcal{K} \pmb{d}_{i,j}^{(\iota+1)}, \\
		\pmb{\varphi}_{2,i}^{(\iota+1)}&\triangleq{n}_{2,i}^{(\iota+1)} {\sum}_{j=1}^{\mathcal{K}} \pmb{\varphi}_{i,j}^{(\iota+1)}, \\
		{n}_{2,i}^{(\iota+1)}&\triangleq {(\xi_{i} \rho^{(\iota+1)}_i)}/{\left(\left(\upsilon^{(\iota+1)}_i\right)^2 \sqrt{V_i(\wko,\thetak)}\right)}.
	\end{align*}
    \begin{algorithm}
		\caption{Proposed AO Algorithm for Solving Problem $(\mathcal{P}1)$} \label{alg1}
		\label{alg:iterative}
		\begin{algorithmic}[1]
			\State \textbf{Input:} Initial values $(\pmb{w}^{(1)},~\btheta^{(1)}$, ~$\varsigma_{\max}$,~$\varsigma_{e}^{\max}$,~$T_{\max}$), convergence tolerance $\epsilon_t > 0$.
			\State \textbf{Initialize:} Set iteration index $\iota = 1$.
			\Repeat
			\State Update $\pmb{w}^{(\iota+1)}$ using \eqref{solw}, and $\btheta^{(\iota+1)}$ using \eqref{solth}.
			\If{$\dfrac{\left| R_s(\pmb{w}^{(\iota+1)}, \btheta^{(\iota+1)}) - R_s(\pmb{w}^{(\iota)}, \btheta^{(\iota)}) \right|}{R_s(\pmb{w}^{(\iota)}, \btheta^{(\iota)})} \leq \epsilon_t$}
			\State Set $\pmw^* \gets \pmb{w}^{(\iota+1)}$, $\pmt^* \gets \btheta^{(\iota+1)}$;
			\State \textbf{Output:} $(\pmw^*, \pmt^*)$; \textbf{terminate}.
			\Else
			\State $\iota \gets \iota + 1$
			\EndIf
			\Until{convergence}
		\end{algorithmic}
	\end{algorithm}
	By substituting \eqref{LBBob2x}, \eqref{LBEve2x}, and \eqref{urllcbob2} in \eqref{Cstot2ICSI}, the lower bounding concave approximation of the FBR-SR can be expressed  as:
	\setcounter{equation}{27}
	\begin{align} \label{SRtheta}
		\mathcal{S}_k^{\mathcal{F}}\left(\wko,\btheta\right)& \geq q_{k}^{(\iota+1)}+ 2\sum\limits_{n=1}^{N}\Re\left\{\pmb{m}^{(\iota+1)}_{k}(n) e^{j \theta_n}\right\},
	\end{align}
	where $	q_{k}^{(\iota+1)}$ is expressed as in \eqref{qk2}, ${\pmb{m}}^{(\iota+1)}_k(n)$ is expressed as in \eqref{mk2}, and \setcounter{equation}{30} 
	\begin{align}
		\pmb{\varphi}_{k}^{(\iota+1)} &\triangleq \pmb{\varphi}_{1,k}^{(\iota+1)}+\pmb{\varphi}_{1,k_{e}}^{(\iota+1)}+\pmb{\varphi}_{k_{e},j}^{(\iota+1)}+\pmb{\varphi}_{2,k}^{(\iota+1)}+\pmb{\varphi}_{2,k_{e}}^{(\iota+1)}. 
	\end{align}
	With \eqref{SRtheta}, we formulate the following optimization problem to obtain $\thetako$,
    \setcounter{equation}{31}
	\begin{subequations} \label{Ptheta}
		\begin{alignat}{2} 
			&(\mathcal{P}1.4): && ~ \underset{\btheta}{\max}~ \widehat{R}_{s}\left(\wko,\btheta\right),\label{Ptheta-1} \\
			&~\text{s.t.} &&  \eqref{P1-3},~\eqref{P1-4}.  
		\end{alignat}
	\end{subequations}
	To tackle problem ($\mathcal{P}1.4$) we define,  
	\begin{align} \label{convx2}
		\theta_n^{(\iota+1),k} \triangleq 2\pi - \angle {\pmb{m}}^{(\iota+1)}_k(n), n=1,\dots,N,
	\end{align}
	then we find $\btheta^{(\iota+1)}$ that maximize the $\mbox{SR}(\wko,\pmt)$ and satisfies the UMC constraint \eqref{P1-3},
	\begin{align} \label{solth}
		\btheta^{(\iota+1)} =  2\pi - \angle \pmb{m}^{(\iota+1)}(n), n=1,\dots,N.
	\end{align}
	One can notice that by setting $V_i = 0$, the above FBR optimization problem $(\mathcal{P}1)$ is reduced to the LBR optimization problem, which can be solved similarly.
	The procedure to solve problem ($\mathcal{P}1$) is described in Algorithm \ref{alg1}, which converges to a locally optimal solution of ($\mathcal{P}1$) as formally stated in the following theorem.
	\begin{theorem} \label{theo1}
		The obtained solution by Algorithm \ref{alg1} is a locally optimal solution for problem $(\mathcal{P}1)$.
	\end{theorem}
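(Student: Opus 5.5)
The argument is the standard minorize--maximize (inner approximation) one used for SCA-type schemes. We show that Algorithm~\ref{alg1} produces a monotonically non-decreasing, bounded sequence $\{R_s(\pmw^{(\iota)},\btheta^{(\iota)})\}$. We then show that every accumulation point satisfies the Karush--Kuhn--Tucker (KKT) conditions of $(\mathcal{P}1)$, which is the sense of local optimality intended in the statement.

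\emph{Step 1 (surrogate properties).} Fix $\thetak$. By \eqref{fund1}, \eqref{fund2}, \eqref{fund3}, \eqref{xy}, \eqref{xt} in Appendix~\ref{AppA}, the surrogate $\widetilde{\mathcal{S}}_k^{\mathcal{F}}(\pmw,\thetak)$ in \eqref{SRFBRtot} has three properties. (i) It is a global lower bound: $\mathcal{S}_k^{\mathcal{F}}(\pmw,\thetak)\ge\widetilde{\mathcal{S}}_k^{\mathcal{F}}(\pmw,\thetak)$ for all feasible $\pmw$. (ii) It is tight at the current point: equality holds at $\pmw=\wk$. (iii) Its gradient with respect to $\pmw$ coincides with that of $\mathcal{S}_k^{\mathcal{F}}$ at $\wk$, since each inequality is a first-order expansion. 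The same three properties hold for the $\btheta$-surrogate \eqref{SRtheta} at $\thetak$ with $\pmw=\wko$. For \eqref{SRtheta}, the quadratic term $(e^{j\btheta})^H\pmb{\varphi}_k e^{j\btheta}$ has been majorized using $\pmb{\varphi}_k\preceq\lambda_{\max}(\pmb{\varphi}_k)\pmb{I}$ together with $\|e^{j\btheta}\|^2=N$. This majorization is exact at $\thetak$, and it is the origin of the $\lambda_{\max}$ terms in \eqref{qk2} and \eqref{mk2}.

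\emph{Step 2 (monotonicity).} $\wko$ solves the convex problem $(\mathcal{P}1.3)$, and $\wk$ is feasible for it. Therefore
\[
R_s(\wko,\thetak)\ge\textstyle\sum_k\widetilde{\mathcal{S}}_k^{\mathcal{F}}(\wko,\thetak)\ge\sum_k\widetilde{\mathcal{S}}_k^{\mathcal{F}}(\wk,\thetak)=R_s(\wk,\thetak).
\]
The surrogate \eqref{SRtheta} is linear in $e^{j\btheta}$. Under the UMC, each term $\Re\{\pmb{m}_k(n)e^{j\theta_n}\}$ is maximized by the phase alignment \eqref{solth}, so $\thetako$ is the exact global maximizer of the surrogate over \eqref{P1-3}. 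The same chain of inequalities then gives $R_s(\wko,\thetako)\ge R_s(\wko,\thetak)$.

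\emph{Step 3 (boundedness).} The power constraint \eqref{P1-2} and the UMC define a compact set. On this set each $C_k$ is bounded above, since the SINR is bounded. Hence $R_s$ is bounded above, and the sequence of objective values converges. This justifies the stopping rule with tolerance $\epsilon_t$.

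\emph{Step 4 (stationarity of limit points).} Compactness gives a convergent subsequence $(\pmw^{(\iota_l)},\btheta^{(\iota_l)})\to(\bar\pmw,\bar\btheta)$. Because the surrogates depend continuously on the linearization point, $\bar\pmw$ maximizes the $\pmw$-surrogate built at $(\bar\pmw,\bar\btheta)$. We invoke Slater's condition, which holds for the convex set \eqref{P1-2}. The KKT conditions of $(\mathcal{P}1.3)$ at $\bar\pmw$, combined with gradient matching (iii), then give the KKT conditions of $(\mathcal{P}1.1)$. For $\btheta$, the fixed-point relation $\bar\theta_n=2\pi-\angle\pmb{m}_k(n)$ says that the Riemannian gradient of the surrogate on the torus vanishes. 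Again by (iii), this is also a stationary point of $R_s(\bar\pmw,\cdot)$ on the UMC manifold. Combining the two blocks shows that $(\bar\pmw,\bar\btheta)$ is a KKT point of $(\mathcal{P}1)$.

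\emph{Main obstacles.} I expect Step 4 to be the main obstacle, for two reasons. First, block-coordinate (AO) arguments normally need uniqueness of the block maximizers or some strict concavity. I would first try to supply strict concavity of the $\pmw$-surrogate through the $\pmb{\psi}_k\succ 0$ term. If that fails, I would add a small proximal term, in the style of BSUM, which yields uniqueness. Second, the nonsmooth $[\cdot]^+$ in \eqref{Cstot2} must be handled. I would do this by restricting attention to the active users with positive secrecy rate, where the objective is smooth.
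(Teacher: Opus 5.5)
Your route is the paper's (Appendix~\ref{AppB}): a tight, gradient-matching minorant, monotone ascent of the AO iterates, and concatenation of the blockwise KKT conditions of $(\mathcal{P}1.3)$ and $(\mathcal{P}1.4)$. The paper merely asserts these steps, and your compactness and limit-point treatment is more careful. Two steps, however, fail as written.

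First, the $[\cdot]^+$ in \eqref{Cstot2} already breaks Step~2, not just smoothness in Step~4. The bounds of Appendix~\ref{AppA} give a tight minorant of the unclipped $g_k\triangleq C_k-C_e-\xi_k\sqrt{V_k}-\xi_e\sqrt{V_e}$. Hence $\sum_k\widetilde{\mathcal{S}}_k^{\mathcal{F}}(\wk,\thetak)=\sum_k g_k(\wk,\thetak)$, which is strictly smaller than $R_s(\wk,\thetak)$ as soon as some $g_k(\wk,\thetak)<0$. Since $(\mathcal{P}1.3)$ carries the surrogates of all users, an update can trade a positive secrecy rate for progress on a bracket that stays negative, so $R_s$ need not increase. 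Restricting the Step~4 analysis to active users does not help, because the inactive users' gradients are still present in the problem actually solved. What Algorithm~\ref{alg1} provably ascends is $\sum_k g_k$. You should prove stationarity for that function and transfer it to $(\mathcal{P}1)$ only at limit points with all $g_k>0$, where $R_s=\sum_k g_k$ locally, or make this a hypothesis. The objective is also not smooth for active users at points with $V_e=0$, i.e., Eve nulled, which is possible when Eve's CSI is known. There $\sqrt{V_e}\approx\sqrt{2}\,|\pmb{h}_e\pmw_k|/\sqrt{\rho_e}$ has a kink, and the coefficients $1/\sqrt{V_e^{(\iota)}}$ produced by \eqref{xy} blow up. So your continuity-of-surrogates step needs $V_e>0$ at the limit.

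Second, neither of your uniqueness remedies works for Algorithm~\ref{alg1} as stated. $\pmb{\psi}_k$ is a combination of the $\mathcal{K}+1$ rank-one matrices $\pmb{h}_i^H(\thetak)\pmb{h}_i(\thetak)$, so it is singular whenever $M>\mathcal{K}+1$ (e.g., $M=10$ with $\mathcal{K}=6$ or $7$ in Section~\ref{sec6}). A proximal term changes the algorithm the theorem is about. The missing idea is that with two blocks, uniqueness in one block suffices, and the $\btheta$-block has it. The update \eqref{solth} is the unique maximizer of a surrogate linear in $e^{j\btheta}$, and it is continuous in the linearization point provided the aggregated coefficients $\sum_k\pmb{m}_k(n)$ are nonzero there.

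Take a limit point $(\bar{\pmw},\bar{\btheta})$ along $\iota_l$. Your $\pmw$-block argument needs no uniqueness, because all interleaved objective values converge to one limit. For the $\btheta$-block:
\begin{enumerate}
\item Pass to a subsequence with $\btheta^{(\iota_l-1)}\to\hat{\btheta}$.
\item Apply the same limit argument at the linearization points $(\pmw^{(\iota_l)},\btheta^{(\iota_l-1)})\to(\bar{\pmw},\hat{\btheta})$. This shows that $\hat{\btheta}$ maximizes the surrogate built at $(\bar{\pmw},\hat{\btheta})$.
\item By uniqueness, $\hat{\btheta}$ is its own phase-alignment image.
\item By continuity, $\btheta^{(\iota_l)}\to\hat{\btheta}$, so $\hat{\btheta}=\bar{\btheta}$.
\end{enumerate}
Hence both blocks are stationary at $(\bar{\pmw},\bar{\btheta})$.

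Finally, like the paper, you reach only a KKT point, which need not be a local maximizer.
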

	\textit{Proof}: See Appendix \ref{AppB}.
	\subsubsection{Complexity Analysis}
	The developed Algorithm \ref{alg1} is designed to tackle problem $(\mathcal{P}1)$ by decoupling the beamforming vector and the IRS's PREs within the objective function. The resulting problem $(\mathcal{P}1)$ is solved by decomposing it into two sub-problems. The beamforming sub-problem $(\mathcal{P}1.3)$ is convex and can be efficiently solved using standard solvers such as the interior point method or the CVX toolbox \cite{grant2014cvx}, while the IRS phase shift sub-problem $(\mathcal{P}1.4)$ admits a closed-form solution. The algorithm's complexity can be estimated by its worst-case runtime and the number of decision variables \cite{labit2002sedumi}. These complexity expressions are based on the size of the optimization variables involved in each sub-problem. Thus, in Algorithm \ref{alg1}, the computational complexity of obtaining $\pmw$ given $\btheta$ is $\mathcal{O} (M^3)$, and the computational complexity of obtaining $\btheta$ given $\pmw$ is \textcolor{black}{$\mathcal{O}(N^{3})$}. 
    
	\section{SSR Maximization under Imperfect CSI with FBR systems} \label{sec4}
	To deal with the imperfect CSI from the IRS to the users and Eve, we adopt the channel modeling in equation \eqref{ICSIcha} where only partial/imperfect CSI $\widehat{\pmb{u}}_{i}$ is available. First, we can cast the SSR maximization problem as follows: %\setcounter{equation}{34}
	\begin{subequations} \label{P6}
		\begin{alignat} {2}
			&(\mathcal{P}2): && \underset{\pmw,\btheta}{\max}  ~ \widehat{R}_{s}(\pmw,\btheta), \label{P2-1} \\ 
			&~\text{s.t.} &&  {\sum}_{k = 1}^{\mathcal{K}}\Vert \pmw_k \Vert^2 \leq P_T, \label{P2-2} \\ 
			& && |e^{(j \btheta)}| = 1, \label{P2-3} \\
			& && \|\Delta\pmb{u}_i\|_2 \leq \xi_{i}, ~i \in \{k,e\}, \label{P2-4}\\
			& && \varsigma_k\leq \varsigma_{\max}, ~\varsigma_e \leq \varsigma_{e}^{\max}, ~T \leq T_{\max}, \label{P2-5}
		\end{alignat}
	\end{subequations}
	where $ \widehat{R}_s(\pmw,\pmt) \triangleq {\sum}_{k}^{\mathcal{K}}\widehat{\mathcal{S}}_{k}^{\mathcal{F}}(\pmw,\pmt)$ is the SSR, $P_T$ is Alice's power budget, \eqref{P2-3} captures the sum of the transmitted power constraint, \eqref{P2-4} captures the UMC of the PREs' phase shift, and \eqref{P2-5} captures the FBR transmission duration and latency constraints. Unlike the traditional LBR case under imperfect CSI, problem $(\mathcal{P}2)$ is more challenging. First, this is because the additional FBR constraints that are captured by the non-convex dispersion factor expression \eqref{dispfa2} are embedded within the objective function $\widehat{R}_s(\pmw,\pmt)$. Moreover, due to the imperfect CSI, as modeled in \eqref{ICSIcha}, the semi-infinite constraint \eqref{P2-4} further complicates the problem.
    It should be noted that the FBR constraints in \eqref{P2-5} are inherently embedded within the objective function, as the SR expression $\widehat{R}_s(\pmw,\pmt)$ incorporates the square-root of the dispersion factors, as shown in~\eqref{Cstot2ICSI}. Consequently, the FBR-SSR optimization problem falls outside the convex domain, making it more difficult to solve than its LBR counterpart.
    
	The optimization problem ($\mathcal{P}2$) is non-convex since the objective function \eqref{P2-1} is not concave and the UMC \eqref{P2-3} is non-convex. To tackle this non-convex problem, one can employ the AO technique \cite{nguyen2022leveraging}. Specifically, at iteration $(\iota)$, the feasible point $(\wk,\thetak)$ is generated from ($\mathcal{P}2$) by solving two sub-problems, the first one is to optimize $\pmw$ with a fixed $\btheta$:
	\begin{align}  \label{probw}
		(\mathcal{P}2.1): ~ \underset{\pmw}{\max} ~ \widehat{R}_{s}(\wk,\btheta), ~ \text{s.t.} ~ \eqref{P2-2},~\eqref{P2-4},~\eqref{P2-5},
	\end{align} 
	then, we optimize $\btheta$ with a fixed $\pmw$ by solving the following problem
	\begin{align} \label{probth}
		(\mathcal{P}2.2): ~ \underset{\btheta}{\max} ~ \widehat{R}_{s}(\pmw,\thetak), ~ \text{s.t.} ~ \eqref{P2-3},~\eqref{P2-4},~\eqref{P2-5}.
	\end{align}
	\begin{figure*}[!t] 
		\normalsize 
		\begin{align}
			&\mathbf{X}_k  \triangleq \mathbf{\Phi} \pmb{G}_{_\text{AR}} \pmw_k \pmw_k^{(\iota),H} \pmb{G}_{_\text{AR}}^H \mathbf{\Phi}^{(\iota),H}+ \mathbf{\Phi}^{(\iota)} \pmb{G}_{_\text{AR}} \pmw_k^{(\iota)} \pmw_k^{H} \pmb{G}_{_\text{AR}}^H \mathbf{\Phi}^{(\iota),H} - \mathbf{\Phi}^{(\iota)} \pmb{G}_{_\text{AR}} \pmw_k^{(\iota)} \pmw_k^{(\iota),H} \pmb{G}_{_\text{AR}}^H \mathbf{\Phi}^{(\iota),H} \label{LMI1-1}, \tag{42} \\
			&\Delta\pmb{u}_{k} \mathbf{X}_k \Delta\pmb{u}_{k}^H + 2 \Re \{( \widehat{\pmb{u}}_{k}^H \mathbf{X}_{k}) \Delta\pmb{u}_{k}\}+ d_k \geq (2^{\varphi_k}-1)\beta_{k}, \forall \|\Delta\pmb{u}_k\|_2 \leq \xi_k, \forall k. \tag{43} \label{Bobx1} 
		\end{align}
		\hrulefill 
	\end{figure*}
	However, with AO, solving two subproblems $(\mathcal{P}2.1)$ and $(\mathcal{P}2.2)$ is computationally demanding, especially given the large number of PREs of the IRS.  Our proposed linearization method in the sequel uses mathematically tractable approximation functions, leading to a computationally efficient algorithm that can be used for a large number of IRSs' PREs.

    The feasibility of the proposed sub-problems $(\mathcal{P}2.1)$ and $(\mathcal{P}2.2)$ under the AO framework is affected by the incorporation of the FBR constraints, in contrast to the conventional LBR case. In LBR-based formulations, the achievable SR depends solely on the channel capacities of the legitimate and eavesdropping links as it can be seen in \eqref{Cstot}, resulting in a relatively larger feasible region and a convex or quasi-convex optimization landscape. However, in the FBR regime, the users' SR is penalized by the dispersion factors $\widehat{V}_i(\pmw,\pmt)$ and the decoding error and leakage probabilities $\varsigma_k$ and $\varsigma_e$, which are inherently coupled with the system latency and transmission duration through the term $\xi_i \triangleq {Q^{-1}(\varsigma_i)}/{(\text{ln}(2)\sqrt{N_t})}$, as it can be noticed in \eqref{Cstot2ICSI}. These additional terms introduce non-convexity and significantly tighten the feasible region.
    
    Specifically, in the beamforming sub-problem $(\mathcal{P}2.1)$, the FBR constraints embedded in the objective and constraint sets (\ref{P2-4})-(\ref{P2-5}) reduce the flexibility of $\pmw$ compared to the LBR case, where the optimization only considers the transmit power and SINR thresholds. Likewise, in the IRS PREs sub-problem $(\mathcal{P}2.2)$, the feasibility of $\btheta$ becomes more constrained as the dispersion-related terms $\widehat{V}_i(\pmw,\pmt)$ create additional coupling between $\btheta$ and $\pmw$. As a result, while the LBR sub-problems generally guarantee feasibility for a broad range of initialization points, the FBR sub-problems exhibit a smaller feasible region and a more complex curvature of the objective space.

    Nevertheless, through the SCA, the $S$-procedure, and the PCCP framework, we ensure that each iteration operates on a feasible convex surrogate, and the final solution satisfies the feasible set defined by the FBR constraints. This guarantees that the proposed AO-based algorithm maintains feasibility across iterations and converges to a locally optimal solution that satisfies both secrecy and FBR requirements.
    
	To solve problem $(\mathcal{P}2)$, we first introduce slack variables to decompose the coupling of the beamforming vector and the IRS's PREs in the objective function to facilitate the AO method. Specifically, we first substitute \eqref{Cstot} into \eqref{P2-1}. Then we introduce slack variables $z$ as the FBR-SSR's lower bound, ${\varphi}_k$ represents the minimum users' rate, $\mu_{k_e}$ represents the maximum eavesdropping rate of Eve, and $ \tilde{\varphi}_i, i \in \{k,e\}$ represents the maximum dispersion factor for the user-$k$ and Eve, respectively. The problem $(\mathcal{P}2)$ can be recast as: 
	\begin{subequations} \label{P6.1}
		\begin{alignat}{7} 
			&(\mathcal{P}2.3): && ~ \underset{\pmw,\btheta}{\max} ~ z, \label{P6-1} \\ 
%			&~\text{s.t.} &&  z \leq \widehat{R}_s(\pmw,\pmt), \label{PwCSI-2x} \\ 
			&~\text{s.t.} && z \leq {\sum}_{k}^{\mathcal{K}}\varphi_k-\mu_{k_e}-\tilde{\varphi}_k-\tilde{\varphi}_{k_e}, \forall k, \label{PwCSI-2x2} \\ 
			& && {\varphi}_k \leq \widehat{C}_k, ~\forall \|\Delta\pmb{u}_k\|_2 \leq \Omega_{k}, ~\forall k, \label{P6-2} \\
			& && \mu_{k_e} \geq \widehat{C}_e, ~\forall \|\Delta\pmb{u}_e\|_2 \leq \Omega_{e}, ~\forall k, \label{P6-3} \\
			& && \tilde{\varphi}_k \geq  \xi_k \sqrt{\widehat{V}_{k}}, ~\forall \|\Delta\pmb{u}_k\|_2 \leq \Omega_{k}, ~\forall k, \label{P6-4}\\
			& && \tilde{\varphi}_{k_e} \geq \xi_{e} \sqrt{\widehat{V}_{e}}, ~\forall \|\Delta\pmb{u}_e\|_2 \leq \Omega_{e}, ~\forall k, \label{P7-5x} \\
			& && \eqref{P2-2}, ~\eqref{P2-3}, ~\eqref{P2-4},~\eqref{P2-5}. \label{P7-6x}
		\end{alignat}
	\end{subequations}
	One can notice that the constraint \eqref{P2-4} in problem $(\mathcal{P}2)$ has been captured by the constraints \eqref{P6-2}, \eqref{P6-3}, \eqref{P6-4}, and \eqref{P7-5x} in problem $(\mathcal{P}2.3)$ (referring to the CSI factor in the channel definition \eqref{chaCSIdef}).  Next,  we employ the SCA technique and the $\mathcal{S}$-procedure to transform the semi-infinite non-convex constraints \eqref{P6-2}, \eqref{P6-3}, \eqref{P6-4}, and \eqref{P7-5x} to finite LMIs. 
	Then, we leverage a PCCP algorithm to tackle the UMC \eqref{P2-3} \cite{lipp2016variations}. 
	\subsubsection{Sub-Problem for Optimizing the Beamforming Vectors} \label{subsec3}
	To linearize the semi-infinite inequalities in \eqref{P6-2}, we first substitute \eqref{ckedef} into  \eqref{P6-2}, hence, \eqref{P6-2} can be expressed as:
	\begin{align} \label{P3CSI-BnLx2}
		2^{\varphi_k}-1 \leq \frac{\left|(\pmb{u}_k \pmb{\Phi} \pmb{G}_{_\text{AR}})\pmw_{k}\right|^2}{\left\|(\pmb{u}_k \pmb{\Phi} \pmb{G}_{_\text{AR}})\pmW_{-k}\right\|^2 +\sigk}, \forall k.
	\end{align}
	By treating the interference plus noise signal as an auxiliary function $\pmb{\beta} = [\beta_1, \dots, \beta_{K}]$, \eqref{P3CSI-BnLx2} can be expressed as: 
	\begin{subequations}
		\begin{equation} \label{P3CSI-BLx3}
			\left|(\pmb{u}_k \pmb{\Phi} \pmb{G}_{_\text{AR}})\pmw_{k}\right|^2 \geq (2^{\varphi_k}-1)\beta_{k}, \forall k,
		\end{equation}
		\begin{equation} \label{P3CSIn-BLx2}
			\left\|(\pmb{ u}_k \pmb{\Phi} \pmb{G}_{_\text{AR}})\pmW_{-k}\right\|^2 +\sigk \leq \beta_{k}, \forall k.
		\end{equation}
	\end{subequations}
	To circumvent the non-convex semi-infinite inequalities in \eqref{P3CSI-BLx3}, we replace the left-hand side of  \eqref{P3CSI-BLx3} with its lower bounds using the following Lemma.

	\begin{mylem} \label{lem1_1}
		At iteration $(\iota)$, let $\wk$ and $\thetak$ be the optimal solution, then at the point $(\wk,\thetak)$ we can express the linear lower bound of  \eqref{P3CSI-BLx3} as:
		\begin{align} \label{ICSIeqw}
			\left|(\pmb{u}_k \pmb{\Phi} \pmb{G}_{_\text{AR}})\pmw_k\right|^2 \triangleq \pmb{u}_k \mathbf{X}_k \pmb{u}_k^H, 
		\end{align}
		where $\mathbf{X}_k$, is given in \eqref{LMI1-1}.
	\end{mylem}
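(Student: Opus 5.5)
The lemma should be read as a first-order (SCA) linear lower bound on $|\pmb{u}_k\pmb{\Phi}\pmb{G}_{_\text{AR}}\pmw_k|^2$ around the current iterate $(\wk,\thetak)$; the ``$\triangleq$'' in \eqref{ICSIeqw} is really ``$\geq$'' with equality at the iterate. I would write $\pmb{a}\triangleq\pmb{\Phi}\pmb{G}_{_\text{AR}}\pmw_k\in\bbC^{N\times 1}$ and $\pmb{a}^{(\iota)}\triangleq\mathbf{\Phi}^{(\iota)}\pmb{G}_{_\text{AR}}\pmw_k^{(\iota)}$, with $\mathbf{\Phi}^{(\iota)}=\text{Diag}(e^{j\thetak})$. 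Then
\[
\left|(\pmb{u}_k \pmb{\Phi} \pmb{G}_{_\text{AR}})\pmw_k\right|^2=\pmb{u}_k\pmb{a}\pmb{a}^H\pmb{u}_k^H=|\pmb{u}_k\pmb{a}|^2 .
\]
The key observation is that, for fixed $\pmb{u}_k$, the map $\pmb{a}\mapsto|\pmb{u}_k\pmb{a}|^2$ is a convex quadratic, so it dominates its tangent plane at $\pmb{a}^{(\iota)}$.

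The core step is the elementary inequality $(\pmb{a}-\pmb{a}^{(\iota)})(\pmb{a}-\pmb{a}^{(\iota)})^H\succeq 0$. Expanding it gives
\[
\pmb{a}\pmb{a}^H\succeq \pmb{a}\pmb{a}^{(\iota),H}+\pmb{a}^{(\iota)}\pmb{a}^H-\pmb{a}^{(\iota)}\pmb{a}^{(\iota),H}.
\]
Pre- and post-multiplying by $\pmb{u}_k$ and $\pmb{u}_k^H$ yields $|\pmb{u}_k\pmb{a}|^2\geq \pmb{u}_k\mathbf{X}_k\pmb{u}_k^H$. Here
\[
\mathbf{X}_k=\pmb{a}\pmb{a}^{(\iota),H}+\pmb{a}^{(\iota)}\pmb{a}^H-\pmb{a}^{(\iota)}\pmb{a}^{(\iota),H},
\]
which is exactly \eqref{LMI1-1} once $\pmb{a}$ and $\pmb{a}^{(\iota)}$ are substituted (the $\mathbf{\Phi}^{H}$ factor in the second term should be read as $\mathbf{\Phi}$ at the current variable). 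The right-hand side $\pmb{u}_k\mathbf{X}_k\pmb{u}_k^H=2\Re\{(\pmb{u}_k\pmb{a}^{(\iota)})^*(\pmb{u}_k\pmb{a})\}-|\pmb{u}_k\pmb{a}^{(\iota)}|^2$ is real. It is affine in $\pmw_k$ for fixed $\btheta$ and affine in $e^{j\btheta}$ for fixed $\pmw_k$, which is what the AO subproblems need.

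Next I would verify the two standard SCA properties. Tightness holds because at $\pmw_k=\wk$ and $\btheta=\thetak$ we have $\pmb{a}=\pmb{a}^{(\iota)}$, so $\mathbf{X}_k=\pmb{a}^{(\iota)}\pmb{a}^{(\iota),H}$ and the bound holds with equality. Gradient matching follows from the tangent-plane construction. Both hold for every $\Delta\pmb{u}_k$, because the inequality is pointwise in $\pmb{u}_k=\widehat{\pmb{u}}_k+\Delta\pmb{u}_k$. The bound is therefore uniform over the uncertainty set $\omega_k$, so \eqref{P3CSI-BLx3} can be replaced by the conservative constraint $\pmb{u}_k\mathbf{X}_k\pmb{u}_k^H\geq(2^{\varphi_k}-1)\beta_k$ for all $\|\Delta\pmb{u}_k\|\le\Omega_k$.

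Expanding $\pmb{u}_k=\widehat{\pmb{u}}_k+\Delta\pmb{u}_k$ then produces the quadratic form in $\Delta\pmb{u}_k$ of \eqref{Bobx1}, with $d_k=\widehat{\pmb{u}}_k\mathbf{X}_k\widehat{\pmb{u}}_k^H$. This prepares the $\mathcal{S}$-procedure step, and I would note it as a remark.

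The main obstacle is that $\mathbf{X}_k$ is not Hermitian, since its first two terms are not conjugates of each other when both $\pmw_k$ and $\btheta$ vary. The quadratic form $\pmb{u}_k\mathbf{X}_k\pmb{u}_k^H$ is nonetheless real, because the two cross terms are complex conjugates as scalars. I would therefore replace $\mathbf{X}_k$ by its Hermitian part $\tfrac12(\mathbf{X}_k+\mathbf{X}_k^H)$ without changing the value, so that the later LMI is well-posed.
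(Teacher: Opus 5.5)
Your proof is correct and is essentially the paper's. The paper applies the scalar bound $|a|^2\geq a^{*,(\iota)}a+a^{*}a^{(\iota)}-a^{*,(\iota)}a^{(\iota)}$ (i.e., $|a-a^{(\iota)}|^2\geq 0$) with $a=(\pmb{u}_k\pmb{\Phi}\pmb{G}_{_\text{AR}})\pmw_k$, which is your inequality $(\pmb{a}-\pmb{a}^{(\iota)})(\pmb{a}-\pmb{a}^{(\iota)})^H\succeq 0$ sandwiched by $\pmb{u}_k$; your matrix form just makes explicit that the bound holds for every $\Delta\pmb{u}_k$.

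One side remark is wrong, though harmless. Since $(\pmb{a}\pmb{a}^{(\iota),H})^H=\pmb{a}^{(\iota)}\pmb{a}^H$ identically, your $\mathbf{X}_k$ is Hermitian by construction even when $\pmw_k$ and $\btheta$ both vary, so no symmetrization is needed. The apparent asymmetry comes only from the $\mathbf{\Phi}^{(\iota),H}$ typo in the second term of \eqref{LMI1-1}, which you already corrected.
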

	
	\Prf Refer to Appendix \ref{apendixB}. 
	
	Next, by substituting \eqref{ICSIeqw} in  \eqref{P3CSI-BLx3}, and using \eqref{ch2-h1} and Lemma \ref{lem1_1}, the inequality in \eqref{P3CSI-BLx3} is reformulated as in \eqref{Bobx1}, where $d_k \triangleq \widehat{\pmb{u}}_{k} \mathbf{X}_{k} \widehat{\pmb{u}}_{k}^H$.

	To address the uncertainty of $\{\Delta\pmb{u}_k\}$ in \eqref{Bobx1}, we leverage the $\mathcal{S}$-procedure \cite{boyd1994linear}. 
	\begin{mylem} \label{lem1_2}
		($\mathcal{S}$-procedure) For any Hermitian matrix $\mathbf{U}_i \in \mathbb{C}^{L \times L}$, vector  $\mathbf{u}_i \in \mathbb{C}^{L \times 1}$, and scalar $\text{u}_i$, for ${i}= 0,\dots,Q$. A quadratic function of a variable $x$ is defined as: \setcounter{equation}{43}
		\begin{align}
			f_{i}(x) \triangleq x^H \mathbf{U}_{i} x + 2\Re\{\mathbf{u}_{i}^H x\}+{\text{u}}_{i}.
		\end{align}
		The condition $f_{0}(x) \geq 0$ such that $f_{i}(x) \geq 0, ~ i = 1,\dots,Q$,  holds, if an only if there exists $\mathfrak{n}_{i} \geq 0, i= 0,\dots,Q$, such that,
		\begin{align} 
			\begin{bmatrix}
				\mathbf{U}_{0} & \mathbf{u}_{0} \\ \mathbf{u}_{0}^H & \text{u}_{0}	
			\end{bmatrix} - \sum\limits_{i= 0}^{Q} \mathfrak{n}_{i} \begin{bmatrix}
				\mathbf{U}_{i} & \mathbf{u}_{i} \\ \mathbf{u}_{i}^H & \text{u}_{i}	
			\end{bmatrix} \succeq 0.
		\end{align} 
	\end{mylem}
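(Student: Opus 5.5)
The plan is to treat the two directions separately. I read the sum as running over $i=1,\dots,Q$, since $f_0$ itself should not be subtracted. Sufficiency (the ``if'' part) holds for any $Q$. Necessity is only true in general for $Q=1$ under a Slater-type condition, which is exactly the case used in \eqref{Bobx1}. There $f_1(\Delta\pmb{u}_k)=\xi_k^2-\Delta\pmb{u}_k\Delta\pmb{u}_k^H$, so $f_1(0)>0$. I would therefore state and prove the ``only if'' part under these two hypotheses: $Q=1$, and the existence of $\bar{x}$ with $f_1(\bar{x})>0$.

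\emph{Sufficiency.} Write $\tilde{x}\triangleq[x^T,\,1]^T$ and $\mathbf{F}_i\triangleq\begin{bmatrix}\mathbf{U}_i & \mathbf{u}_i\\ \mathbf{u}_i^H & \text{u}_i\end{bmatrix}$, so that $f_i(x)=\tilde{x}^H\mathbf{F}_i\tilde{x}$. If $\mathbf{F}_0-\sum_{i=1}^Q\mathfrak{n}_i\mathbf{F}_i\succeq 0$ with $\mathfrak{n}_i\geq 0$, then for every $x$ we get $f_0(x)\geq\sum_i\mathfrak{n}_i f_i(x)$. Hence $f_0(x)\geq 0$ whenever all $f_i(x)\geq 0$. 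This step is one line.

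\emph{Necessity for $Q=1$.} Assume $f_1(x)\geq 0\Rightarrow f_0(x)\geq 0$. Consider the joint range $\mathcal{W}\triangleq\{(\tilde{x}^H\mathbf{F}_0\tilde{x},\ \tilde{x}^H\mathbf{F}_1\tilde{x}):\tilde{x}\in\mathbb{C}^{L+1}\}$, taken over all $\tilde{x}$ and not only those with last entry $1$. By the Hausdorff--Toeplitz theorem, equivalently Dines' theorem for two Hermitian forms, $\mathcal{W}$ is a convex cone.

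The main obstacle is to pass from the affine implication, which involves only vectors with last entry $1$, to the homogeneous statement that no $\tilde{x}$ gives $\tilde{x}^H\mathbf{F}_1\tilde{x}\geq 0$ together with $\tilde{x}^H\mathbf{F}_0\tilde{x}<0$. Vectors with last entry $t\neq 0$ rescale to $t=1$, so they are covered. For $t=0$ I would argue by continuity. Perturb the vector to $(y+\epsilon\bar{x},\epsilon)$ and use $f_1(\bar{x})>0$: along this perturbation the $\mathbf{F}_1$-form is strictly positive for small $\epsilon\neq0$ (if $y^H\mathbf{U}_1y=0$ this needs a little care with the linear term), while the $\mathbf{F}_0$-form stays negative. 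That contradicts the affine implication. I expect this to be the most delicate point.

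Once that is done, $\mathcal{W}$ and the open cone $\{(a,b):a<0,\ b\geq0\}$ are disjoint convex sets. A separating hyperplane gives $\lambda_0,\lambda_1\geq 0$, not both zero, with $\lambda_0\,\tilde{x}^H\mathbf{F}_0\tilde{x}-\lambda_1\,\tilde{x}^H\mathbf{F}_1\tilde{x}\geq 0$ for all $\tilde{x}$. The Slater point $\bar{x}$ rules out $\lambda_0=0$. Dividing by $\lambda_0$ gives $\mathfrak{n}_1=\lambda_1/\lambda_0\geq 0$ and the LMI $\mathbf{F}_0-\mathfrak{n}_1\mathbf{F}_1\succeq0$.

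Finally, I would remark that for $Q\geq2$ (or $Q\geq3$ in the complex case) only the sufficiency direction holds in general. This is enough for the paper, because the LMIs derived from the lemma are only used as conservative, safe replacements of the semi-infinite constraints \eqref{P6-2}--\eqref{P7-5x}. For \eqref{Bobx1} the lemma applies with $\mathbf{U}_0=\mathbf{X}_k$, $\mathbf{u}_0=\mathbf{X}_k^H\widehat{\pmb{u}}_k^H$, $\text{u}_0=d_k-(2^{\varphi_k}-1)\beta_k$, $\mathbf{U}_1=-\pmb{I}_N$, $\mathbf{u}_1=0$ and $\text{u}_1=\xi_k^2$. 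In that case the equivalence is exact.
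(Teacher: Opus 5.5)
Your proposal is correct. It necessarily takes a different route from the paper, which gives no argument for Lemma~\ref{lem1_2} and simply imports it from \cite{boyd1994linear}.

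\textbf{Your corrections to the statement are warranted.} With the sum starting at $i=0$, the choice $\mathfrak{n}_0=1$, $\mathfrak{n}_1=\dots=\mathfrak{n}_Q=0$ turns the LMI into $\mathbf{0}\succeq 0$, so even the ``if'' direction would be false as written. The ``only if'' direction fails for general $Q$, and also for $Q=1$ when no strictly feasible point exists.

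\textbf{What you actually prove is Yakubovich's S-lemma.} Sufficiency holds for every $Q$ via the lifted matrices $\mathbf{F}_i$. Necessity holds for $Q=1$ under a Slater point, via three steps: homogenization, convexity of the joint range of two Hermitian forms (Toeplitz--Hausdorff, or Dines after realification), and a separating hyperplane. The Slater point then forces $\lambda_0>0$.

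\textbf{One step to tighten.} In the $t=0$ case with $y^H\mathbf{U}_1y=0$, the $\mathbf{F}_1$-form along $(y+\epsilon\bar{x},\epsilon)$ equals $2\epsilon\,\Re\{y^H(\mathbf{U}_1\bar{x}+\mathbf{u}_1)\}+\epsilon^2 f_1(\bar{x})$. This is positive for small $\epsilon$ of the appropriate sign, not for all small $\epsilon\neq 0$. Choosing that sign, while the $\mathbf{F}_0$-form stays negative, completes the contradiction.

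\textbf{What your version buys over the paper's citation.} It is a statement that is actually true, and it is exactly what the paper uses. The uncertainty set in \eqref{Bobx1} is a single ball with $f_1(0)=\Omega_k^2>0$ for $\Omega_k>0$. The radius there should be $\Omega_k$, as in \eqref{ch2-h2} and \eqref{P3CSI-BLX3n2}, not $\xi_k$, which also denotes the FBR factor. Your choice $\mathbf{U}_0=\mathbf{X}_k$ is legitimate because $\mathbf{X}_k$ is Hermitian when $\mathbf{\Phi}=\mathbf{\Phi}^{(\iota)}$, as in the beamforming subproblem. Hence \eqref{P3CSI-BLX3n2} is an exact reformulation. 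Where exactness could fail, your sufficiency direction alone still guarantees the LMIs are safe inner approximations of the semi-infinite constraints.
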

	
	Using Lemma \ref{lem1_2}, we can transform \eqref{Bobx1} into its equivalent LMIs as:
	\begin{align} \label{P3CSI-BLX3n2}
		\begin{bmatrix}
			\varpi_{k} \mathbf{I}_{M}+ \mathbf{X}_k & (\widehat{\pmb{u}}_{k} \mathbf{X}_k)^H \\ (\widehat{\pmb{u}}_{k} \mathbf{X}_k) & d_k - (2^{\varphi_k}-1)\beta_{k} - \varpi_{k} \Omega_{k}^2	
		\end{bmatrix} \succeq 0, \forall k,
	\end{align}
	where $\pmb{\eta} \triangleq [\eta_{1},\dots,\eta_{K}]^T \geq 0$ are slack variables. Even though \eqref{P3CSI-BLx3} has been transformed into an LMI form, it is still non-convex due to the non-convex nature of the term $2^{\varphi_k} \beta_{k}$. For that, we adopt the SCA technique \cite{9197675} to convert the non-convex constraint \eqref{P3CSI-BLx3} to a convex approximation expression. Specifically, performing the first order Taylor approximation, the term $2^{\varphi_k} \beta_{k}$ can be upper bounded by:
	\begin{align} \label{upbeta}
		(\beta_{k}(2^{\varphi_k}))^{ub} \triangleq ((\varphi_k-\varphi_k^{(\iota)})(\beta_{k}^{(\iota)}) \ln2 + \beta_{k})2^{\varphi_k^{(\iota)}},
	\end{align}		
	where $\varphi_k^{(\iota)}, \beta_{k}^{(\iota)}$ are the value of the variables $\varphi_k, \beta_{k}$ at iteration $(\iota)$ in the SCA-based algorithm, respectively. 
    
    Lastly, substituting \eqref{upbeta} in \eqref{P3CSI-BLX3n2}, the LMIs in \eqref{P3CSI-BLX3n2} can be expressed as:
	\begin{align} \label{P3CSI-BLX3n3}
		\begin{bmatrix}
			\varpi_{k} \mathbf{I}_{M}+ \mathbf{X}_k & (\widehat{\pmb{u}}_{k} \mathbf{X}_k)^H \\ (\widehat{\pmb{u}}_{k} \mathbf{X}_k) & d_k - (\beta_{k}(2^{\varphi_k}))^{ub}+\beta_{k} - \varpi_{k} \Omega_{k}^2	
		\end{bmatrix} \succeq 0, \forall k.
	\end{align} 
	
	Next, we tackle  \eqref{P3CSIn-BLx2} using Schur's complement \cite{boyd2004convex}.
	\begin{mylem} \label{lem1_3}
		(Schur's complement) For given matrices $\mathbf{U} \succeq 0$, $\mathbf{Y}$, and $\mathbf{Z}$, let a Hermitian matrix $\mathbf{X}$ be defined as: 
		\begin{align} 
			\mathbf{X}\triangleq
			\begin{bmatrix}
				\mathbf{Z} & \mathbf{Y}^H \\ \mathbf{Y} & \mathbf{U}
			\end{bmatrix}.
		\end{align} 
	\end{mylem}
	Then, $\mathbf{X} \succeq 0$ if and only if $\Delta{\mathbf{U}} \succeq 0$, where $\Delta{\mathbf{U}}$ is the Schur's complement define as $\Delta{\mathbf{U}} \triangleq \mathbf{Z}-\mathbf{Y}^H\mathbf{U}^{-1}\mathbf{Y}$.
	
	Using Lemma \eqref{lem1_3}, we can equivalently recast \eqref{P3CSIn-BLx2} as:
	\begin{align} \label{BobLMIx2n}
		\begin{bmatrix}
			\beta_{k} & \mathbf{t}_k^H \\ \mathbf{t}_{k}^H & \mathbf{I}_{K-1}
		\end{bmatrix} \succeq 0, \forall \|\Delta\pmb{u}_k\|_2 \leq \Omega_{k}, \forall k,
	\end{align}
	where $\mathbf{t}_{k} \triangleq \left((\widehat{\pmb{u}}_{k}^H \mathbf{\Phi} \pmb{G}_{_\text{AR}})\mathbf{W}_{-k}\right)^H$, and $\mathbf{W}_{-k} \triangleq [\pmw_1, \dots,\pmw_{k-1},\pmw_{k+1}, \dots,\pmw_{\mathcal{K}}] \in \mathbb{C}^{M \times \mathcal{K}-1}$.
	Next, we use Nemirovski's Lemma \cite{1369660} to further handle \eqref{BobLMIx2n}.
	\begin{mylem} \label{lem1_4}
		(Nemirovski's Lemma) For any Hermitian matrix $\mathbf{A}$, matrices $\mathbf{B}$, $\mathbf{C}$, and $\mathbf{X}$, and scalar $a$, the following LMI holds,
		\begin{align}
			\mathbf{A} \succeq \mathbf{B}^H \mathbf{X} \mathbf{C}+\mathbf{C^H \mathbf{X}^H} \mathbf{B}, ~\text{for} \|\mathbf{X}\| \leq a,
		\end{align}
		if and only if 
		\begin{align} 
			\begin{bmatrix}
				\mathbf{A}- a\mathbf{C}^H \mathbf{C} & -t \mathbf{B}^H \\ -t\mathbf{B} & a \mathbf{I}
			\end{bmatrix} \succeq 0,
		\end{align} 
	\end{mylem}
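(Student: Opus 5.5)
We read the statement in its standard form, since the displayed LMI uses both $a$ and $t$ without introducing $t$. The robust inequality $\mathbf{A} \succeq \mathbf{B}^H \mathbf{X} \mathbf{C}+\mathbf{C}^H \mathbf{X}^H \mathbf{B}$ holds for all $\|\mathbf{X}\|\leq a$ if and only if there exists a scalar $t\geq 0$ with
\begin{align*}
\begin{bmatrix}
\mathbf{A}- t\mathbf{C}^H \mathbf{C} & -a \mathbf{B}^H \\ -a\mathbf{B} & t \mathbf{I}
\end{bmatrix} \succeq 0 .
\end{align*}
The plan is to reduce the matrix inequality to a scalar statement for each test vector $\pmb{v}$, evaluate the worst case over $\mathbf{X}$ in closed form, and then use Lemma \ref{lem1_3} (Schur) for sufficiency and Lemma \ref{lem1_2} ($\mathcal{S}$-procedure) for necessity.

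\emph{Step 1 (worst case over $\mathbf{X}$).} Fix $\pmb{v}$. The robust inequality is equivalent to $\pmb{v}^H\mathbf{A}\pmb{v} \geq \max_{\|\mathbf{X}\|\leq a} 2\Re\{(\mathbf{B}\pmb{v})^H \mathbf{X}(\mathbf{C}\pmb{v})\}$ for all $\pmb{v}$. By Cauchy--Schwarz, and since $\|\mathbf{X}\|_2\le\|\mathbf{X}\|$, the right-hand side is at most $2a\|\mathbf{B}\pmb{v}\|\|\mathbf{C}\pmb{v}\|$. Equality is attained by the rank-one choice $\mathbf{X}=a\,\mathbf{B}\pmb{v}(\mathbf{C}\pmb{v})^H/(\|\mathbf{B}\pmb{v}\|\|\mathbf{C}\pmb{v}\|)$, whose spectral and Frobenius norms both equal $a$. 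Hence the statement does not depend on which matrix norm is meant, and the robust condition is equivalent to
\begin{align*}
\pmb{v}^H\mathbf{A}\pmb{v}\geq 2a\|\mathbf{B}\pmb{v}\|\,\|\mathbf{C}\pmb{v}\| \quad \text{for all } \pmb{v}.
\end{align*}

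\emph{Step 2 (sufficiency).} Suppose the LMI holds.
\begin{itemize}
\item If $t>0$, Lemma \ref{lem1_3} gives $\mathbf{A}-t\mathbf{C}^H\mathbf{C}-(a^2/t)\mathbf{B}^H\mathbf{B}\succeq 0$. The AM--GM inequality $2a\|\mathbf{B}\pmb{v}\|\|\mathbf{C}\pmb{v}\|\leq t\|\mathbf{C}\pmb{v}\|^2+(a^2/t)\|\mathbf{B}\pmb{v}\|^2$ then yields the scalar condition of Step 1.
\item If $t=0$, positive semidefiniteness forces $a\mathbf{B}=0$. In that case the right-hand side of the robust inequality vanishes and $\mathbf{A}\succeq 0$.
\end{itemize}

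\emph{Step 3 (necessity, the main obstacle).} Rewrite the scalar condition as a quadratic inequality in the stacked variable $(\pmb{v},\pmb{u})$:
\begin{align*}
f_0(\pmb{v},\pmb{u})\triangleq \pmb{v}^H\mathbf{A}\pmb{v}-2a\Re\{\pmb{u}^H\mathbf{B}\pmb{v}\}\geq 0 \quad \text{whenever} \quad f_1(\pmb{v},\pmb{u})\triangleq \|\mathbf{C}\pmb{v}\|^2-\|\pmb{u}\|^2\geq 0 .
\end{align*}
Minimising $f_0$ over $\|\pmb{u}\|\leq\|\mathbf{C}\pmb{v}\|$ (take $\pmb{u}$ aligned with $\mathbf{B}\pmb{v}$) recovers exactly the inequality of Step 1, so the two conditions are equivalent. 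Both $f_0$ and $f_1$ are homogeneous Hermitian quadratic forms, and there is a single constraint. Lemma \ref{lem1_2} with $Q=1$ then supplies $t\geq 0$ such that $f_0-t f_1\succeq 0$ as a form, and this form is precisely the block LMI above.

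The delicate point is justifying that the $\mathcal{S}$-procedure is lossless here. We would argue this through the complex-valued S-lemma, which is exact for a single constraint, together with a Slater point: choose $\pmb{v}$ with $\mathbf{C}\pmb{v}\neq 0$ and $\pmb{u}=0$, so that $f_1>0$. If instead $\mathbf{C}=0$, the robust condition reduces to $\mathbf{A}\succeq 0$, and the LMI holds with any $t$ large enough.
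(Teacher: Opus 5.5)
The paper does not prove this lemma; it only cites it, so your argument has to stand on its own. Your repair of the garbled display is the standard reading: the multiplier $t\ge 0$ sits on $\mathbf{C}^H\mathbf{C}$ and $\mathbf{I}$, and the radius $a$ sits on the off-diagonal blocks. Steps 1 and 2 are correct, and so is Step 3 when $\mathbf{C}\neq 0$:
\begin{itemize}
\item the rank-one maximiser removes the spectral-versus-Frobenius ambiguity;
\item the Schur complement plus AM--GM gives sufficiency;
\item with a strictly feasible point, the homogeneous complex S-lemma is lossless, and $f_0-tf_1$ is exactly the block form.
\end{itemize}
You are also right not to use Lemma \ref{lem1_2} verbatim, since as printed it omits the Slater hypothesis.

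The step that fails is your last sentence. For $\mathbf{C}=0$ the robust condition is indeed just $\mathbf{A}\succeq 0$, but the LMI does not follow from it:
\begin{itemize}
\item for $t>0$ the LMI is equivalent (Lemma \ref{lem1_3}) to $\mathbf{A}\succeq (a^2/t)\mathbf{B}^H\mathbf{B}$, which for $a>0$ forces $\ker\mathbf{A}\subseteq\ker\mathbf{B}$;
\item for $t=0$ the LMI forces $a\mathbf{B}=0$.
\end{itemize}
Take scalars $\mathbf{A}=0$, $\mathbf{B}=1$, $\mathbf{C}=0$, $a=1$. The robust inequality $0\ge 0$ holds for every $\mathbf{X}$, yet $\begin{bmatrix}0&-1\\-1&t\end{bmatrix}$ has determinant $-1$ for every $t$.

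So this cannot be patched inside your proof. The ``only if'' direction of the lemma, even in its corrected standard form, is false when $\mathbf{C}=0$. The reason is exactly that the constraint $\|\mathbf{C}\pmb{v}\|^2-\|\pmb{u}\|^2\ge 0$ then has no Slater point. You should state the lemma under the hypothesis $\mathbf{C}\neq 0$, or, when $\mathbf{C}=0$, add the condition $\ker\mathbf{A}\subseteq\ker\mathbf{B}$. With that restriction your proof is complete.
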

	where $a$ is a non-negative real number.
	
	Using Lemma \eqref{lem1_4} and introducing the slack variable  $\pmb{\varkappa} \triangleq [\varkappa_1, \dots, \varkappa_k] \geq 0$, \eqref{BobLMIx2n} is recast as: 
	\begin{align} \label{P3CSI-BLX4n}
		\begin{bmatrix}
			B_{11,k}& \widehat{t}^H_k & \mathbf{0}_{1 \times M }  \\
			\widehat{t}_k & \mathbf{I}_{K-1} & \Omega_{k} (\mathbf{\Phi} \pmb{G}_{_\text{AR}} \pmW_{-k})^H  \\  \mathbf{0}_{M \times 1 } & \Omega_{k} (\mathbf{\Phi} \pmb{G}_{_\text{AR}} \pmW_{-k}) & \varkappa_k \mathbf{I}_{M} 
		\end{bmatrix} \succeq 0, \forall k,
	\end{align}
	where $B_{11,k} \triangleq \beta_k - \sigk - \varkappa_k$, and $\widehat{t}_k \triangleq \left((\widehat{\pmb{u}}_{k} \mathbf{\Phi} \pmb{G}_{_\text{AR}})\mathbf{W}_{-k}\right)^H$.
	
	Next, we tackle \eqref{P6-3} by firstly substituting \eqref{ckedef} into \eqref{P6-3}, and then treat the interference plus noise signal in  \eqref{P6-3} as an auxiliary function $\pmb{\beta}_{e_k}\triangleq [\beta_{e_1},\dots,\beta_{e_k}]$. We hence can recast \eqref{P6-3} as
	\begin{subequations}
		\begin{alignat} {2}
			&\left|(\pmb{u}_{e} \mathbf{\Phi} \pmb{G}_{_\text{AR}})\pmw_{k}\right|^2 \leq (2^{\mu_{k_e}}-1)\beta_{ke}, \forall k,\label{EveLMI1} \\
			&\left\|(\pmb{u}_{e} \mathbf{\Phi} \pmb{G}_{_\text{AR}})\pmW_{-k}\right\|^2 +\sige \geq \beta_{ke}, \forall k. \label{EveLMI2}
		\end{alignat}
	\end{subequations}
	
	To tackle the uncertainties of $\{\Delta\pmb{u}_e\}$ in the constraints \eqref{EveLMI1} and \eqref{EveLMI2}, we use a similar approach as in \eqref{P3CSIn-BLx2}. Hence, the equivalent LMIs of \eqref{EveLMI1} and \eqref{EveLMI2} are, respectively, given by,	
	\begin{alignat}{2}
		\begin{bmatrix}
			C_{e,k}& \widehat{c}_{ke}^H & \mathbf{0}_{1 \times M }  \\
			\widehat{c}_{ke} & 1 & \Omega_e (\mathbf{\Phi} \pmb{G}_{_\text{AR}} \pmw_k)^H  \\  \mathbf{0}_{M \times 1 } & \Omega_{e} (\mathbf{\Phi} \pmb{G}_{_\text{AR}} \pmw_k) & \vartheta_{k} \mathbf{I}_{M} 
		\end{bmatrix} \succeq 0, \forall k, \label{EveLMI11} \\
		\begin{bmatrix}
			C_{11,k}& -\widehat{c}_{ke}^H & \mathbf{0}_{1 \times M }  \\
			-\widehat{c}_{ke} & \mathbf{I}_{K-1} & -\Omega_e (\mathbf{\Phi} \pmb{G}_{_\text{AR}} \pmW_{-k})^H  \\  \mathbf{0}_{M \times 1 } & -\Omega_{e} (\mathbf{\Phi} \pmb{G}_{_\text{AR}} \pmW_{-k}) & \varrho_{k} \mathbf{I}_{M} 
		\end{bmatrix} \succeq 0, \forall k, \label{EveLMI22}
	\end{alignat}
    
	where   $\pmb{\vartheta} \triangleq [\vartheta_{1}, \dots, \vartheta_{K}] \geq 0$ and $\pmb{\varrho} \triangleq [\varrho_{1}, \dots, \varrho_{K}] \geq 0$ are a slack variables,     
	\begin{align*}
		C_{e,k} & \triangleq (\beta_{ke}(2^{\mu_{k_e}}))^{ub}-\beta_{ke}-\vartheta_{k}, \\
		(\beta_{ke}(2^{\mu_{k_e}}))^{ub} & \triangleq ((\mu_{k_e}-\mu_{k_e}^{(\iota)})(\beta_{ke}^{(\iota)}) \ln2 + \beta_{ke})2^{\mu_{k_e}^{(\iota)}}, \\
		\widehat{c}_{ke} &\triangleq ((\widehat{\pmb{u}}_{e}^H \mathbf{\Phi} \pmb{G}_{_\text{AR}})\mathbf{W}_{-k})^H, \\
		C_{11,k} &\triangleq \beta_{ke} - \sige - \varrho_{k}.
	\end{align*}
	% Next, we show the steps to derive the LMI of equations \eqref{P6-4} and \eqref{P7-5x}. Using \eqref{dispfa2},  the inequalities \eqref{xy} and \eqref{xt} in Appendix \ref{AppA}, and by defining $x\triangleq V_i$, $\mathbf{A} \triangleq \rho_{i}$, and $B \triangleq \varsigma_i$, the dispersion factors for $i \in \{k,e\}$ can be expressed as:
    Next, we show the steps to derive the LMI of equations \eqref{P6-4} and \eqref{P7-5x}. Using \eqref{dispfa2}, the dispersion factors are expressed as functions of the variables $(\widehat{\rho}_i,\widehat{\upsilon}_i)$. Then, by applying the inequalities \eqref{xy} and \eqref{xt} in Appendix \ref{AppA}, the dispersion factors for $i \in \{k,e\}$ can be expressed as: 
	\begin{align} 
		\xi_{i}\sqrt{V_{i}}& \leq \frac{\xi_{i} \sqrt{V_{i}^{(\iota)}}}{2}+ \frac{\xi_{i} |(\pmb{u}_i \pmb{\Phi} \pmb{G}_{_\text{AR}})\pmw_k|^2}{\sqrt{V_{i}^{(\iota)}}\left(\|(\pmb{u}_i \pmb{\Phi} \pmb{G}_{_\text{AR}}) \pmW_{k}\|_2^2+\textcolor{black}{\sigma_i^2}\right)},
	\end{align}
	where $\mathbf{W}_k \triangleq [\pmw_{1},\dots,\pmw_{\mathcal{K}}] \in \mathbb{C}^{M \times \mathcal{K}}$. 
	Hence, we can express \eqref{P6-4} and \eqref{P7-5x} as, 
	\begin{align} \label{P3CSI-BFx2x}
		\frac{(\tilde{\varphi}_{i}-\mathfrak{L})}{\mathfrak{Y}} \geq    \frac{|(\pmb{u}_{i} \pmb{\Phi} \pmb{G}_{_\text{AR}})\pmw_{k}|^2}{\|(\pmb{u}_{i} \pmb{\Phi} \pmb{G}_{_\text{AR}}) \pmW_{k}\|_2^2+\sigma_{i}}, ~i \in \{k,e\}, 
	\end{align}
	where $\mathfrak{L} \triangleq \frac{\xi_{i} \sqrt{V_{i}^{(\iota)}}}{2}$, and $\mathfrak{Y} \triangleq {\xi_{i}}/{\sqrt{V_{i}^{(\iota)}}}$. 
	
	Similar to \eqref{P3CSI-BnLx2}, by treating the interference plus noise signal as an auxiliary function $\pmb{\zeta} \triangleq [\zeta_{1}, \dots, \zeta_{K}]$, \eqref{P3CSI-BFx2x} can be expressed as:
	\begin{subequations}
		\begin{alignat} {2}
			&\left|(\pmb{u}_i \pmb{\Phi} \pmb{G}_{_\text{AR}})\pmw_{k}\right|^2  \leq \left({(\tilde{\varphi}_{i}-\mathfrak{L})}/{\mathfrak{Y}}\right) \zeta_{i}, \forall k, i \in \{k,e\}, \label{BobFLMI1}\\
			&\left\|(\pmb{ u}_{i} \pmb{\Phi} \pmb{G}_{_\text{AR}})\pmW_{k}\right\|^2 +\sigma_{i} \geq \zeta_{k}, \forall k, i \in \{k,e\}. \label{BobFLMI2}
		\end{alignat}
	\end{subequations}
	Similar to \eqref{EveLMI22}, we can express \eqref{BobFLMI2} by its equivalent LMI  as: 
	\begin{align} \label{BobFLMI22}
		\begin{bmatrix}
			\widetilde{C}_{11,i}& -\tilde{c}^H_i & \mathbf{0}_{1 \times M }  \\
			-\tilde{c}_i & \mathbf{I}_{K} & -\Omega_i (\mathbf{\Phi} \pmb{G}_{_\text{AR}}) \pmW_{k})^H  \\  \mathbf{0}_{M \times 1 } & -\Omega_{i} (\mathbf{\Phi} \pmb{G}_{_\text{AR}}) \pmW_{k}) & \mathfrak{u}_{i} \mathbf{I}_{M} 
		\end{bmatrix} \succeq 0, j\neq k,
	\end{align}
	where $\widetilde{C}_{11,i} \triangleq \zeta_{k} - \sigk - \mathfrak{u}_{i}$,  $\mathfrak{u} \triangleq [\mathfrak{u}_{1}, \mathfrak{u}_{2}, \dots, \mathfrak{u}_{i}] \geq 0$ is a slack variable, and $\tilde{c}_i \triangleq ((\widehat{\pmb{u}}_{i}^H \mathbf{\Phi} \pmb{G}_{_\text{AR}}))\mathbf{W}_{k})^H$.

	Next, we can express \eqref{BobFLMI1} as the following LMI:
	\begin{align} \label{BobFLMI11x}
		\begin{bmatrix}
			C_{k_f}& \tilde{c}^H_{ke} & \mathbf{0}_{1 \times M }  \\
			\tilde{d}_{i} & 1 & \Omega_e (\mathbf{\Phi} \pmb{G}_{_\text{AR}}) \pmw_k)^H  \\  \mathbf{0}_{M \times 1 } & \Omega_{e} (\mathbf{\Phi} \pmb{G}_{_\text{AR}}) \pmw_k) & \tilde{\mathfrak{u}}_{i} \mathbf{I}_{M} 
		\end{bmatrix} \succeq 0,  j\neq k,
	\end{align}
	where $C_{k_f} \triangleq \frac{\tilde{\varphi}_{k}\zeta_{k}}{\mathfrak{Y}}-\frac{\mathfrak{L}\zeta_{k}}{\mathfrak{Y}} -\tilde{\mathfrak{u}}_{i}$,  $\tilde{\pmb{\mathfrak{u}}} \triangleq [\tilde{\mathfrak{u}}_{1}, \dots, \tilde{\mathfrak{u}}_{K}] \geq 0$ is a slack variable, and $\tilde{d}_{i} \triangleq ((\widehat{\pmb{u}}_{e} \mathbf{\Phi} \pmb{G}_{_\text{AR}})\pmw_{k})^H$.
	
	However, \eqref{BobFLMI11x} is still non-convex due to the non-convex nature of the term $\tilde{\varphi} \zeta_{k}$. Similar to \eqref{P3CSI-BLX3n2}, we employ the SCA technique to tackle the non-convex expression as: 
	\begin{align} \label{upperterm2}
		\left(\tilde{\varphi} \zeta_{i}\right)^{ub} \triangleq \tilde{\varphi}^{(\iota)} \zeta_{i} + (\tilde{\varphi}-\tilde{\varphi}^{(\iota)})\zeta_{i}^{(\iota)},
	\end{align}
	where $\tilde{\varphi}^{(\iota)}$, $\zeta_{i}^{(\iota)}$ are the value of the variables $\tilde{\varphi}_{i}$, $\zeta_{i}$ at iteration $(\iota)$ in the SCA-based algorithm, respectively. Be substituting \eqref{upperterm2} in \eqref{BobFLMI11x}, we can express \begin{algorithm}
		\caption{PCCP-Based Algorithm for Solving Problem $(\mathcal{P}2.5)$}
		\label{alg3}
		\begin{algorithmic}[1]
			\State \textbf{Input:} Initial values $(\pmb{w}^{(1)}, \btheta^{(1)})$, maximum penalty $o_{\text{max}}$, scaling factor $\nu \geq 1$, tolerances $\epsilon_{t_1}, \epsilon_{t_2}$, maximum iterations $\iota_{\text{max}}$.
			\State \textbf{Initialize:} Set iteration index $\iota = 1$.
			\Repeat
			\State Solve $(\mathcal{P}2.3)$ to obtain $\btheta^{(\iota)}$.
			\If{$\left\|e^{(j \thetako)} - e^{(j \thetak)}\right\|_1 \leq \epsilon_{t_1}$ \textbf{and} $Q \leq \epsilon_{t_2}$}
			\State \textbf{Output:} $\btheta^* = \btheta^{(\iota)}$; \textbf{terminate}.
			\Else
			\State Update $o^{(\iota+1)} = \min\{\nu o^{(\iota)}, o_{\text{max}}\}$;
			\State Increment $\iota \gets \iota + 1$.
			\EndIf
			\Until{$\iota > \iota_{\text{max}}$}
			\State \textbf{Restart:} Redefine $\btheta^{(1)}$, choose $\nu > 1$, set $\iota = 1$, and repeat.
			\State \textbf{Output:} Final feasible solution $\btheta^* = \btheta^{(\iota)}$.
		\end{algorithmic}
	\end{algorithm}
    \eqref{BobFLMI11x} as:
	\begin{align} \label{BobFLMI11}
		\begin{bmatrix}
			\widetilde{C}_{k_f}& \tilde{c}^H_{ke} & \mathbf{0}_{1 \times M }  \\
			\tilde{d}_{i} & 1 & \Omega_{e} (\mathbf{\Phi} \pmb{G}_{_\text{AR}}) \pmw_k)^H  \\  \mathbf{0}_{M \times 1 } & \Omega_{e} (\mathbf{\Phi} \pmb{G}_{_\text{AR}}) \pmw_k) & \tilde{\mathfrak{u}}_{i} \mathbf{I}_{M} 
		\end{bmatrix} \succeq 0,  j\neq k,
	\end{align}
	where $\widetilde{C}_{k_f} \triangleq \frac{\left(\tilde{\varphi} \zeta_{k}\right)^{ub}}{\mathfrak{Y}}-\frac{\mathfrak{L}\zeta_{k}}{\mathfrak{Y}} -\tilde{\mathfrak{u}}_{i}$.

	Eventually, reformulating problem $(\mathcal{P}2.3)$ with \eqref{P3CSI-BLX3n3}, \eqref{P3CSI-BLX4n}, \eqref{EveLMI11}, \eqref{EveLMI22}, \eqref{BobFLMI22}, and \eqref{BobFLMI11} yields the following optimization problem:
	\begin{subequations}\label{P6.3}
		\begin{alignat}{2}
			&(\mathcal{P}2.4):  && ~\underset{\pmw}{\max} ~ z, \label{P6.3-1} \\
			&~ \text{s.t.}       && z \leq {\sum}_{k}^{\mathcal{K}}{\varphi}_k - \mu_{k_e} - \tilde{\varphi}_k - \tilde{\varphi}_{k_e}, \forall k, \\
			& && \eqref{P3CSI-BLX3n3}, \eqref{P3CSI-BLX4n}, \eqref{EveLMI11}, \eqref{EveLMI22}, \eqref{BobFLMI22}, \eqref{BobFLMI11}, i \in \{k,e\}, \\
			& && \pmb{\varpi}\geq 0, \pmb{\varkappa} \geq 0, \pmb{\vartheta} \geq 0, \pmb{\varrho} \geq 0, \mathfrak{u} \geq 0, \tilde{\mathfrak{u}} \geq 0, \label{auxvarcon}\\
			& && \eqref{P2-2}, ~\eqref{P2-4},~\eqref{P2-5}.
		\end{alignat}
	\end{subequations}
	At this point, all nonlinear constraints \eqref{P6-2}, \eqref{P6-3}, \eqref{P6-4}, and \eqref{P7-5x} have been linearized as LMIs \eqref{P3CSI-BLX3n3}, \eqref{P3CSI-BLX4n}, \eqref{EveLMI11}, \eqref{EveLMI22}, \eqref{BobFLMI22}, and \eqref{BobFLMI11}; thus, problem ($\mathcal{P}2.4$) is an SDP problem that can be efficiently solved using standard solvers, e.g., the interior point method or the CVX toolbox \cite{grant2014cvx}. 
	\subsubsection{Sub-Problem for Optimizing the PREs} \label{subsec4}
	For a given $\pmw$, we aim to find $\thetako$ such that, $\widehat{R}_{s}(\wko_k,\thetako) \geq \widehat{R}_{s}(\wko_k,\thetak)$. Similar to the analysis in the previous section, we can express $\eqref{P6-2}$ and $\eqref{P6-3}$ with their equivalent LMIs \eqref{P3CSI-BLX3n3}, \eqref{P3CSI-BLX4n}, \eqref{EveLMI11}, and \eqref{EveLMI22}, respectively. To tackle the UMC \eqref{P2-3}, we leverage the PCCP method \cite{9525400}. The PCCP's main idea is to add slack variables to relax the problem so that if the UMC is violated, we can then penalize the sum of violations. The converged PCCP solution is an approximate first-order optimal solution of the original problem \cite{9525400}. 
	
	To apply the PCCP method, we introduce an auxiliary variable set $\mathbf{b} \triangleq \{b_{n}|n \in N\}$ satisfying $b_n \triangleq |e^{(j \theta_n)}|^* |e^{(j \theta_n)}|$. Then, \eqref{P2-3} can be expressed as $b_n \leq |e^{(j \theta_n)}|^* |e^{(j \theta_n)}| \leq b_n $. The non-convex part $b_n \leq |e^{(j \theta_n)}|^* |e^{(j \theta_n)}|$ can be approximated by $b_n \leq 2 \Re \{|e^{(j \theta_n)}|^* |e^{(j \theta^{(\iota)}_n)}| - |e^{(j \theta^{(\iota)}_n)}|^* |e^{(j \theta^{(\iota)}_n)}|\}$. Following the PCCP framework, we penalize the objective function \eqref{P6-1}, hence, problem  $(\mathcal{P}2.4)$ can be recast as:
    \begin{algorithm}
		\caption{Proposed AO Algorithm for Solving Problem $(\mathcal{P}2)$}
		\label{alg4}
		\begin{algorithmic}[1]
			\State \textbf{Initialize:} $(\pmb{w}^{(1)} = \pmw^{*}, \btheta^{(1)} = \btheta^{*}$, ~$\varsigma_{\max}$,~$\varsigma_{e}^{\max}$,~$T_{\max}$), set iteration index $\iota = 1$, and convergence tolerance $\epsilon_t > 0$.
			\Repeat
			\State Update $\pmb{w}^{(\iota+1)}$ by solving problem $(\mathcal{P}2.1)$;
			\State Update $\btheta^{(\iota+1)}$ by solving problem $(\mathcal{P}2.2)$;
			\If{$\dfrac{|R_s(\pmb{w}^{(\iota+1)}, \btheta^{(\iota+1)}) - R_s(\pmb{w}^{(\iota)}, \btheta^{(\iota)})|}{R_s(\pmb{w}^{(\iota)}, \btheta^{(\iota)})} \leq \epsilon_t$}
			\State Set $\pmw^* \gets \pmb{w}^{(\iota+1)}$, $\btheta^* \gets \btheta^{(\iota+1)}$;
			\State \textbf{Output:} $(\pmw^*, \btheta^*)$; \textbf{terminate}.
			\Else
			\State $\iota \gets \iota + 1$
			\EndIf
			\Until{convergence}
		\end{algorithmic}
	\end{algorithm}
	\begin{subequations} \label{P3CSIthBL}
		\begin{alignat} {2}
			&(\mathcal{P}2.5): && \underset{\btheta}{\max}  ~ z - o Q, \label{P3CSIthBL-1} \\
			&~ \text{s.t.}  && \eqref{P3CSI-BLX3n3}, ~\eqref{P3CSI-BLX4n}, ~\eqref{EveLMI11}, ~\eqref{EveLMI22}, ~\eqref{BobFLMI22}, ~\eqref{BobFLMI11}, ~\eqref{P2-4} \\
			& && |e^{(j \theta_n)}|^* |e^{(j \theta_n)}| \leq b_n + c_n, \label{P3CSIthBL-BLx1} \\
			& && b_n - \hat{c}_n \leq 2 \Re \left\{|e^{(j \theta_n)}|^* |e^{(j \theta^{(\iota)}_n)}|\right\}  \nonumber \\ 
			& &&- |e^{(j \theta^{(\iota)}_n)}|^* |e^{(j \theta^{(\iota)}_n)}|, \label{P3CSIthBL-BLx2} \\
			& &&  b_n \geq 0, \forall n \in N, \label{P3CSIthBL-BLx3}
		\end{alignat}
	\end{subequations}
	where $Q \triangleq {\sum}_{n=1}^N c_n+\hat{c}_n$ is the penalty term, $\mathbf{c} \triangleq \{c_n,\hat{c}_n\}$ is the slack variable imposed over the modulus constraint \eqref{P2-3}, and $o$ is the regularization factor. The regularization factor is imposed to control the UMC constraint \eqref{P2-3} by scaling the penalty term $Q$.
	
	Problem $(\mathcal{P}2.5)$ is an SDP problem that can be solved by the CVX toolbox. Unlike the conventional SDR method to tackle UMC, the PCCP method, summarized in Algorithm \ref{alg3}, is guaranteed to find a feasible point for problem $(\mathcal{P}2.5)$ \cite{9774882}.  
	The following points are helpful for the numerical implementation of the Algorithm \ref{alg3}:
	\begin{enumerate}[label=(\alph*)]
		\item We invoke $o_{\text{max}}$ to avoid numerical complications if $o$ grows too large\cite{9505311};
		\item The stopping criteria $Q \leq \epsilon_{t_2}$ guarantees the UMC \eqref{P2-3} when $\epsilon_{t_2}$ is small \cite{9180053};
		\item The convergence of Algorithm \ref{alg3} is controlled by $\left\|e^{(j \btheta^{(\iota+1)})}-e^{(j \btheta^{(\iota)})}\right\|_1 \leq \epsilon_{t_1}$.
	\end{enumerate}
	
	Finally, the pseudo code of the AO algorithm to solve problem ($\mathcal{P}2$) is shown in Algorithm \ref{alg4}. This algorithm converges to a locally optimal solution of ($\mathcal{P}2$), which can be proved in the following theorem.
	\begin{theorem} \label{theo2}
		Algorithm \ref{alg4} generates a locally optimal solution for problem $(\mathcal{P}2)$.
	\end{theorem}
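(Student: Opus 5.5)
The plan is the standard SCA/minorize--maximize argument applied block by block, followed by a KKT-matching argument at the limit point. Write $\widehat{R}_s^{(\iota)} \triangleq \widehat{R}_s(\wk,\thetak)$.

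\textbf{Feasibility.} The first step is to show that every iterate produced by Algorithm \ref{alg4} is feasible for $(\mathcal{P}2)$.

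For the beamforming block:
\begin{itemize}
\item The replacements \eqref{upbeta} and \eqref{upperterm2} are first-order Taylor bounds of terms of the form $\beta 2^{\varphi}$ and $\tilde{\varphi}\zeta$, and they over-estimate the true terms on the relevant side.
\item The lower bound of Lemma \ref{lem1_1} is tight at $(\wk,\thetak)$.
\item Hence the convexified LMIs \eqref{P3CSI-BLX3n3}, \eqref{P3CSI-BLX4n}, \eqref{EveLMI11}--\eqref{BobFLMI11} describe an inner approximation of the semi-infinite constraints \eqref{P6-2}--\eqref{P7-5x}.
\item The $\mathcal{S}$-procedure (Lemma \ref{lem1_2}) and Nemirovski's Lemma (Lemma \ref{lem1_4}) are exact reformulations of the robust constraints over $\|\Delta\pmb{u}_i\|_2\le\Omega_i$.
\item So any solution of $(\mathcal{P}2.4)$ satisfies the original robust constraints for all admissible errors, and its slack $z$ is a lower bound on the worst-case $\widehat{R}_s$.
\item The previous point $(\wk,\thetak)$, with its slacks, is feasible for $(\mathcal{P}2.4)$, because all the surrogates are tight there.
\end{itemize}

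For the phase block:
\begin{itemize}
\item Algorithm \ref{alg3} terminates only when $Q\le\epsilon_{t_2}$, so the UMC \eqref{P2-3} holds up to this tolerance.
\item The same LMIs are retained in $(\mathcal{P}2.5)$, so robust feasibility is preserved.
\end{itemize}

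\textbf{Monotonicity.} Because each surrogate problem contains the previous iterate as a feasible point, its optimal value is at least the previous value. Since the surrogate lower-bounds the true objective, this gives
\[
\widehat{R}_s(\wko,\thetak)\ge \widehat{R}_s^{(\iota)},
\]
and likewise $\widehat{R}_s(\wko,\thetako)\ge \widehat{R}_s(\wko,\thetak)$.

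For the $\btheta$-step, the penalty term $oQ$ has to be handled. The argument is that once the PCCP penalty is saturated at $o_{\max}$ and $Q\to 0$, the penalized objective and $z$ coincide up to $\epsilon_{t_2}$. This follows the convergence results of \cite{9525400,lipp2016variations}.

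The objective is bounded above: by \eqref{P2-2} the SINRs are bounded, and the dispersion penalties are nonnegative. A monotone sequence bounded above converges, so $\widehat{R}_s^{(\iota)}$ converges. By compactness of the feasible set (power ball times torus), a subsequence of the iterates converges to some $(\bar{\pmw},\bar{\btheta})$.

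\textbf{Local optimality (main obstacle).} The hardest step is showing that the limit $(\bar{\pmw},\bar{\btheta})$ is a KKT point, and hence a locally optimal solution, of $(\mathcal{P}2)$. I would proceed as follows.
\begin{enumerate}
\item Verify the standard SCA conditions of \cite{9197675} for each block: the surrogate equals the true function at the expansion point, it bounds the true function, and its gradient matches the true gradient there. These hold for the Taylor bounds \eqref{upbeta} and \eqref{upperterm2}, for Lemma \ref{lem1_1}, and for the PCCP linearization of $|e^{j\theta_n}|^2$.
\item Conclude that the KKT conditions of $(\mathcal{P}2.4)$ and $(\mathcal{P}2.5)$ at the fixed point coincide with those of the robust problem $(\mathcal{P}2.3)$.
\item Use the exactness of the $\mathcal{S}$-procedure and Schur-complement steps to transfer these KKT conditions to $(\mathcal{P}2)$.
\end{enumerate}

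Two technical issues arise here:
\begin{itemize}
\item \emph{Constraint qualification.} A Slater point is needed for the SDPs, so that the multipliers are bounded and the limit argument is valid. I would try to obtain one by a small scaling of $\pmw$ inside the power ball.
\item \emph{Block-wise versus joint stationarity.} Alternating optimization yields block-wise stationarity. To get joint stationarity, I would use the regularity of the two-block structure: each block is solved to its surrogate optimum, as in the proof of Theorem \ref{theo1} (Appendix \ref{AppB}), which gives joint stationarity in the limit.
\end{itemize}
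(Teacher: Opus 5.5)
Your outline has the same architecture as the paper's proof, which for this theorem only refers back to Appendix~\ref{AppB}: monotone ascent through the two surrogate subproblems, then gluing the blockwise KKT conditions at the limit. However, the step everything rests on fails.

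The replacements \eqref{upbeta} and \eqref{upperterm2} are first-order expansions of $(\beta,\varphi)\mapsto\beta 2^{\varphi}$ and $(\tilde{\varphi},\zeta)\mapsto\tilde{\varphi}\zeta$. Both maps have indefinite Hessians, so the expansions are \emph{not} one-sided bounds. Indeed, $\beta 2^{\varphi}-(\beta 2^{\varphi})^{ub}=(\beta-\beta^{(\iota)})(2^{\varphi}-2^{\varphi^{(\iota)}})+\beta^{(\iota)}\bigl(2^{\varphi}-2^{\varphi^{(\iota)}}-2^{\varphi^{(\iota)}}\ln 2\,(\varphi-\varphi^{(\iota)})\bigr)$ and $\tilde{\varphi}\zeta-(\tilde{\varphi}\zeta)^{ub}=(\tilde{\varphi}-\tilde{\varphi}^{(\iota)})(\zeta-\zeta^{(\iota)})$, and both change sign. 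For example, with $\beta^{(\iota)}=1$, $\varphi^{(\iota)}=0$ and $(\beta,\varphi)=(2,1)$, the ``upper bound'' equals $2+\ln 2<4=\beta 2^{\varphi}$. Hence \eqref{P3CSI-BLX3n3} can hold while \eqref{Bobx1} is violated.

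Your claim that the $\mathcal{S}$-procedure/Nemirovski step is an exact reformulation also fails for the reverse-direction constraints \eqref{EveLMI2} and \eqref{BobFLMI2}. A robust \emph{lower} bound on a convex quadratic is nonconvex in $\pmw$: for fixed $\beta>\sigma^2$ its feasible set is symmetric under $\pmw\mapsto-\pmw$ but excludes $\pmw=0$, so no LMI can represent it exactly. In fact, the LMIs \eqref{EveLMI22} and \eqref{BobFLMI22} are congruent, via $\mathrm{Diag}(1,-\mathbf{I},\mathbf{I})$, to the form \eqref{P3CSI-BLX4n}, i.e.\ they impose the opposite inequality.

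Consequences: the feasible set of $(\mathcal{P}2.4)$ is not contained in that of $(\mathcal{P}2.3)$, and its optimal $z$ need not lower-bound the worst-case $\widehat{R}_s(\wko,\thetak)$. Moreover, the previous iterate with its slacks need not be feasible for the next surrogate, since it only satisfied the surrogate it came from. So neither robust feasibility of the iterates nor monotonicity follows. A repair needs genuine majorants/minorants that match value and gradient at the expansion point, for example:
$xy\le\frac{c}{2}x^2+\frac{1}{2c}y^2$ with $c=y^{(\iota)}/x^{(\iota)}$ (taking $x=2^{\varphi}$); $xy\ge\frac{1}{4}\bigl[2(x^{(\iota)}+y^{(\iota)})(x+y)-(x^{(\iota)}+y^{(\iota)})^2-(x-y)^2\bigr]$; and a first-order lower linearization in $\pmw$ of the convex quadratics in \eqref{EveLMI2} and \eqref{BobFLMI2} before the $\mathcal{S}$-procedure is applied.

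Even with valid surrogates, your end-game needs more. A majorization--minimization argument gives at best that limit points are stationary (KKT) points of $(\mathcal{P}2.3)$. ``KKT point, hence locally optimal'' is a non sequitur for a nonconvex problem unless second-order sufficient conditions are verified.

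Gluing the two blockwise KKT systems into a joint one also needs a \emph{single} multiplier vector valid for both blocks. The robust constraints \eqref{P6-2}--\eqref{P7-5x} couple $\pmw$ and $\btheta$, so the two subproblems generally produce different multipliers. Alternating schemes over coupled constraints can stall at non-stationary points, so appealing to the ``regularity of the two-block structure'' does not settle this.

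Finally, the $\btheta$-update is not a single surrogate solve. Algorithm~\ref{alg3} enforces the UMC only up to $Q\le\epsilon_{t_2}$ and may restart from a fresh $\btheta^{(1)}$. It therefore yields neither an exactly feasible point of $(\mathcal{P}2)$ nor a guaranteed ascent step.

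The paper's one-line proof asserts exactly these steps without justification, so agreeing with it does not make them valid. After repairing the surrogates, what can honestly be established is monotone convergence of the objective and stationarity of limit points of a suitably modified scheme.
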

	\textit{Proof}: See Appendix \ref{AppD}.
	\subsubsection{Complexity Analysis}
	Since convex problems {$(\mathcal{P}2.1)$} and {$(\mathcal{P}2.2)$} involve linear constraints and LMIs {$\eqref{P3CSI-BLX3n3}, \eqref{P3CSI-BLX4n}, \eqref{EveLMI11}, \eqref{EveLMI22}, \eqref{BobFLMI22}$ and $\eqref{BobFLMI11},$} they can be solved by the interior point method\cite{grant2014cvx}. The algorithm's complexity is defined by its worst-case runtime and the number of operations \cite{9180053}. For problem {$(\mathcal{P}2.1)$}, the number of variables is ${c}\triangleq 2M$, the size of \eqref{P3CSI-BLX3n3} is ${a_1}\triangleq MN+M+1$, the size of \eqref{P3CSI-BLX4n}, and \eqref{EveLMI22} is ${a_2}\triangleq 2M+1$, and the size of  \eqref{EveLMI11} is ${a_3}\triangleq 2M+1$. Thus, the complexity of solving problems {$(\mathcal{P}2.1)$} and {$(\mathcal{P}2.2)$} can be, respectively, calculated as follows:
    \begin{algorithm}
		\caption{Proposed AO Algorithm for Solving Problem $(\mathcal{P}3)$} \label{alg7}
		\begin{algorithmic}[1]
			\State \textbf{Initialize:} $\pmb{\text{w}}^{(1)}$,~ $\btheta^{(1)}$,~$\varsigma_{\max}$,~$\varsigma_{e}^{\max}$,~$T_{\max}$, convergence tolerance $\epsilon_t > 0$,  and Set $\iota=1$. 
			\State \textbf{Repeat} 
			\State Update $\wko$ by solving problem $(\mathcal{P}2.2)$, and $\thetako_n$ using the PCCP algorithm;
			\State \textbf{if} $\dfrac{|P_T^{(\iota+1)}-P_T^{(\iota)}|}{P_T^{(\iota)}} \leq \epsilon_t$.
			\State \textbf{Then} $\thetak\leftarrow  \thetako_n $, $\wk\leftarrow \wko$ and terminate. 
			\State \textbf{Otherwise} $\iota\leftarrow \iota+1$ and continue.
			\State \textbf{Output} $(\wk_k, \thetak)$.
		\end{algorithmic}
	\end{algorithm}
	\begin{align}
		\mathcal{O}_{\pmw} &\triangleq \mathcal{O}\left\{\sqrt{{b_1}+2} ({c})({c}^2+{c} {b_2}+{b_3}+({c}+1)^2)\right\}, \label{comx1}\\ 
		\mathcal{O}_{\pmb{\Phi}} &\triangleq \mathcal{O}\left\{\sqrt{{b_1}+4N} (2N)(4N^2+2N {b_2}+{b_3}+4NM)\right\} \label{comx2},
	\end{align}
	where 
	\begin{align*}
		{b_1} & \triangleq {\sum}_{k}^{\mathcal{K}}({a_1}+{a_2})+2k({a_2}+{a_3})+(k-2)({a_1}+{a_3}), \\
		{b_2} & \triangleq {\sum}_{k}^{\mathcal{K}}({a_1}^2+{a_2}^2)+2k({a_2}^2+{a_3}^2)+(k-2)({a_1}^2+{a_3}^2), \\
		{b_3}&\triangleq{\sum}_{k}^{\mathcal{K}}({a_1}^3+{a_2}^3)+2k({a_2}^3+{a_3}^3)+(k-2)({a_1}^3+{a_3}^3).
	\end{align*}
    
    \section{Dealing with Eve's Unknown CSI} \label{sec5}
    In this scenario, we tackle the problem when Alice cannot obtain Eve's CSI.  To ensure secure communication, we aim to minimize the required transmit power $P_T$ while satisfying a QoS constraint, i.e., $\gamma_k \ge \gamma_e$. This approach enables the use of the remaining transmit power at Alice as AN to further degrade Eve's SINR $\gamma_{e}$. In contrast to existing works \cite{9146177,9764813}, which focus on a LBR setting, whereas our formulation considers an FBR environment characterized by inherently low latency and short transmission duration. Accordingly, the total transmit power $P_{tot}$ can be expressed as follows:
    \begin{align} \label{respo}
		P_{tot} \triangleq P_{T}+P_{J},
	\end{align}
	where the residual power $P_J\triangleq\sum_{k=1}^{\mathcal{K}}\mathbf{E}\{\|\pmb{\mathbf{Vr}}\|^2\}$ is used as AN, and $\mathbf{Vr}$ is the AN vector, while $\mathbf{V} \in \mathbb{C}^{M \times M-1}$ is the null space of the user's channel with $\pmb{h}_{k}(\btheta) \mathbf{V} = 0$, and $\mathbf{r} \in \mathbb{C}^{M-1}$ is a random vector whose elements are independent Gaussian random variables with zero mean and variance $\sigma_r \triangleq \frac{P_J}{M-1}$. Hence, we can express the SINR at user-$k$ as in \eqref{SINRex}, and Eve's eavesdropping rate as:
	\begin{align} \label{SINRAN}
		\widehat{\gamma}_e(\pmw,\btheta)&\triangleq\frac{|{\widehat{\pmb{h}}}_{e}(\btheta)\pmw_k|^2}{{\widehat{\rho}}_e+|\widehat{\pmb{h}}_{e}(\btheta)\mathbf{Vr}|^2}. 
	\end{align}
	To that end, we can formulate the problem as:
	\begin{subequations} \label{PAB}
		\begin{alignat}{2}
			&(\mathcal{P}3):  &&\underset{\pmw,\btheta}{\min} ~ P_T, \label{PAB-1} \\
			&~\text{s.t.}  &&\widehat{\gamma}_k(\pmw,\btheta)-\xi_k \sqrt{\widehat{V}_{k}} \geq \gamma_{th} \sigk, \label{PAB-4} \\
			& &&\eqref{P2-2}, \eqref{P2-3}, \eqref{P2-4}. \label{PAB-2}
		\end{alignat}
	\end{subequations}
	where $\gamma_{th}$ represents the QoS constraint at user-$k$. Due to the coupling of $\pmw$ and $\btheta$ in \eqref{PAB-4}, problem $(\mathcal{P}3)$ is a nonconvex problem. 
	\subsubsection{Sub-Problem for Optimizing the Beamforming Vectors} \label{subsec1-AN}
    To solve the newly formulated problem $(\mathcal{P}3)$ under unknown CSI, we adopt an AO-based approach, where the beamforming vector $\pmw$ and IRS phase shifts $\btheta$ are iteratively optimized. Under this setting, problem $(\mathcal{P}3)$ can be recast as:
\begin{subequations} \label{PAB02}
    \begin{alignat}{2}
        &(\mathcal{P}3.1):  &&\underset{\pmw}{\min} ~ P_T, \label{PAB02-1} \\
        &~\text{s.t.}  && \left|({\pmb{u}}_k \pmb{\Phi} {\pmb{G}_\text{AR}})\pmw_{k}\right|^2 \geq (2^{\gamma_k}-1)\hat{\beta}_{k},~ \forall k, \label{PAB02-3} \\
        & && \left\|({\pmb{u}}_k \pmb{\Phi} {\pmb{G}_\text{AR}})\pmW_{-k}\right\|^2 + \sigk \leq \hat{\beta}_{k},~ \forall k, \label{PAB02-4} \\
        & &&\left|(\pmb{u}_k \pmb{\Phi} \pmb{G}_{_\text{AR}})\pmw_{k}\right|^2  \leq \left(\frac{\tilde{\varphi}_{k}-\mathfrak{L}}{\mathfrak{Y}}\right) \hat{\beta}_{k},~ \forall k,  \label{BobFLMI1pt}\\
        & &&\left\|(\pmb{u}_{k} \pmb{\Phi} \pmb{G}_{_\text{AR}})\pmW_{k}\right\|^2 +\sigma_{k} \geq \hat{\beta}_{k},~ \forall k, \label{BobFLMI1pt2} \\
        & &&\eqref{P2-2}, {\eqref{P2-4}}, \eqref{P2-5}. \label{PAB02-5}
    \end{alignat}
\end{subequations}
where $\hat{\pmb{\beta}} \triangleq [\hat{\beta}_1, \dots, \hat{\beta}_{K}]$ denotes an auxiliary variable. Following the same procedure used to linearize constraints \eqref{P3CSI-BLx3}, \eqref{P3CSIn-BLx2}, \eqref{BobFLMI1}, \eqref{BobFLMI2}, and \eqref{BobFLMI2} in Section~\ref{sec4}, constraints \eqref{PAB02-3}, \eqref{PAB02-4}, \eqref{BobFLMI1pt}, and \eqref{BobFLMI1pt2} can be equivalently expressed as their LMIs given by \eqref{P3CSI-BLX3n3}, \eqref{P3CSI-BLX4n}, \eqref{BobFLMI22}, and \eqref{BobFLMI11}. Consequently, problem $(\mathcal{P}3.1)$ can be rewritten as:
\vspace{-0.1cm}
\begin{subequations} \label{PAB03}
    \begin{alignat}{2}
        &(\mathcal{P}3.2):  &&\underset{\pmw}{\min} ~ P_T, \label{PAB03-1} \\
        &~\text{s.t.}  && \eqref{P2-2}, \eqref{P2-4}, \eqref{P2-5}, \eqref{P3CSI-BLX3n3}, \eqref{P3CSI-BLX4n}, \eqref{BobFLMI22}, \eqref{BobFLMI11x}, \eqref{auxvarcon}.
    \end{alignat}
\end{subequations}
Problem $(\mathcal{P}3.2)$ is a convex problem and can be efficiently solved using the CVX toolbox.
	\subsubsection{Sub-Problem for Optimizing the PREs}
    Following a similar approach, our goal is to optimize $\thetako$ to minimize $P_T$ while ensuring the QoS constraint in \eqref{PAB-4}, the uncertainty constraint \eqref{P2-4}, and the FBR constraints \eqref{P2-5} are satisfied. This task can be interpreted as a search for a feasible $\btheta$ that simultaneously meets the QoS condition \eqref{PAB-4}, the user minimum constraint \eqref{P2-3}, and the FBR requirement \eqref{P2-5}, where the resulting solution may correspond to the optimal point. Subsequently, the residual power defined in \eqref{respo}, together with the PCCP described in Algorithm~\ref{alg3}, is utilized to compute the optimal $\btheta$.
    
	Finally, the algorithm to solve problem ($\mathcal{P}3$) is described in Algorithm \ref{alg7}. 
	\begin{theorem}
		The solution obtained using the AO Algorithm~\ref{alg7} is a locally optimal solution to problem $(\mathcal{P}3)$.
	\end{theorem}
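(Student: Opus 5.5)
The argument follows the standard inner-approximation / majorization--minimization template, and the earlier theorems use the same one. The strategy is to show that the iterates of Algorithm~\ref{alg7} produce a monotonically non-increasing, bounded sequence of transmit powers $P_T^{(\iota)}$. Any limit point must then satisfy the Karush--Kuhn--Tucker (KKT) conditions of $(\mathcal{P}3)$. Three properties of the convex surrogates carry the proof:
\begin{itemize}
\item the Lemma~\ref{lem1_1} lower bound for $|(\pmb{u}_k\pmb{\Phi}\pmb{G}_{_\text{AR}})\pmw_k|^2$;
\item the Taylor upper bounds \eqref{upbeta} and \eqref{upperterm2};
\item the linearization of the dispersion term via \eqref{xy}--\eqref{xt}.
\end{itemize}
Each of these is a global bound on the corresponding nonconvex function and is tight at the current iterate $(\wk,\thetak)$, with matching gradients there. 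The exact equivalences supplied by the $\mathcal{S}$-procedure (Lemma~\ref{lem1_2}), the Schur complement (Lemma~\ref{lem1_3}) and Nemirovski's lemma (Lemma~\ref{lem1_4}) add no further approximation. Consequently, every point feasible for $(\mathcal{P}3.2)$ is feasible for $(\mathcal{P}3.1)$, and hence for $(\mathcal{P}3)$ for every $\|\Delta\pmb{u}_k\|_2\le\Omega_k$.

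Step one is the beamforming update. By tightness, $(\wk,\thetak)$, together with the auxiliary variables $\hat{\beta}_k^{(\iota)}$ and the slack variables obtained at iteration $\iota$, is feasible for $(\mathcal{P}3.2)$ built around it. The minimizer $\wko$ therefore satisfies $P_T(\wko)\le P_T(\wk)$, and $\wko$ is feasible for the original QoS/FBR constraint \eqref{PAB-4}.

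Step two is the IRS update. With $\wko$ fixed, the current $\thetak$ is feasible for the PCCP subproblem with zero slack ($c_n=\hat{c}_n=0$, $b_n=1$). The PCCP therefore returns a $\thetako$ that keeps \eqref{PAB-4} feasible. Because $\btheta$ does not enter $P_T$ directly, the IRS update cannot increase the power, and it enlarges the QoS margin that the next beamforming step can exploit. I would invoke the PCCP convergence result of \cite{9525400,lipp2016variations}: as the penalty weight $o$ grows to its cap $o_{\max}$, the slack penalty $Q\to 0$, so the limit satisfies the unit-modulus constraint \eqref{P2-3} and is a KKT point of the $\btheta$-subproblem.

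Combining the two steps, $P_T^{(\iota+1)}\le P_T^{(\iota)}$ and $P_T\ge0$, so the sequence converges. Compactness of the feasible set (power budget \eqref{P2-2} and unit modulus) gives a convergent subsequence of iterates.

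The main obstacle is upgrading convergence of the objective to local optimality, i.e.\ a KKT point of $(\mathcal{P}3)$. My first attempt is the Marks--Wright argument for inner approximations. At a fixed point $(\bar{\pmw},\bar{\btheta})$ the surrogate constraints are tight, and their gradients coincide with those of the original constraints. The KKT conditions of the convex surrogates, which hold under Slater's condition guaranteed by the strictly feasible slack variables, then transfer to KKT conditions of $(\mathcal{P}3)$ with respect to each block.

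Block-wise KKT is not yet joint KKT. To close that gap I would note that the AO limit is a coordinate-wise stationary point. The constraint functions are continuously differentiable in $(\pmw,\btheta)$, and the block multipliers can be combined, in the manner of Appendix~\ref{AppD}, into a joint stationarity condition. I would make this rigorous under a regularity assumption such as the Mangasarian--Fromovitz constraint qualification (MFCQ), stated explicitly. The PCCP's approximate handling of \eqref{P2-3} is absorbed by the tolerance $\epsilon_{t_2}$.
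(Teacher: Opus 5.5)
Your route is the paper's: a monotone objective, followed by block-wise KKT conditions merged into joint KKT. Two steps of it do not hold.

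\textbf{The inner-approximation claim fails.} Your monotonicity step rests on the claim that every point feasible for $(\mathcal{P}3.2)$ is feasible for $(\mathcal{P}3)$, and this is false for the surrogates you cite.
\begin{enumerate}
\item The expansions \eqref{upperterm2}, and \eqref{upbeta} whenever $\varphi_k$ is a decision variable, are first-order Taylor expansions of functions with indefinite Hessians, so they are not global bounds. For the bilinear term, $\tilde{\varphi}\zeta-(\tilde{\varphi}\zeta)^{ub}=(\tilde{\varphi}-\tilde{\varphi}^{(\iota)})(\zeta-\zeta^{(\iota)})$, which takes either sign. In \eqref{upbeta}, fixing $\beta_k=\beta_k^{(\iota)}>0$, convexity of $2^{\varphi}$ makes the ``upper bound'' strictly \emph{smaller} than $\beta_k2^{\varphi_k}$ for every $\varphi_k\neq\varphi_k^{(\iota)}$. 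The constraint is therefore relaxed rather than restricted.
\item The matrix reformulations are not all exact. Schur's complement and Nemirovski's lemma only certify inequalities of the type $\|\cdot\|^2+\sigma^2\le(\cdot)$. Flipping the sign of the off-diagonal blocks is a congruence, so \eqref{BobFLMI22} encodes $\zeta_k\ge\|(\pmb{u}_k\pmb{\Phi}\pmb{G}_{_\text{AR}})\pmW_k\|^2+\sigma_k^2$ for all admissible $\Delta\pmb{u}_k$. This is the opposite of the reverse-convex constraint \eqref{BobFLMI1pt2} (equivalently \eqref{BobFLMI2}) it is meant to represent.
\end{enumerate}
Consequently $\wko$ need not satisfy \eqref{PAB-4}. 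The next surrogate is only tight at $(\wko,\thetako)$, so an infeasible iterate is infeasible for it too, and $P_T^{(\iota+1)}\le P_T^{(\iota)}$ does not follow. A repair needs genuine one-sided approximations. For the bilinear term you can use $xy\ge\frac14[2(x^{(\iota)}+y^{(\iota)})(x+y)-(x^{(\iota)}+y^{(\iota)})^2-(x-y)^2]$. For the reverse-convex term you need a Lemma~\ref{lem1_1}-type linear minorant of $\|(\pmb{u}_k\pmb{\Phi}\pmb{G}_{_\text{AR}})\pmW_k\|^2$ before applying the $\mathcal{S}$-procedure.

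\textbf{The closing step fails as well.} Block-wise KKT points do not combine into a KKT point of $(\mathcal{P}3)$ under MFCQ, or even under LICQ. Writing \eqref{PAB-4} as $g_k(\pmw,\btheta)\le 0$, these constraints couple $\pmw$ and $\btheta$, and the multipliers of the two blocks need not coincide. For example, take $\min\,-(x+y)$ s.t.\ $x+2y\le 3$, $0\le x,y\le 3$. Every point of the active segment is coordinatewise optimal with LICQ holding, the block multipliers are $1$ and $1/2$, and only $(3,0)$ is a KKT point.

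In $(\mathcal{P}3)$ the issue is acute because $P_T$ does not depend on $\btheta$. The $\btheta$-step is then a pure feasibility problem whose KKT conditions hold with zero multipliers and carry no information. Joint stationarity would additionally require $\sum_k\lambda_k\nabla_{\btheta}g_k(\bar{\pmw},\bar{\btheta})$, with the $\lambda_k$ of the $\pmw$-block, to vanish along the unit-modulus manifold. Nothing in Algorithm~\ref{alg7} enforces this.

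Indeed, as you note yourself, $\thetak$ is feasible for the PCCP subproblem with $Q=0$, so it is an admissible output of the $\btheta$-step. The iterates may then simply converge to a stationary point of the $\pmw$-subproblem for the initial $\btheta^{(1)}$, which is generally not a KKT point of $(\mathcal{P}3)$. Two further claims are also unsupported. The algorithm as specified does not ensure that the IRS step ``enlarges the QoS margin''. With the penalty capped at $o_{\max}$, the PCCP theory does not by itself give $Q\to 0$.

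The paper's own argument (Appendix~\ref{AppD}, referring to Appendix~\ref{AppB}) follows your route and conceals the same gap by writing a single multiplier vector $\pmb{Z}$ in both block KKT systems. A genuine KKT or local-optimality conclusion requires changing the $\btheta$-step so that it improves a merit function that depends on $\btheta$, such as the $\lambda$-weighted QoS margin. Otherwise the provable conclusion is only blockwise stationarity.
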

	\textit{Proof}: See Appendix \ref{AppD}.
	\subsubsection{Complexity Analysis} Algorithm  \ref{alg7} involves linear constraints and LMIs $\eqref{P3CSI-BLX3n3}, \eqref{P3CSI-BLX4n}, \eqref{EveLMI11}, \eqref{EveLMI22}$, hence, its complexity is similar to Algorithm \ref{alg4}, which can be expressed as in \eqref{comx1} and \eqref{comx2}. 
    
	\section{Numerical Results} \label{sec6}
	\subsection{Parameter Setting}
	\begin{table}
\centering
\caption{Numerical Parameters}
\renewcommand{\arraystretch}{1.2}
\setlength{\tabcolsep}{6pt}

\begin{tabular}{| p{0.5\linewidth} | p{0.4\linewidth} |}
\hline
\textbf{Parameter} & \textbf{Numerical Value} \\
\hline

Noise power spectral density ($\sigma_i^2$) & -174 dBm/Hz \\
\hline
Alice transmit power ($P_T$) & 20 dBm \\
\hline
Antenna gains ($G_{\text{A}}, G_{\text{IRS}}$) & 5 dBi \\
\hline
Bandwidth ($\mathcal{B}$) & 1 MHz \\
\hline
Decoding error threshold ($\varsigma_{\max}$) & $10^{-5}$ \\
\hline
Information leakage threshold ($\varsigma_e^{\max}$) & $10^{-5}$ \\
\hline
Rician K-factor & 3 \\
\hline
Convergence tolerances ($\epsilon_t$, $\epsilon_{t_1}$, $\epsilon_{t_2}$) & $10^{-3}$ \\
\hline
Initialization parameters & $\beta_k^{(\iota)}=1$, $\beta_{ke}^{(\iota)}=1$, $o^{(1)}=10$, $o_{\max}=30$ \\
\hline

\end{tabular}
\label{T2}
\end{table}

	In this section, we perform an extensive simulation to evaluate the performance of the proposed approach. The results are obtained using MATLAB and CVX toolboxes.  In this setup, Alice is located at $(15,0,15)$, the IRS is located at $(0,25,40)$, and the users are randomly distributed to the right of Alice over a $(60m \times 60m)$ area. The Eve (e.g., a compromised legitimate user) is randomly located in $(100m \times 100m)$ outside the users' area. Note that if Eve is too close to one of the users, it is mainly impossible to guarantee the positive SR for all the users \cite{6772207}. In this case, other methods, e.g., encryption or friendly jamming, can provide users' secrecy \cite{9402750}. 
	
	The Alice-to-IRS direct path loss factor is ${\beta_{_{\text{{AR}}}}} \triangleq  \pmb{G}_{\text{\text{A}}}+\pmb{G}_{\text{IRS}}-35.9-22 \log_{10}(d_{_{\text{{AR}}}})$ in dB \cite{BOL19,kammoun2020asymptotic}, where $d_{_{\text{{AR}}}}$ is the distance between Alice and the IRS in meters, $\pmb{G}_{\text{\text{A}}}$ is Alice antenna gain, and $\pmb{G}_{\text{IRS}}$ is the IRS elements' antenna gain. The path loss factor from the IRS to the users and the Eve is ${\beta_{\text{R}i}} \triangleq  \pmb{G}_{\text{IRS}}-33.05-30 \log_{10}(d_{\text{R}i})$ dB \cite{BOL19,kammoun2020asymptotic}, for $i\in\{k,e\}$, $d_{\text{R}i}$ is the distance between the IRS and the users and Eve in meters. 
	The spatial correlation matrix is $[\pmb{R}_{Ri}]_{l,\bar{l}} \triangleq  \exp{(j\pi (l-\bar{l}) \sin \hat{\vartheta} \sin \hat{\aleph})}$ for $i\in \{k,e\}$, where $\hat{\aleph}$ is the elevation angle and $\hat{\vartheta}$ is the azimuth angle \cite{kammoun2020asymptotic}. The elements of the Alice-to-IRS channel are generated by $[\pmb{G}_{_{\text{{AR}}}}]_{a,b} \triangleq  \exp({j\pi \left((b-1) \sin\overline{\Theta}_b \sin \overline{\vartheta}_b+(a-1) \sin(\Theta_n) \sin(\vartheta_b)\right)})$, where ${\Theta}_n \in (0,2 \pi)$, ${\vartheta}_n \in (0,2 \pi)$, and $\overline{\Theta}_n \triangleq  \pi-\Theta_n$, and $\overline{\vartheta}_n \triangleq  \pi+\vartheta_n$ \cite{kammoun2020asymptotic}. The small-scale fading channel gain $\widehat{\pmb{u}}_{i}$ for $ i \in \{k,e\}$ is modeled as a Rician fading channel with K-factor$=3$.
	
	We define the CSI error bounds as $\Omega_{i} \triangleq \delta_i \|\widehat{\pmb{u}}_i\|_2, \forall k$, where $\delta_i \in [0,1), i\in \{k,e\}$ is the relative amount of CSI uncertainty. When $\delta=0$, Alice can obtain the perfect CSI of the IRS to users/Eve reflected channel. In the FBR, we set $\varsigma_{\max}=\varsigma_{e}^{\max}=10^{-5}$. The packet length $N_t$ is defined by the transmission duration and the bandwidth, as previously defined. Hence, we set the transmission duration as $T=0.1$ ms, which is suitable for FBR transmission \cite{ben18}. The choice of $0.1$ ms end-to-end delay ensures having a quasi-static channel during FBR communication \cite{sh17cross}. Unless stated otherwise, the simulation parameters are defined in Table \ref{T2}. We multiply the results by $\log_2(e)$ to convert them to bps/Hz. Finally, we compare the proposed imperfect CSI framework with the perfect CSI case as an FBR baseline. The perfect CSI formulation is solved using a step-descent method combined with a direct search procedure for handling the unit modulus constraint, whereas the imperfect CSI case relies on an LMI-SCA based beamforming update together with a PCCP-based IRS phase optimization. This comparison provides a fair, consistent, and FBR-compliant evaluation of performance under identical secrecy models and system assumptions.
    % \textcolor{black}{Finally, we compare the results of this study with those in our previous work \cite{10437125826}, which addressed the FBR scenario under perfect CSI using a step descent algorithm. In contrast, the proposed approach employs a SCA, $S$-procedure framework combined with the PCCP algorithm, enabling robust optimization under imperfect CSI conditions.}
    % \textcolor{black}{Finally, given \cite{10437125826}, which investigated the FBR scenario under perfect CSI using a gradient step-descent algorithm, we compare the results obtained by our proposed algorithm with those in \cite{10437125826}. Unlike \cite{10437125826}, our proposed algorithm adopts an SCA, and $S$-procedure-based framework integrated with the PCCP algorithm, enabling robust optimization under imperfect CSI conditions.}

	\subsection{Performance Evaluation}
	\begin{figure}[t!]
		\centering
		\includegraphics[width=0.5\textwidth]{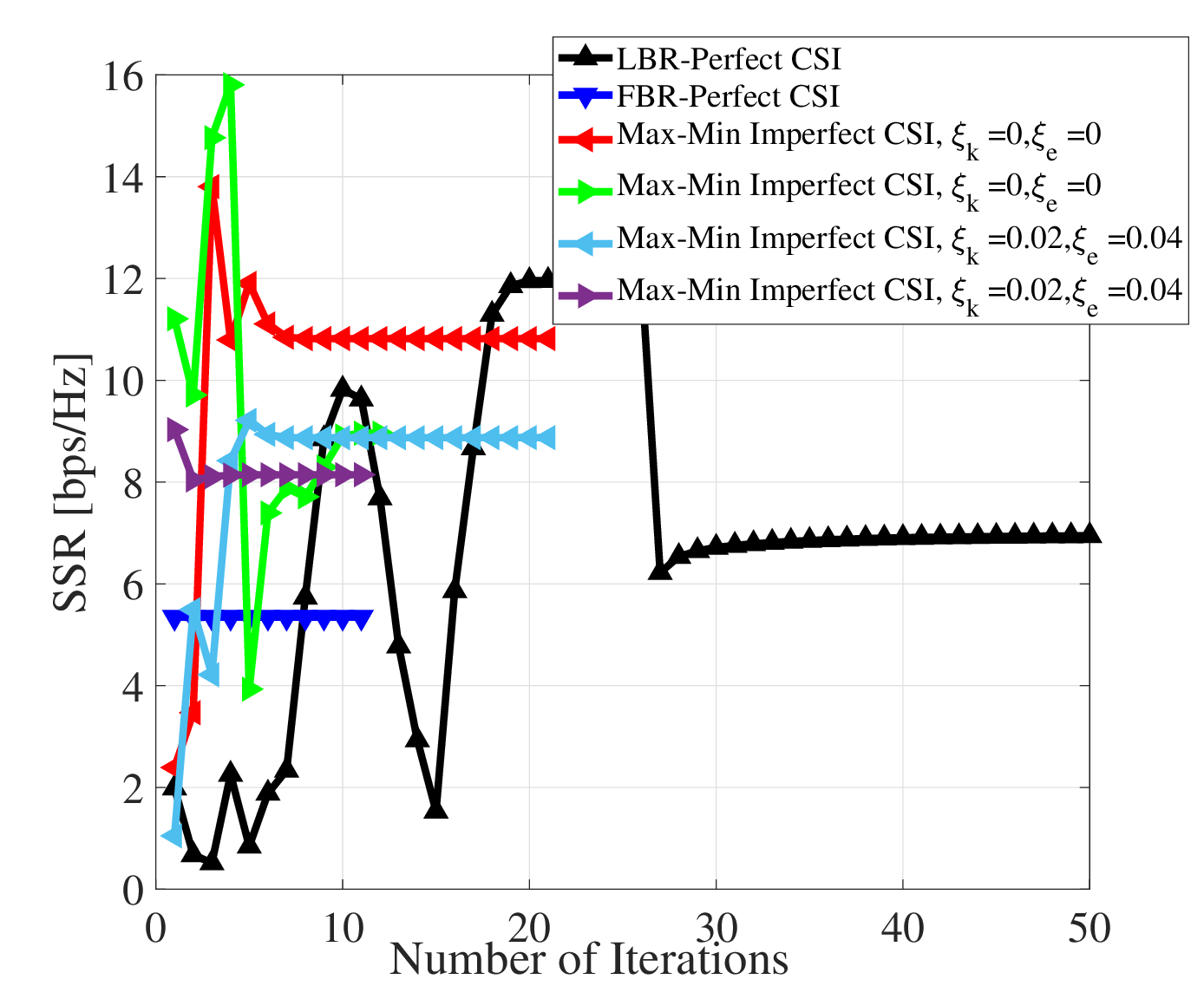}
		\caption{Convergence rate of the SSR algorithms under perfect and Imperfect CSI with $M=10$, $\mathcal{K}=6$, $N=16$.}
		\label{Fig2}
	\end{figure}
	
	% Fig. \ref{Fig2}  illustrates the convergence rate of the SSR algorithms under perfect CSI and Imperfect CSI for the FBR and the LBR systems. \textcolor{black}{It can be observed that the proposed FBR Algorithm \ref{alg4} requires fewer iterations to achieve convergence compared to the perfect CSI-based SSR maximization algorithm %in \cite{10437125826}.
 %    This result demonstrates the enhanced efficiency of the proposed approach, which achieves faster convergence while preserving solution optimality.} Furthermore, the FBR algorithm needs fewer iterations to converge since the local optimal solution obtained in the LBR case is used as the initial feasible point for the FBR counterparts. Such a choice highlights the importance of choosing the best initial feasible points in the FBR case. 
    Fig. \ref{Fig2} illustrates the convergence rate of the SSR algorithms under perfect CSI and Imperfect CSI for the FBR and the LBR systems. It can be observed that the proposed FBR Algorithm \ref{alg4} requires fewer iterations to achieve convergence compared to the perfect CSI-based SSR maximization algorithm. 
This result demonstrates the enhanced efficiency of the proposed approach, which achieves faster convergence while preserving solution optimality. Moreover, the fast convergence reduces the computational complexity associated with updating the beamforming vectors and IRS phase shifts, thereby enabling the optimization procedure to be completed within the stringent latency requirements of URLLC under quasi-static channel conditions. Furthermore, the FBR algorithm needs fewer iterations to converge since the local optimal solution obtained in the LBR case is used as the initial feasible point for the FBR counterparts. Such a choice highlights the importance of choosing the best initial feasible points in the FBR case.
	
	One way to evaluate the system performance in the proposed algorithms is by using the arithmetic mean. The arithmetic mean is defined as, $\text{arithmetic mean}= \frac{1}{\mathcal{K}}{\sum}_{k=1}^{\mathcal{K}}\mathcal{S}_{k}^{\mathcal{F}}(\pmw,\btheta)$. Fig. \ref{Fig3} plots the achieved arithmetic mean while varying the number of Alice's antennas $M$, with $\mathcal{K}=7$, $N=10$, and $\delta_k=\delta_e=0.02$. As expected, the proposed SSR algorithms' achieved arithmetic mean increases with $M$, since increasing $M$ provides more degrees of freedom, hence it increases the users' SINR. The SSR-FBR algorithms achieve a lower arithmetic mean than their LBR-SSR counterpart due to the FBR constraints, namely, the transmission duration and the latency.

	\begin{figure}[t!]
		\centering
		\includegraphics[width=0.5\textwidth]{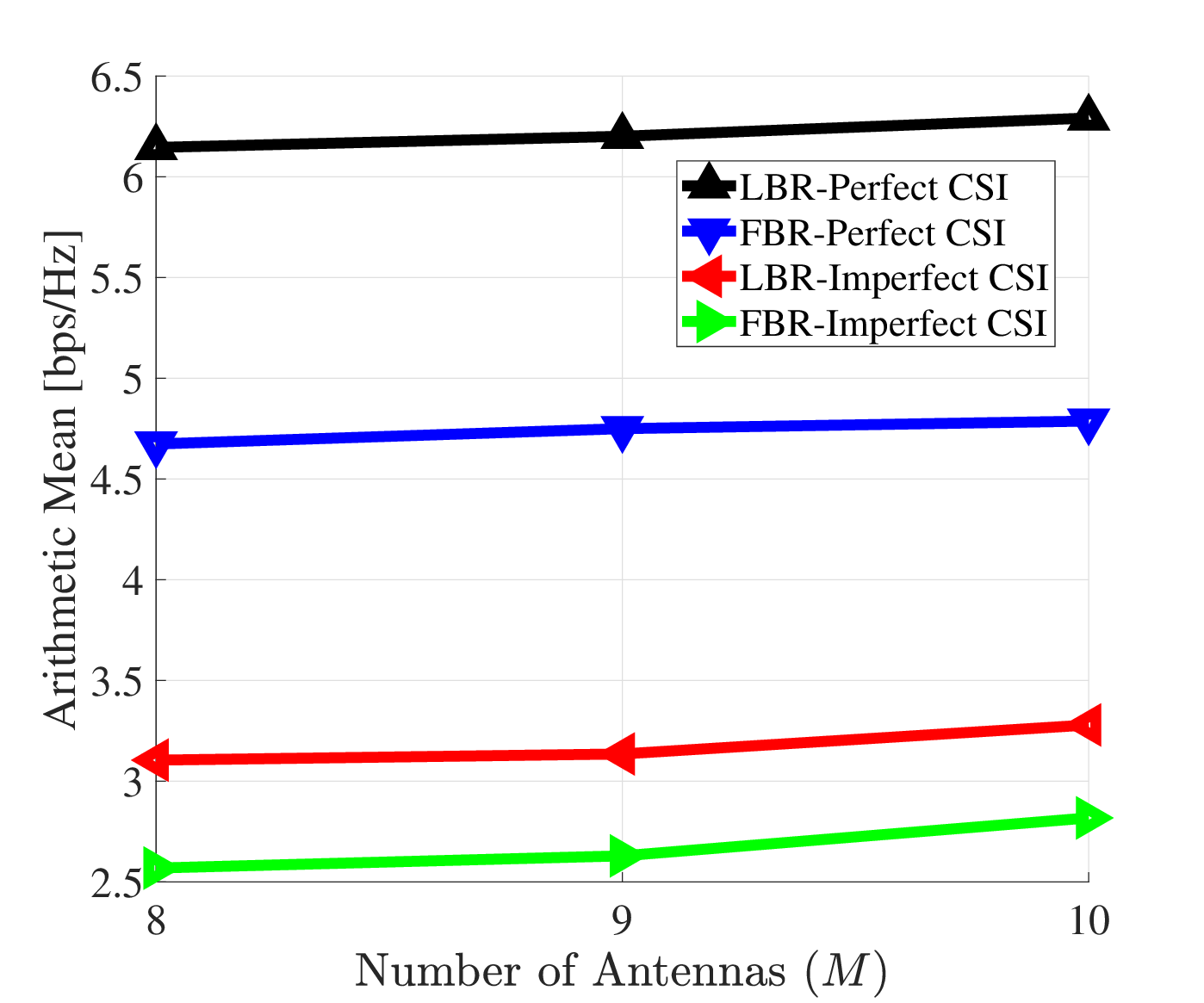}
		\caption{SR Arithmetic mean Vs the number of Alice's antennas $M$ with $ \mathcal{K}=7, N=10$, and $\delta_k=\delta_e=0.02$.}
		\label{Fig3}
	\end{figure}
	% \begin{figure}[t!] 
	% 	\centering
	% 	\includegraphics[width=0.5\textwidth]{Figures/Numericalresults/ICSI/AMloopkcombined.eps}
	% 	\caption{SR Arithmetic mean against the number of users $\mathcal{K}$ with $M=10$, $N=16$, and $\delta_k=\delta_e=0.02$.}
	% 	\label{Fig4}
	% \end{figure}
	\begin{figure}[t!] 
		\centering
		\includegraphics[width=0.5\textwidth]{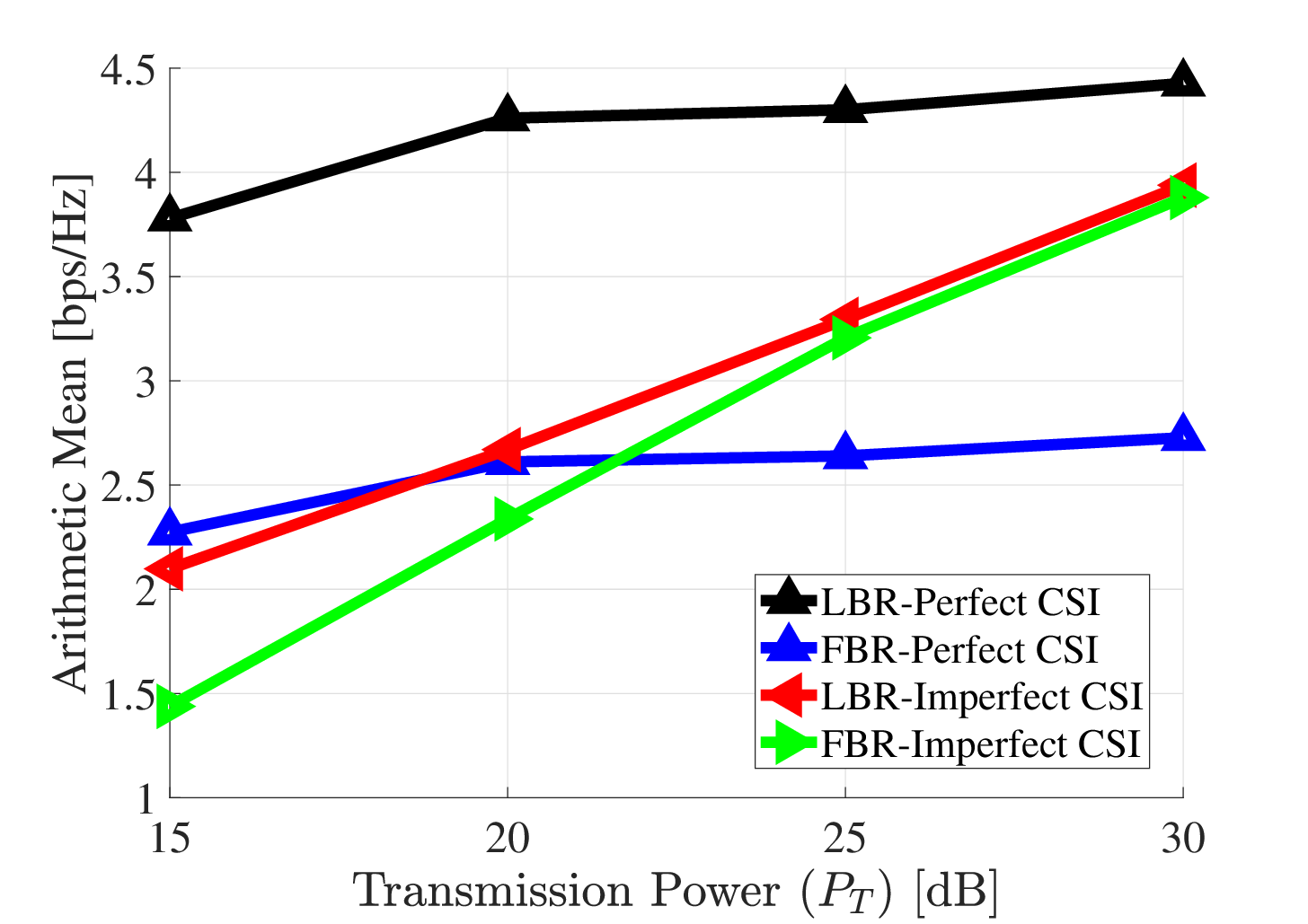}
		\caption{SR Arithmetic mean against the transmission power budget $P_T$ with $M=10$, $\mathcal{K}=7$, $N=16$, and $\delta_k=\delta_e=0.02$.}
		\label{Fig5}
	\end{figure}
	\begin{figure}[t!] 
		\centering
		\includegraphics[width=0.5\textwidth]{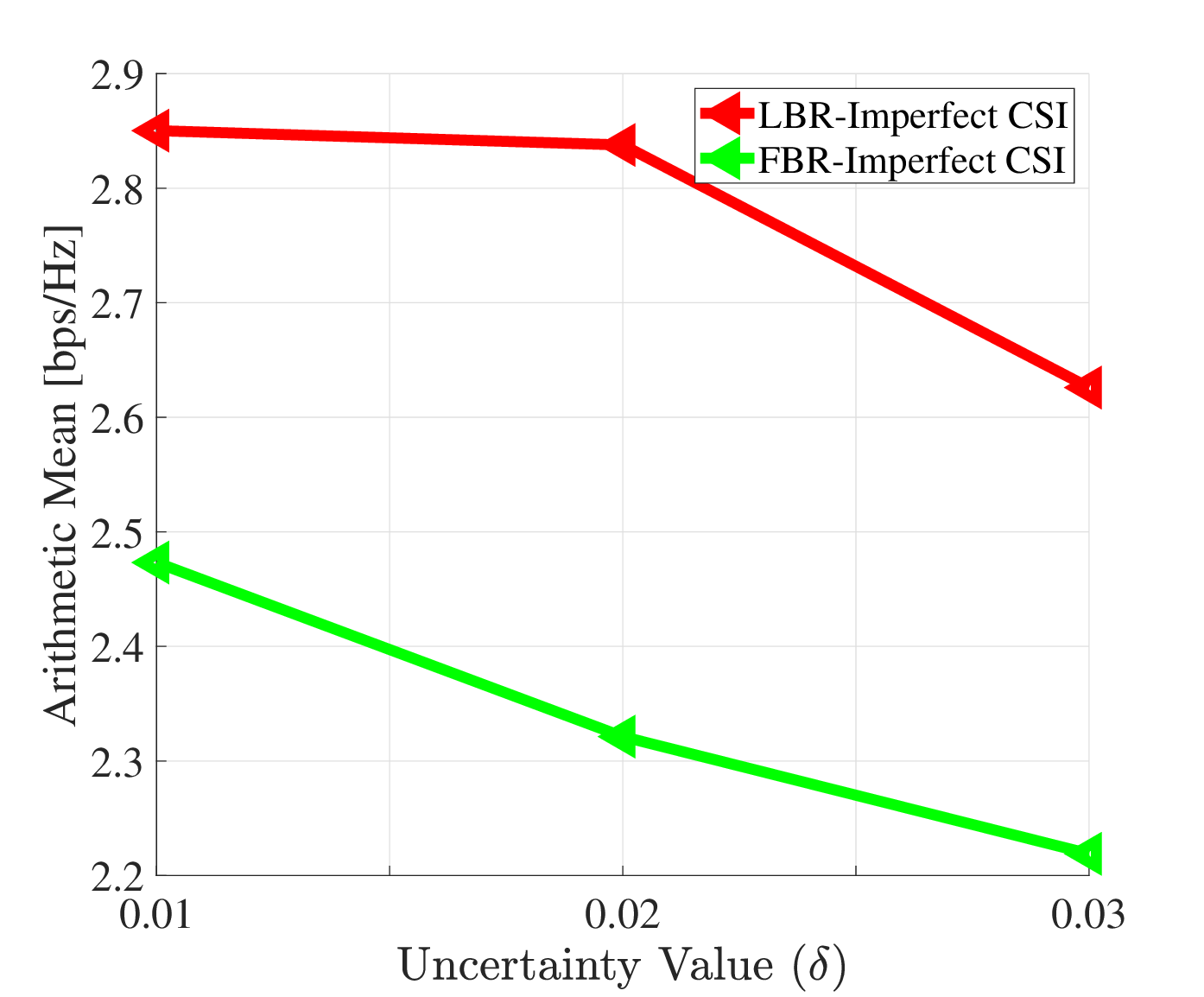}
		\caption{SR Arithmetic mean Vs. the uncertainty level $\delta_k=\delta_e$ with $M=10$, $\mathcal{K}=7$, and $N=10$.}
		\label{Fig6}
	\end{figure}
	% \begin{figure}[t!] 
	% 	\centering
	% 	\includegraphics[width=0.5\textwidth]{Figures/Numericalresults/ICSI/minSRcombinedloopt2.eps}
	% 	\caption{user's min SR Vs. Transmission Duration $T$ with $M=10$, $\mathcal{K}=6$ $N=16$, and $\delta_k=\delta_e=0.02$.}
	% 	\label{Fig7}
	% \end{figure}
	\begin{figure}[t!] 
		\centering
		\includegraphics[width=0.5\textwidth]{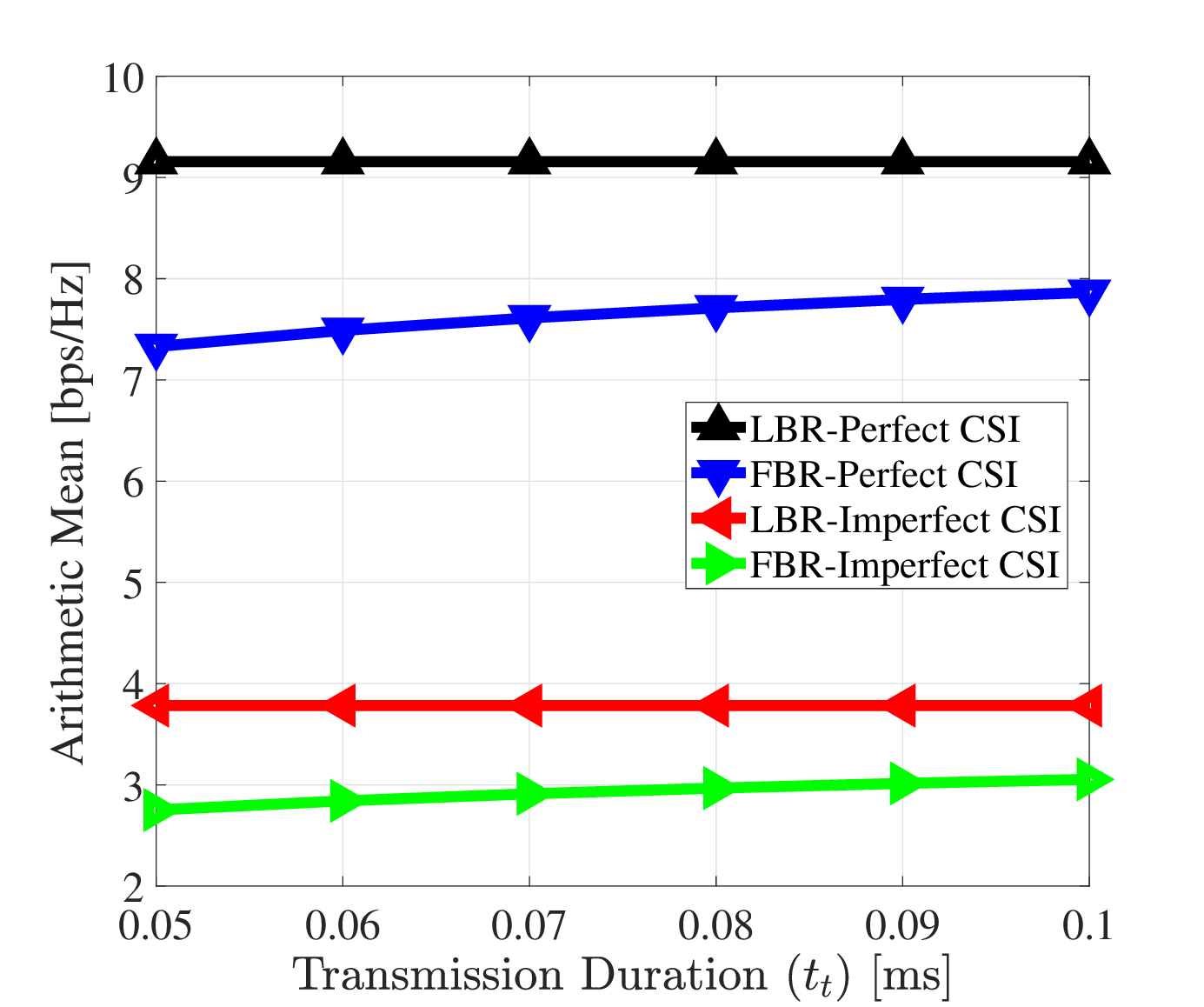}
		\caption{SR Arithmetic mean Vs. Transmission Duration $T$ with $M=10$, $\mathcal{K}=6$ $N=16$, and $\delta_k=\delta_e=0.02$.}
		\label{Fig8}
	\end{figure}
	
	% Fig. \ref{Fig4} portrays the arithmetic mean against the number of users $\mathcal{K}$, with $M=10$, $N=16$, and $\delta_k=\delta_e=0.02$. From this figure, one can notice that the achieved arithmetic mean obtained by all algorithms decreases with $\mathcal{K}$. This is mainly due to the fact that the inter-user interference increases with  $\mathcal{K}$, hence the arithmetic mean decreases. In addition, the FBR counterparts suffer from the same effect as the LBR parts, since the FBR initial feasible point is generated from the optimal solution obtained by solving its LBR counterparts. It can be noticed that the imperfect CSI SSR algorithms achieve a lower arithmetic mean than their perfect CSI counterparts. This is because the imperfect CSI lowers the users' SINR, thus the arithmetic mean tends to decrease. 

    \begin{figure}[t!] 
		\centering
		\includegraphics[width=0.5\textwidth]{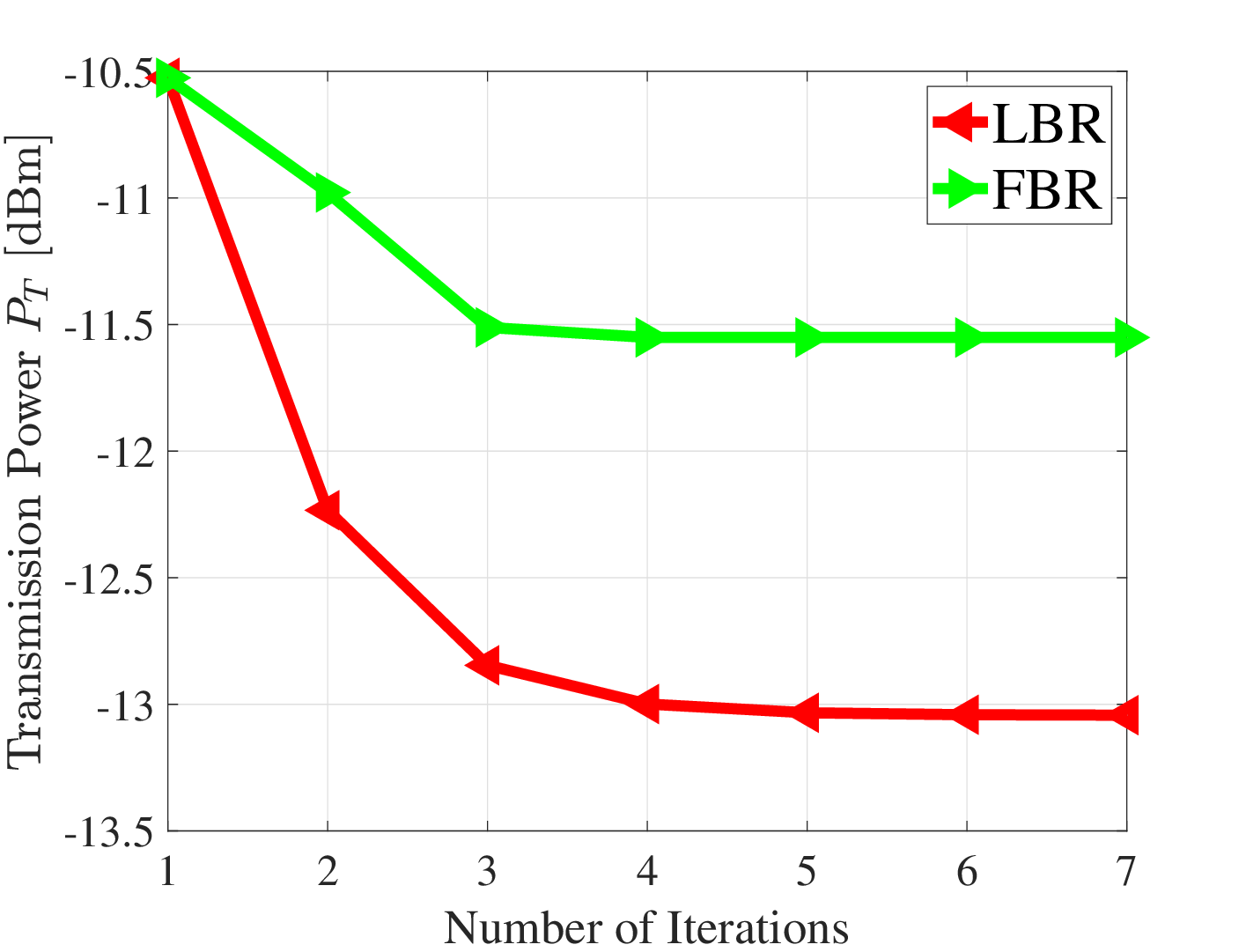}
		\caption{Convergence rate of the power minimization algorithm with $M=10$, $\mathcal{K}=6$ $N=20$, and $\delta_k=\delta_e=0.02$.}
		\label{Fig9}
	\end{figure}
    \begin{figure}[t!] 
		\centering
		\includegraphics[width=0.5\textwidth]{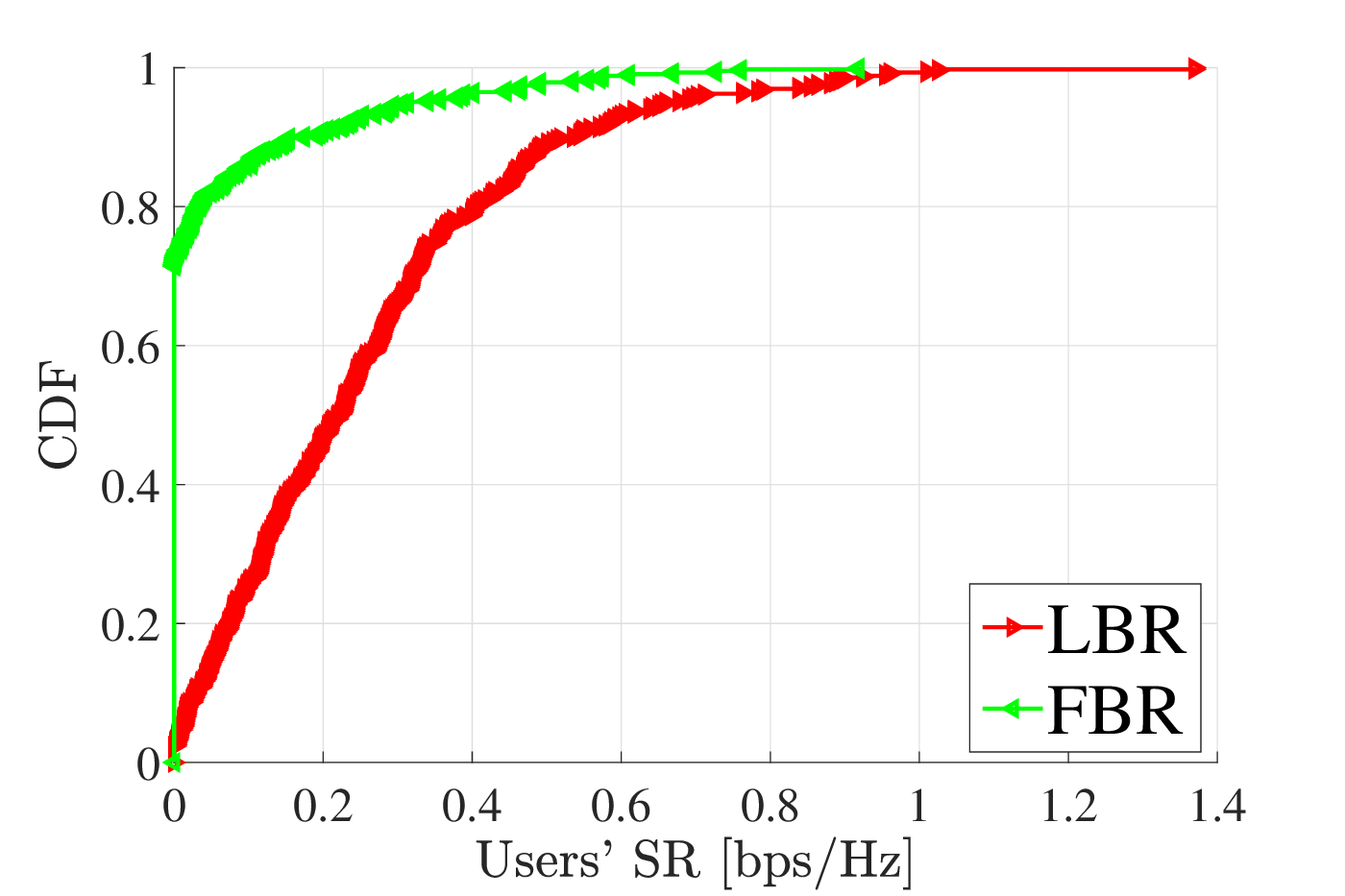}
		\caption{SR Arithmetic mean Vs. Transmission Duration  with $M=10$, $\mathcal{K}=5$ $N=20$, and $\delta_k=\delta_e=0.02$.}
		\label{Fig10}
	\end{figure}

	Fig. \ref{Fig5} illustrates the arithmetic mean against the transmission power budget $P$ with $M=10$, $\mathcal{K}=7$, $N=16$, and $\delta_k=\delta_e=0.02$. The achieved arithmetic mean obtained by all algorithms increases with $P$. As we can see, increasing $P$ improves the users' SINR; hence, the arithmetic mean increases. The FBR algorithms illustrate a similar behavior to that of their LBR counterparts, as expected.

	The arithmetic mean is illustrated against the relative amount of CSI uncertainty $\delta$ in Fig. \ref{Fig6}. The achieved arithmetic mean in the LBR and the FBR algorithm decreases with $\delta$. This is expected since, as the uncertainty increases, the users' SINR decreases.  However, even under high uncertainty, the proposed algorithms achieve system secrecy, even under the strict constraints of the FBR case. The obtained results demonstrate the robustness of the proposed LBR/FBR algorithm.

	% The minimum SR and the SSR arithmetic mean are illustrated against the transmission duration $T$ in Fig. \ref{Fig7} and Fig. \ref{Fig8}, respectively. Both the minimum SR and the SSR arithmetic mean increase with $T$. Even at low $T$, our proposed algorithms can provide system secrecy, which demonstrates the advantages of our algorithms. Finally, it can be noticed that the achieved LBR-SR serves as an upper bound for the SR-FBR counterpart, since the LBR assumes that the transmission duration $T = \infty$.

    The  SSR arithmetic mean is illustrated against the transmission duration $T$ in Fig. \ref{Fig8}. The SSR arithmetic mean increase with $T$. Even at low $T$, our proposed algorithms can provide system secrecy, which demonstrates the advantages of our algorithms. Finally, it can be noticed that the achieved LBR-SR serves as an upper bound for the SR-FBR counterpart, since the LBR assumes that the transmission duration $T = \infty$.

    Fig. \ref{Fig9} illustrates Algorithm \ref{alg7} convergence rate under both the FBR and the LBR. The proposed algorithm exhibits a stable and monotonic convergence pattern in both regimes, confirming its numerical robustness and effectiveness. It can be observed that the FBR case consistently converges to a higher transmit power level compared to its LBR counterpart. This is expected, as in the FBR regime the users'/Eve's dispersion factors introduce additional secrecy constraints in the optimization process, which consequently increases the required transmit power to ensure secrecy.

        To evaluate the distribution of users' secrecy rates, we illustrate the empirical cumulative distribution function (CDF) of the SR
    $\widehat{\mathcal{S}}_{k}^{\mathcal{F}}(\mathbf{w}, \mathbf{\theta})$ in Fig.~\ref{Fig10} for the LBR and the FBR. The empirical CDF is defined as:
    \begin{align}
        F_{\widehat{\mathcal{S}}_k^{\mathcal{F}}}(r) 
        &\triangleq \Pr\Big(\widehat{\mathcal{S}}_{k}^{\mathcal{F}}(\mathbf{w}, \mathbf{\theta}) \le r \Big), \\
        &\triangleq \frac{1}{N_s} \sum_{i=1}^{N_s} 
        \mathbf{1}\Big\{ \widehat{\mathcal{S}}_{k}^{\mathcal{F},(i)}(\mathbf{w}, \mathbf{\theta}) \le r \Big\}, 
        \quad r \in [0, \widehat{\mathcal{S}}_k^{\mathcal{F},\max}],
    \end{align}
    where $\widehat{\mathcal{S}}_{k}^{\mathcal{F},(i)}(\mathbf{w}, \mathbf{\theta})$ denotes the SR of user $k$ in the $i$-th simulation trial, $N_s$ is the total number of realizations, $\mathbf{1}\{\cdot\}$ is the indicator function, and $\widehat{\mathcal{S}}_k^{\mathcal{F},\max}$ is the maximum achievable per-user SR. It can be seen that the LBR CDF consistently lies to the right of the FBR CDF, reflecting higher system's SR across all realizations. This is expected, as the LBR regime is not constrained by finite blocklength or strict latency requirements. In contrast, the FBR regime must satisfy stringent blocklength and latency constraints, which necessitates prioritizing users with favorable channel conditions. To meet these constraints, the algorithm adaptively concentrates the available $P_T$ on users with favorable channel realizations. This strategic allocation ensures the system-level secrecy, demonstrating the algorithm's robustness and effectiveness under strict FBR conditions.
    \begin{figure*}[!t] 
		\normalsize 
		\begin{align}
			&\ln\left|I_n+[\pmb{\Lambda}]^2(\pmb{\digamma})^{-1}\right|\geq \ln\left|I_n+[\hat{\Lambda}]^2(\hat{\digamma})^{-1}\right|
			-\la [\hat{\Lambda}]^2(\hat{\digamma})^{-1}\ra+2\Re\{\la \hat{\Lambda}^H(\hat{\digamma})^{-1}\mathbf{A}\ra\}
			-\la (\hat{\digamma})^{-1}-(\hat{\digamma}+[\hat{\Lambda}]^2)^{-1},[\mathbf{A}]^2+\pmb{\digamma}\ra , \tag{80} \label{fund1}  \\
			&\ln(1+\sum_{i=1}^{l} |{z}_i|^2) \geq \ln(1+\sum_{i=1}^{l} |\bar{z}_i|^2)-\sum_{i=1}^{l} |\bar{z}_i|^2 +\sum_{i=1}^{l} 2 \Re\{\bar{z}_i^* {z}_i\} - \frac{\sum_{i=1}^{l} |\bar{z}_i|^2 \left(1+\sum_{i=1}^{l} |{z}_i|^2\right)}{1+\sum_{i=1}^{l} |\bar{z}_i|^2}.  \tag{81} \label{fund2}
		\end{align} 
		\hrulefill 
	\end{figure*}
	\section{Conclusions} \label{conc}
In this paper, we proposed a framework to achieve system secure communications under the FBR constraints for IoT settings in URLLC applications aided by an IRS. Specifically, by employing linearization and various non-convex optimization techniques, we developed computationally efficient algorithms to maximize the sum secrecy rate (SSR) under perfect and imperfect CSI from the IRS to the users, as well as perfect, imperfect, and unknown CSI from the IRS to the eavesdropper. Extensive simulation results demonstrated that the proposed SSR maximization algorithm ensures secure communication under FBR constraints even with imperfect CSI, while the power minimization algorithm maintains system secrecy when Eve's CSI is unknown or unavailable. 

The confidentiality of the proposed design is ensured through the explicit incorporation of the FBR-SR model, which accounts for both decoding error probability and information leakage. By jointly optimizing the beamforming vectors and IRS reflective coefficients, the proposed algorithms minimize the eavesdropper's ability to decode confidential messages while satisfying users' QoS constraints. Furthermore, the residual transmit power is strategically utilized as AN in scenarios where Eve's CSI is unknown, further degrading Eve's SINR and guaranteeing information-theoretic secrecy across all CSI conditions. These mechanisms collectively ensure confidentiality within the FBR regime, where short packet lengths and latency constraints pose significant challenges compared to traditional LBR-based designs.
It is worth noting that if Eve is positioned too close to a legitimate user, guaranteeing secrecy for all users becomes infeasible. In such cases, additional countermeasures such as cryptography-based or friendly jamming schemes may be required. Future work could extend the framework to \emph{multi-IRS-aided} networks with cooperative phase-shift optimization and coordinated beamforming~\cite{vmimo}. Furthermore, testing under spatially correlated channels, as encountered in urban IRS-user links, would provide a more realistic evaluation of secrecy-rate performance and prevent overestimation of gains. Overall, the proposed methods offer a scalable and robust foundation for secure IRS-aided URLLC networks under practical CSI imperfections and FBR constraints.

	\begin{appendices}
		\section{Proof of theorem 1} \label{AppB}
        
        \textcolor{black}{We first verify that the surrogate function used in problems $(\mathcal{P}1.3)$ 
and $(\mathcal{P}1.4)$ is a valid SCA surrogate for the original FBR-SR function. 
The surrogate $\widetilde{\mathcal{S}}_{k}^{\mathcal{F}}(\mathbf{w},\boldsymbol{\theta})$ 
in \eqref{SRFBRtot} is obtained by applying first-order Taylor expansions to the smooth 
functions $C_k$, $C_e$, and $V_i$ around the $\iota$ iterate 
$(\mathbf{w}^{(\iota)},\boldsymbol{\theta}^{(\iota)})$. Since 
$\gamma_i(\mathbf{w},\boldsymbol{\theta})$ is quadratic in $\mathbf{w}$ and 
linear in $\boldsymbol{\theta}$, all three functions are continuously differentiable and have Lipschitz-continuous gradients \cite{scutari2014parallel}. Hence, for a finite constant $L$, the 
linearization error of the dispersion term satisfies
\begin{align}
\big| V_i(\mathbf{w},\boldsymbol{\theta}) -
\hat{V}_i(\mathbf{w},\boldsymbol{\theta}\mid 
\mathbf{w}^{(\iota)},\boldsymbol{\theta}^{(\iota)}) \big|
\le L \|(\mathbf{w},\boldsymbol{\theta}) - 
(\mathbf{w}^{(\iota)},\boldsymbol{\theta}^{(\iota)})\|^2,
\end{align}
Moreover, the surrogate satisfies the standard 
SCA conditions:
\begin{align}
\widetilde{\mathcal{S}}_{k}^{\mathcal{F}}(\mathbf{w},\boldsymbol{\theta})
\le \mathcal{S}_{k}^{\mathcal{F}}(\mathbf{w},\boldsymbol{\theta}),
\end{align}
and
\begin{align}
\nabla \widetilde{\mathcal{S}}_{k}^{\mathcal{F}}(\mathbf{w}^{(\iota)},
\boldsymbol{\theta}^{(\iota)})
= \nabla \mathcal{S}_{k}^{\mathcal{F}}(\mathbf{w}^{(\iota)},
\boldsymbol{\theta}^{(\iota)}).
\end{align}
Thus, \eqref{SRFBRtot} is a valid SCA surrogate for the original FBR-SR function.
}    

        \textcolor{black}{Next, to prove that the solution obtained by Algorithm \ref{alg1} is a locally optimal solution for problem $(\mathcal{P}1)$, we} begin by demonstrating that the sequence $R_s(\wko, \thetako)$ is non-decreasing, i.e., $R_s(\wko, \thetako) \geq R_s(\wk, \thetak)$ for all $\iota > 0$. To prove this, we start by computing $\wko$ by solving problem $(\mathcal{P}1.3)$ with the aid of the CVX solver. Since the CVX solver guarantees the optimal solution, we can conclude that $R_s(\wko, \pmt) \geq R_s(\wk, \btheta)$ holds. Subsequently, by solving problem $(\mathcal{P}1.4)$, we compute $\thetako$, where it is again true that $R_s(\pmw, \thetako) \geq R_s(\pmw, \thetak)$, as previously established.
		
        Thus, by combining the results from solving problems $(\mathcal{P}1.3)$ and $(\mathcal{P}1.4)$, we obtain the following inequality:
        \begin{align} \label{convxtot}
    R_s(\wko, \thetako) \geq R_s(\wk, \thetak),
    \end{align}
    which demonstrates that the optimal sequence $\{(\wko, \thetako)\}$ converges to the point $\{({\pmw}^*, {\btheta}^*)\}$, which represents the solution obtained from solving both $(\mathcal{P}1.3)$ and $(\mathcal{P}1.4)$.

Next, we proceed to establish that the converged point $\hat{X}^* \triangleq \{\pmw^*, \btheta^*\}$ is a locally optimal solution to problem $(\mathcal{P}1)$. To achieve this, we demonstrate that the converged point satisfies the Karush-Kuhn-Tucker (KKT) conditions for the problem. The KKT condition for problem $(\mathcal{P}1.4)$ is satisfied at $\btheta^*$. Let $\pmb{Y}(\btheta)$ represent the objective function of problem $(\mathcal{P}1.4)$, and let $\pmb{T}({\pmb{X}}) = [\pmb{T}_1({\pmb{X}}), \pmb{T}_2({\pmb{X}}), \dots, \pmb{T}_I({\pmb{X}})]$ be the set of constraints for problem $(\mathcal{P}1)$. Then, the following holds:
\begin{align} \label{con1}
&\nabla_{\btheta^*} \pmb{Y}({\pmb{X}}^*) + \pmb{Z}^T \nabla_{\btheta^*} \pmb{T}({\pmb{X}}^*) = 0, \\
&z_i \geq 0, \, z_i \pmb{T}({\pmb{X}}^*) = 0, \forall i. \nonumber
\end{align}
where $\nabla_s$ denotes the gradient with respect to $s$, and $\pmb{Z} = [z_1, z_2, \dots, z_I]$ represents the set of optimal Lagrange multipliers. Similarly, the solution to problem $(\mathcal{P}1.3)$ is also locally optimal, and thus its KKT condition is satisfied with respect to $\pmb{W} = \pmw_k^*$. Specifically, we have:
\begin{align} \label{con2}
&\nabla_{\pmb{W}^*} \pmb{Y}({\pmb{X}}^*) + Z^T \nabla_{\pmb{W}^*} \pmb{T}({\pmb{X}}^*) = 0, \\
&z_i \geq 0, \, z_i \pmb{T}({\pmb{X}}^*) = 0, \forall i. \nonumber
\end{align}
By combining the conditions in \eqref{con1} and \eqref{con2}, we derive the following:
\begin{align}
&\nabla_{{\pmb{X}}^*} \pmb{Y}({\pmb{X}}^*) + Z^T \nabla_{{\pmb{X}}^*} \pmb{T}({\pmb{X}}^*) = 0, \\
&t_i \geq 0, \, z_i \pmb{T}({\pmb{X}}^*) = 0, \forall i, \nonumber
\end{align}
which corresponds to the KKT conditions for problem $(\mathcal{P}1)$ as presented in \eqref{P1-1}. Therefore, the converged point $\hat{\pmb{X}}^*$ is indeed the local optimal solution for problem $(\mathcal{P}1)$. $\blacksquare$
		\section{Proof of Lemma 1} \label{apendixB}
		Let $q$ be a scalar complex variable, and $a^{(\iota)}$ is the fixed point obtained at iteration $(\iota)$, then the following inequality holds \cite{9110587}:
		\begin{align}
			|a|^2 \geq a^{*,(\iota)} a +a^{*} a^{(\iota)} - a^{*,(\iota)} a^{(\iota)}.
		\end{align}
		By substituting $q$ with $({\pmb{u}}_k \pmb{\Phi} \pmb{G}_{_\text{AR}})\pmw_k$ we obtain \eqref{ICSIeqw}. Thus, this concludes the proof. $\blacksquare$
		\section{Proof of theorem 2} \label{AppD}
		Similar to the proof of Theorem \ref{theo1}, it can be shown the sequence $\widehat{R}_{s}(\wko,\thetako)$ is non-decreasing, i.e.  $\widehat{R}_{s}(\wko,\thetako) \geq \widehat{R}_{s}(\wk,\thetak)$ for all $\iota >0$, and the converged point is a locally optimal solution for problems. $\blacksquare$
		\section{Inequalities} \label{AppA}
		We begin by applying the inequality presented in (48) of \cite{TTN16}. Specifically, for any matrices $\mathbf{A}$ and $\pmb{\digamma}$, where $\hat{\Lambda}$ and $\hat{\digamma}$ are fixed points, the expression in (\ref{fund1}) holds.
		
		Next, we utilize Lemma 2 from \cite{niu2022joint}. In particular, for any $z_i$ where $i = 1, \dots, l$, and $\bar{z}_i$ is a fixed point, the inequality in (\ref{fund2}) is satisfied.
		
		Furthermore, the concave logarithmic function can be expressed as \cite[Eq. 15]{niu2022joint}: \setcounter{equation}{81}
		\begin{align} \label{fund3} 
			-\ln(1 + \pmb{\Upsilon}) \geq -\ln(1 + \bar{\Upsilon}) - \frac{1 + \pmb{\Upsilon}}{1 + \bar{\Upsilon}} + 1.
		\end{align}
		Additionally, the following inequality holds due to the concavity of the function $\sqrt{x}$ \cite{abughalwa2022finite}:
		\begin{equation}\label{xy}
			\sqrt{x} \leq \frac{\sqrt{\bar{x}}}{2} \left(1 + \frac{x}{\bar{x}}\right), ~ \forall x > 0, , \bar{x} > 0.
		\end{equation}
		Finally, for any vector $\mathbf{A} \in \mathbb{C}^n$ and $\bar{\mathbf{A}} \in \mathbb{C}^n$, along with scalars $B > 0, , \bar{B} > 0, , \sigma > 0$, the following inequality holds \cite{abughalwa2022finite}:
		\begin{align} \label{xt}
			\frac{|\mathbf{A}|^2}{B + \sigma} \geq \frac{|\bar{\mathbf{A}}|^2}{\bar{B} + \sigma} \left(2 \frac{\Re{\bar{\mathbf{A}}^H \mathbf{A}}}{|\bar{\mathbf{A}}|^2} - \frac{B + \sigma}{\bar{B} + \sigma} \right).
		\end{align}
	\end{appendices}
	\bibliographystyle{IEEEtran}
	\bibliography{surface}
\end{document}